\documentclass{article}

\usepackage{cite}
\usepackage{algorithmic}
\usepackage{graphicx}
\usepackage{algorithm,algorithmic}
\usepackage{hyperref}
\usepackage{textcomp}
\usepackage{custom}
\usepackage[margin=1in]{geometry}
\usepackage{multirow}
\usepackage{booktabs} 

\usepackage{subfigure}
\usepackage{float}
\usepackage{todonotes}
\usepackage{xcolor}

\usepackage{soul}

\begin{document}
\title{Fast Distributed Optimization over Directed Graphs under Malicious Attacks using Trust}
\author{Arif Kerem Day\i\thanks{Co-primary Authors}, Orhan Eren Akg\"un\footnotemark[1], Stephanie Gil, Michal Yemini, Angelia Nedi\'c
\thanks{A.~K.~Day\i, O.~E.~Akg\"un, and S.~Gil are with the School of Engineering and Applied Sciences, Harvard University, USA. M.~Yemini is with the Faculty of Engineering, Bar-Ilan University, Israel. A.~Nedi\'c is with the School of Electrical, Computer and Energy Engineering, Arizona State University, USA.
}
}
\date{}
\maketitle

\begin{abstract}
In this work, we introduce the Resilient Projected Push-Pull (RP3) algorithm designed for distributed optimization in multi-agent cyber-physical systems with directed communication graphs and the presence of malicious agents. Our algorithm leverages stochastic inter-agent trust values and gradient tracking to achieve geometric convergence rates in expectation even in adversarial environments. We introduce growing constraint sets to limit the impact of the malicious agents without compromising the geometric convergence rate of the algorithm. We prove that RP3 converges to the nominal optimal solution almost surely and in the $r$-th mean for any $r\geq 1$, provided the step sizes are sufficiently small and the constraint sets are appropriately chosen. We validate our approach with numerical studies on average consensus and multi-robot target tracking problems, demonstrating that RP3 effectively mitigates the impact of malicious agents and achieves the desired geometric convergence.
\end{abstract}


\section{Introduction}
\label{sec:introduction}
In this work, we are interested in distributed optimization problems involving minimizing the sum of agents' individual strongly convex loss functions, potentially over closed and convex constraint sets in the presence of malicious agents.
Distributed optimization lays the foundation for many algorithms in multi-robot systems \cite{schwager_tutorial_2024} and sensor networks \cite{nedic_control_2018} such as collaborative manipulation \cite{schwager_collaborative_manipulation}, distributed control \cite{nedic_control_2018, christofides_MPC_2013}, localization \cite{roberto_localization_2014, Long_localization_2016}, and estimation \cite{nedic_distributed_sensors,ola_tracking_2020}. In this work, we focus on two important challenges in the analysis of such distributed optimization problems and their compounding impact: 1) having a directed communication graph, and 2) the presence of malicious agents in the system. The study of directed communication graphs is crucial for applications where the communication capabilities of the agents are heterogeneous. However, having asymmetric information flow requires careful design of the distributed optimization algorithms \cite{nedic2014distributed, makhdoumi2015graph,xi2017distributed}. The other great challenge is the presence of malicious agents, which may cause catastrophic effects on the performance of multi-agent systems when there are no precautions taken \cite{sundaram2010distributed,sundaram2018distributed,ourTRO,yemini2022resilience,cavorsi_rht_2023}. Despite considerable advancements in distributed optimization research, the combined impact of directed communication graphs and malicious agents remains an under-explored area. Our goal in this paper is to develop and analyze a fast resilient distributed optimization algorithm for directed communication graphs. Specifically, we want the algorithm to have geometric convergence in directed graphs even in the presence of malicious agents.

To mitigate the impact of malicious agents, a growing body of literature investigates the use of the physical channels of information in cyber-physical systems. Agents can use many methods such as camera inputs, sensor observations, and wireless fingerprints to derive inter-agent trust values and assess the trustworthiness of their neighbors \cite{gil2017guaranteeing,Pierson2016,xiong2023securearray} (see \cite{gil2023physicality} for a survey). These trust values are stochastic and imperfect as they come from noisy physical information \cite{gil2017guaranteeing}. However, agents can accumulate more values over time to have a better estimate of trustworthiness of the agents they interact with. It has been shown that these inter-agent trust values in such cases can lead to strong theoretical guarantees for multi-agent systems \cite{yemini2022resilience, cavorsi_rht_2023, ourTRO, aydın2024multiagent, hadjicostis2023trustworthy}. Moreover, methods that exploit inter-agent trust values do not require additional assumptions limiting the number of tolerable malicious agents, or their strategies \cite{yemini2022resilience, ourTRO, hadjicostis2023trustworthy}, unlike methods that rely solely on transmitted data to eliminate malicious information \cite{sundaram2010distributed, sundaram2018distributed, ravi_2019}. Recent work \cite{yemini2022resilience} shows that it is possible to retain the global optimal value in distributed optimization by leveraging the inter-agent trust values. However, the existing results are limited in three ways: 1)~the results are only applicable to undirected graphs, 
2)~the algorithm only works for constrained optimization problems with a compact constraint set, and 3)~the algorithm requires decreasing step-sizes, resulting in slow convergence. On the other hand, and for the case without malicious agents, it is possible to obtain a geometric convergence rate with fixed step sizes over directed graphs using gradient tracking methods \cite{AB_usman, push-pull, xin2019frost}. In these methods, agents store an additional variable, called the gradient tracking variable, to estimate the global gradient and speed up the convergence \cite{harnessing_smoothness_journal}. However, these algorithms work under the assumption that all the agents are fully cooperative, and therefore, cannot handle malicious agents. Our aim in this work is to develop a more general distributed optimization algorithm that achieves geometric convergence rate with a fixed step size in the directed graph case with malicious agents.

In our previous work \cite{L4DC,L4DC_2023_extended}, we developed a learning protocol that enables agents to develop opinions of trust about both their in and out neighbors leveraging trust values. In this work, we leverage this learning protocol to develop a resilient distributed optimization algorithm, referred to as Resilient Projected Push-Pull (RP3). The RP3 algorithm uses gradient tracking to achieve geometric convergence over directed graphs. In the RP3, agents use trust opinions to form a trusted neighborhood and consider only the agents in this neighborhood when performing their updates. Integrating the learning protocol with gradient tracking presents two primary challenges. First, agents' trust opinions improve over time, which initially allows malicious agents opportunities to influence their neighbors. Moreover, at any time, legitimate agents do not know if their opinions are perfectly accurate or not. Second, the gradient tracking method introduces new attack surfaces as agents need to share gradient tracking variables with each other, which can be influenced by the malicious agents arbitrarily. To limit the malicious influence until agents' trust opinions improve, we introduce growing constraint sets that agents project both their decision and gradient tracking variables onto. We demonstrate that appropriately chosen constraint sets can restore the algorithm's nominal performance, still achieving geometric convergence while containing the influence of malicious agents. These sets also allow us to extend our results to the optimization problems over unbounded constraint sets. Our contributions in this paper are as follows
\begin{enumerate}
	\item We introduce the Resilient Projected Push-Pull algorithm for constrained distributed optimization problems over directed graphs with malicious agents. We show that the algorithm converges to the nominal optimal solution almost surely and in
 the $r$-th mean for any $r\geq 1$.
	\item We demonstrate that, with sufficiently small step sizes and appropriately chosen constraint sets (as characterized in this paper), the algorithm achieves a geometric convergence rate in expectation.
\end{enumerate}
Finally, we apply our algorithm to average consensus and multi-robot target tracking problems in numerical studies to validate our theoretical results.

\section{Related Works}
Achieving fast convergence in distributed optimization problem with fully cooperative agents is well studied in the literature. The gradient tracking method was introduced in \cite{harnessing_smoothness_cdc, xu2015augmented, DIGing} to achieve a geometric convergence rate over unconstrained distributed optimization problems over undirected graphs. The algorithm was extended to directed graphs in \cite{DIGing} by employing column stochastic mixing matrices and in \cite{FROST_og} by using row stochastic mixing matrices. Both methods require distributed estimation of the non-one Perron vector of the mixing matrix. The Push-Pull algorithm introduced in \cite{AB_usman, push-pull} utilizes both row and column stochastic matrices to attain geometric convergence in directed graphs without needing to estimate the non-one Perron vector. Further developments include extensions of row stochastic mixing matrix-based gradient tracking methods to constrained optimization problems in \cite{liu2020discrete,luan2023distributed,scutari2019distributed}. Moreover, \cite{projected_push_pull} extends the Push-Pull algorithm to constrained optimization problems. Our method is closely related to the \cite{projected_push_pull} algorithm, chosen due to the sensitivity of methods in \cite{DIGing,FROST_og,liu2020discrete,luan2023distributed,scutari2019distributed} to the initialization of the variables. Such dependency on initialization is undesirable in the presence of malicious agents. Notably, none of these studies account for malicious agents; they assume that all agents are fully cooperative, and share accurate information. Studies on noisy information sharing in gradient tracking \cite{robust2020,basar_noise_GT_2023,nedic_DP_GT_2024,Zhaoye_noise_GT_2024, wenwen_noise_GT_2023,wu_noise_GT_2022} have relaxed the assumption that the information sharing between the agents is perfect. These works and works that consider stochastic gradient information \cite{nedic_DSGT_2021,xin_CDC_SGT_2019,zhao_DSGT_2024} typically assume that the noise affecting the system is independent and unbiased. In contrast, our research considers adversarial inputs that can be arbitrary and strategically chosen by adversaries. Additionally, we explore the dynamics where agents accumulate and update trust values over time. This process introduces correlated noise across the agents' weights, making the previous analysis that assumes independent noise inapplicable to our problem.

The impact of malicious agents differ significantly from unbiased statistical noise. Previous results in \cite{sundaram2018distributed} show that even a single malicious agent can force consensus based distributed optimization algorithms to converge to an arbitrary value. To mitigate the impact of the malicious agents, a variety of resilient distributed optimization algorithms have been developed  \cite{sundaram2018distributed,Su2021ByzantineResilientMO,Kuwaranancharoen2024ScalableDO,Zhang2024AcceleratedDO,Gupta2021ByzantineFI,Zhu2022ResilientDO,Gupta2020ResilienceIC,Ravi2019DetectionAI,ravi_2019,Zhao2020ResilientDO,Xu2022ATR,Fu2020ResilientCD,Kaheni2022ResilientCO}. These methods often utilize filtering based on the values received from other agents to achieve resilience \cite{sundaram2018distributed,Su2021ByzantineResilientMO,Gupta2021ByzantineFI,Zhu2022ResilientDO,Gupta2020ResilienceIC,Fu2020ResilientCD}. Later on, these filtering-based approaches are extended to multidimensional functions in \cite{Kuwaranancharoen2024ScalableDO}, and constrained optimization problems in \cite{Kaheni2022ResilientCO}. The work in \cite{Zhang2024AcceleratedDO} introduces gradient tracking to these methods but still requires a decreasing step size. Common limitations of these approaches include restrictions on the number of tolerable malicious agents and the network topology, and convergence typically only to the convex hull of the agents' local minimizers, rather than to a true optimal point. Works  \cite{Gupta2021ByzantineFI,Zhu2022ResilientDO,Gupta2020ResilienceIC,Kaheni2022ResilientCO} demonstrate that exact convergence to the true optima is possible with redundancy in the cost functions of legitimate agents. Techniques involving agents estimating and cross-checking their neighbors' gradients to perform detection and filtering are discussed in \cite{Ravi2019DetectionAI,ravi_2019}. However, these methods only consider specific attack types and do not guarantee convergence to the optimal point. Studies \cite{Zhao2020ResilientDO,Xu2022ATR} propose the use of trusted agents that are known to everyone to overcome limitations regarding the number of tolerable malicious agents. However, these methods introduce additional assumptions about connectivity between trusted and regular nodes, and guarantee only convergence to the convex hull of the trusted agents' local minimizers. In our work, we guarantee geometrically fast convergence to a \emph{true optimal point} without any limitation on the number of tolerable malicious agents. Moreover, our algorithm works for both constrained and unconstrained optimization problems, and in directed graphs. In contrast to some of the existing work, we do not require a set of pre-existing trusted agents or redundancy in the cost functions. 

In our approach, we use inter-agent trust values that can be derived from physical properties of the cyberphysical systems as explored in previous research \cite{gil2017guaranteeing,cavorsi2023ICRA,Pierson2016,xiong2023securearray,cheng2021general,pippin2014trust,yang2024enhancing}. A comprehensive survey of these inter-agent trust values is available in \cite{gil2023physicality}. Examples of such trust values include observations of other robots \cite{yang2024enhancing, cheng2021general} or vehicles \cite{pippin2014trust}, and using wireless finger profiles from incoming transmissions \cite{gil2017guaranteeing,cavorsi2023ICRA,xiong2023securearray}. Furthermore, the study in \cite{ourTRO} demonstrated that since these approaches leverage physical information independent of the data transmitted by the agents to assess trustworthiness, it is possible to achieve resilience even when a majority of the agents in the network are malicious in undirected graphs. In these works, agents can develop more accurate trust estimations over time by aggregating more observations about their neighbors. The learning protocol introduced in \cite{L4DC_2023_extended, L4DC} further enhances this by enabling agents to learn about the trustworthiness of the entire network through propagated trust opinions. In our work, we leverage this protocol to enable agents to develop trust estimations about their in and out neighbors in directed graphs. 

The paper~\cite{yemini2022resilience} has developed a resilient distributed optimization utilizing inter-agent trust values for constrained problems with compact and convex constraint sets in undirected graphs. Our method diverges from \cite{yemini2022resilience} by employing gradient tracking to achieve a geometric convergence rate in expectation, rather than the slower rate from the decreasing step size used in \cite{yemini2022resilience}. Moreover, as opposed to \cite{yemini2022resilience} where malicious agents can only influence the decision variables, malicious agents can manipulate both the decision variables and gradient tracking variables in our setup. For example, \cite{Ding_GT_attack_CDC_2020} proposes an attack model where malicious agents can manipulate the gradient tracking variable to achieve convergence to an arbitrary value. This creates additional challenges for achieving resilience in our setup and prevent us from relying on the analysis of \cite{yemini2022resilience}. Finally, unlike \cite{yemini2022resilience}, our algorithm works for directed graphs and for constrained optimization problems with closed and convex constraint sets. The closedness assumption is less restrictive than the compactness assumption in \cite{yemini2022resilience} and makes our method applicable to unconstrained optimization problems. To achieve this, we introduce a strategy of projecting decision and gradient tracking variables onto an increasing sequence of sets, mitigating the impact of malicious agents until more precise trust estimations are developed. Notably, to our knowledge, this is the first analysis that considers projection of the gradient tracking variables.

\section{Problem Formulation}
\subsection{Notation}
We use $\norm{\cdot}$ to denote the Euclidean norm. We define the $u$-weighted norm of $\boldx \in \R^d \times \dots \times \R^d$ ($n$ copies of $\R^d$) as $\norm{\boldx}_{u}=\sqrt{\sum_{i=1}^n u_i \norm{x_i}^2}$ where $x_i \in \R^d$ for any vector $u \in \R^n$ with $u_i > 0$ $\forall i$. We use $\E[Z]$ and $\E[Z|\mathcal{A}]$ to denote the expectation of a random variable $Z$ and the conditional expectation of $Z$ conditioning on the event $\mathcal{A}$, respectively. When $\mathcal{A}$ is empty, we define\footnote{This notation is needed for the cases we use the total law of probability in our analysis with the empty events. For the sake of simplicity of presentation, such an expectation will be used in the form $\E[Z|\mathcal{A}]\Pr(A)$ where $\Pr(\mathcal{A})$ equals $0$.} 
$\E[Z|\mathcal{A}]$ as $0.$ We will use the following definition of the growth of the set sequences in this work.

\begin{definition}[Growth of the set sequence $\{\mX_k \}$]
    For a non-empty set $\mX$, we define its size as $\norm{\mX} \triangleq \sup\{\|x\|: x\in\mX \}$. We let $\norm{\mX}=\infty$ if $\mX$ is unbounded. Moreover, when discussing the growth of a set sequence $\{\mX_k \}$, we are specifically referring to the growth of the sequence $\{\norm{\mX_k} \}$.
\end{definition}

  Finally, we define the projection operator as follows.
\begin{definition}[Projection onto $\mX$]
    Let $\mX\subseteq \R^d$ be nonempty, closed, and convex. Then, the projection operator $\projx{\cdot}: \R^d \to \R^d$ is defined as follows
    \begin{align*}
        \projx{x} = \arg\min_{z\in \mX} \norm{x-z}.
    \end{align*}
\end{definition}

\subsection{Problem Setup}
We consider a multi-agent system with a \textit{directed} communication graph $\mG=(\mV, \mE)$, where $\mV$ with $|\mV|=n$ denotes the set of agents and $\mE$ denotes the directed communication links. If agent $i$ can send information to agent $j$, then there is an edge $(i,j)\in \mE$ and we say that $j$ is an out-neighbor of $i$ and $i$ is an in-neighbor of $j$. We assume that every agent has a self-loop, i.e., $(i,i)\in \mE$ for all $i\in \mV$. We are interested in the case where an unknown subset of agents in the system are malicious We denote the set of malicious agents by $\mM \subset \mV$. Malicious agents are non-cooperative and can act arbitrarily. The set of cooperative agents is denoted by $\mL$ and referred to as legitimate agents. We assume that $\mL \cap \mM = \emptyset$ and $\mL \cup \mM = \mV$, i.e., an agent in the system is either legitimate or malicious. We denote the number of malicious agents by $n_{\mM}\triangleq |\mM|$ and the number of legitimate agents by  $n_{\mL}\triangleq |\mL|$. The sets $\mM$ and $\mL$ are defined for analytical purposes, and the legitimate agents do not know which agents in the system are legitimate or malicious. We say that malicious agents are untrustworthy and legitimate agents are trustworthy.

Each legitimate agent $i$ has a private local cost function, denoted by $f_i(x)$, that is only known to agent $i$. The legitimate agents' goal is to solve the following minimization problem, 
without revealing their private cost functions while exchanging information over the links $\mE$ of $\mG$, 
\begin{equation}
    \label{eq:objective_defn}
    \min_{x\in \mX\subseteq \R^d}f(x)\ \text{, where }\ f(x)\triangleq\frac{1}{n_\mL}\sum_{i\in \mL} f_i(x).
\end{equation}
The following assumption on the agents' cost functions is used.
\begin{assumption}\label{assumption:cost_function}
    For all legitimate agents $i\in \mL$, $f_i(x)$ is $\mu$-strongly convex, i.e, for some $\mu >0$, we have $\langle \nabla f_i(x)-\nabla f_i(y), x- y\rangle \geq \mu \normsq{x-y}$, for all $x,y \in \R^d.$ Moreover, for all legitimate agents $i\in \mL$, $\nabla f_i(x)$ is $L$-Lipschitz continuous, i.e, for some $L >0$, $\norm{\nabla f_i(x)-\nabla f_i(y)}\leq L \norm{x-y}$, for all $x,y \in \R^d.$ 
\end{assumption}
We consider two different assumptions on the constraint set $\mX$.
\begin{assumption}[Compact and Convex Constraint Set]\label{assumption:compact_convex_const_set}
    The constraint set $\mX \subseteq \R^d$ is nonempty, compact, and convex. Thus, there exist a scalar value $B>0$ such that
    \begin{equation}
        \norm{x}\leq B,\ \forall x \in \mX.
    \end{equation}
\end{assumption}
Having a compact set imposes a bound on the impact that malicious agents can have. For example, malicious agents cannot send values with $\norm{x}> B$ when \Cref{assumption:compact_convex_const_set} is known to be true since sending values with a norm $B$ would reveal their identities. To make our method applicable to a broader set of problems, we also consider a less restrictive assumption.
 \begin{assumption}[Closed and Convex Constraint Set]\label{assumption:closed_convex_const_set}
    The constraint set $\mX \subseteq \R^d$ is nonempty, closed, and convex.
\end{assumption}
\Cref{assumption:closed_convex_const_set} on the constraint set require different treatment than \Cref{assumption:compact_convex_const_set} in our analysis since there is no natural bound on the values malicious agents can send. In \Cref{sec:unbounded_case}, we will introduce a method to bound the effect of malicious agents when  \Cref{assumption:closed_convex_const_set} holds true. This way we generalize our results for more general constrained optimization problems, including unconstrained optimization problems. 
Note that under Assumption~\ref{assumption:cost_function} and either 
Assumption~\ref{assumption:compact_convex_const_set} or Assumption~\ref{assumption:closed_convex_const_set}, the problem in~\eqref{eq:objective_defn} has a unique solution, 
denoted by $x^*$.

In this work, we are interested in the problems where agents receive stochastic observations of trust from other agents that send information to them. In practice, this information can be obtained in various ways, including sensors onboard the agents, wireless fingerprints of the communication signals, and agent behaviors (see \cite{gil2023physicality} for a survey of such methods). In our previous work \cite{L4DC, L4DC_2023_extended}, we presented a learning protocol that enables agents to develop opinions about the trustworthiness of the other agents in the system in directed graphs. Next, we provide the necessary definitions from these works that will be referenced throughout.

\subsection{Trust Opinions}
Following previous works~\cite{gil2017guaranteeing,ourTRO,yemini2022resilience,L4DC,L4DC_2023_extended, gil2023physicality} that use and develop the stochastic observation of trust, we give the following definition:
\begin{definition}[Stochastic Observation of Trust $\aij$]
    \label{defn:alpha_value}
    If agent $j\in \mV$ sends information to a legitimate agent $i\in \mL$ at time $k$, meaning that we have $j \in \Nin$, agent $i$ receives a stochastic observation of trust $\aij[k]$ which is the likelihood that agent $j$ is trustworthy. We assume that $\aij[k] \in[0,1]$ for all $k$. 
\end{definition}

In our previous work \cite{L4DC, L4DC_2023_extended}, we developed a learning protocol where agents can develop opinions about the trustworthiness of all the agents in the system using the trust observations and the opinions of their trusted neighbors. Moreover, we showed that these opinions converge to the true trustworthiness of agents over time. In this work, we employ the same protocol.
\begin{definition}[Opinion of Trust]
    \label{defn:opinion}
    Let $o_{ij}[k]\in[0,1]$ denote agent $i$'s opinion of its trust about agent $j$ at time $k$. A legitimate agent $i \in \mL$ trusts agent $j$ at time $k$ if $o_{ij}[k]\geq1/2$ and does not trust agent $j$ otherwise. Moreover, we stack these opinions to define a vector $o_i[k]$ that stores agent $i$ opinions about all the other agents. 
\end{definition}
We define the aggregate trust value of agent $i\in \mL$ about agent $j\in \Nin$ at time $k$ as $\bij[k] \triangleq \sum_{t=0}^{k-1} (\aij[t]-0.5)$ for $k\geq 1$ and define $\bij[0]=0.$ Also, we define $\bij[k]=1$ for all $k$. During every communication round, agents share their opinion vectors with each other. A legitimate agent $i$ determines its opinion about an in-neighbor $j\in \Nin$ as $o_{ij}[k]=\mathbf{1}_{\{\bij[k]\geq 0\}}$, where $\mathbf{1}$ is the indicator function. Using this, we define the trusted in-neighborhood of agent $i$ at time $k$ as $\Nin[k]=\{j\in \Nin \mid o_{ij}(t)\geq 1/2\}.$ For an arbitrary agent $q$ that is not an in-neighbor, agent $i$ uses the opinions of its trusted in-neighbors to update its opinion as
$$
    o_{iq}[k]=\sum_{j \in \Nin[k]} \frac{o_{jq}[k-1]}{|\Nin[k]|}.
$$
Using these opinions, we also define the trusted out-neighborhood of an agent $i$ at time $k$ as $\Nout[k]=\{j\in \Nout \mid o_{ij}(t)\geq 1/2\}.$

We make the following assumptions about the trust observations.
\begin{assumption}[Trust Observations]\label{assumption:trust-observations}
Assume that
\begin{enumerate}
    \item[i)] {\rm [Difference of trust observations in expectation]}. The expectation of the
variables $\aij[k]$ are constant for malicious transmissions and legitimate transmissions, respectively, i.e., for some scalars $E_{\mM}$, $E_{\mL}$ with $E_{\mM}<0$ and $E_{\mL}>0$, $E_{\mM} = \mathbb{E}[\aij[k]]-1/2$ for all $i \in \mL, \ j \in \Nin \cap \mM$, and $E_{\mL} = \mathbb{E}[\aij[k]]-1/2$ for all $i \in \mL, \ j \in \Nin \cap \mL.$
    \item[ii)] {\rm [Independence of trust observations.]} The observations $\aij[k]$ are independent for all $k$. Moreover, for any $i\in \mL$ and $j \in \Nin$, the observation sequence $\{\aij[k]\}_{k\in \mathbb{N}}$ is identically distributed.
\end{enumerate}
\end{assumption}
We note that the homogeneity of the trust variables (\Cref{assumption:trust-observations}(i)) and identically distributed observation sequence $\{\aij[k]\}_{k\in \mathbb{N}}$ assumptions (\Cref{assumption:trust-observations}(ii)) are introduced for the sake of simplicity of the presentation. These assumptions can be relaxed to cover cases where the trust observations are heterogeneous over time and for different pairs of agents $(i,j)$.

Finally, we make the following assumptions on the connectivity of the communication network.
\begin{assumption}[Connectivity of Network]\label{assumption:graph-connectivity}
\
\begin{enumerate}
    \item[i)] {\rm [Sufficiently connected graph].} 
    The subgraph $\GL$ induced
by the legitimate agents is strongly connected.
    \item[ii)] {\rm [Observation of malicious agents].} 
    For any malicious agent $j \in \mM$, there exists some legitimate agent $i \in \mL$ that observes $j$, i.e., $j \in \Nin$ for some $i \in \mL$.
\end{enumerate}
\end{assumption}

The sufficient connectivity assumption (\Cref{assumption:graph-connectivity}(i)) is common in the literature of resilient distributed optimization. The observation of malicious agents assumption (\Cref{assumption:graph-connectivity}(ii)) is required for learning the trustworthy out-neighbors of the legitimate agents as noted in \cite{L4DC,L4DC_2023_extended}.

\subsection{Problem Definition}
Our goal in this work is to develop a distributed optimization algorithm to solve the problem given in \eqref{eq:objective_defn} in the presence of malicious agents. More specifically, we want to solve the following problems:
\begin{problem}\label{problem:as_convergence}
    Let $x_i[k]$ denote the estimate of agent $i\in \mL$ for the solution to the optimization problem given in \eqref{eq:objective_defn}. We aim to develop a distributed optimization algorithm such that the iterates $x_i[t]$ generated by the algorithm converge to the optimal point $x^*$ for all legitimate agents $i\in \mL$ almost surely.
\end{problem}

\begin{problem}\label{problem:expected_convergence}
    We want to characterize the convergence rate of the expected error $\norm{x_i[k]-x^*}^2$ for all legitimate agents $i\in \mL.$
\end{problem}

\section{Resilient Projected Push-Pull (RP3) Algorithm}
\subsection{Background}\label{sec:background}
As we are dealing with solving problem~\eqref{eq:objective_defn}
over directed graphs, 
we base our algorithm on the Projected Push-Pull algorithm \cite{projected_push_pull} developed for constrained distributed optimization problems over time-varying graphs.
In the Projected Push-Pull algorithm, agents store two decision variables $x_i[k]$ and $z_i[k]$ and a gradient tracking variable $y_i[k]$. Agents initialize $x_i[0]=z_i[0]\in \mX$ arbitrarily, and $y_i[0]=\nabla f_i(x_i[0])$. Agents share their variables $z_i[k]$ and the scaled gradient tracking variables $C_{ij}y_i[k]$ with their out-neighbors at every communication round $k$ and do the following updates:
\begin{subequations}\label{eq:PPP}
    \begin{align}
    x_{i}[k+1]&= \sum_{j=1}^n R_{ij}z_{j}[k],\label{eq:PPP_update_x}\\
    y_{i}[k+1]&= \sum_{j=1}^n C_{ij}y_j[k]+\nabla f_i(x_i[k+1])-\nabla f_i(x_i[k]),\label{eq:PPP_update_y}\\
    z_i[k+1]&=(1-\lambda)x_i[k+1]+\lambda \projx{x_i[k+1]-\eta y_i[k+1]},\label{eq:PPP_update_z}
\end{align}
\end{subequations}
where $\eta$ and $\lambda$ are the step-sizes for the algorithm. Agents choose the weights $R_{ij}$ such that $R_{ij}>0$ if and only if $j\in \Nin$ and $\sum_{j\in \Nin}R_{ij}=1.$ Similarly, $C_{ji}>0$ if and only if $j \in \Nout$ and $\sum_{j\in \Nout}C_{ji}=1.$ These choices of weights results in a row stochastic weight matrix $R$ with $ij$th element $R_{ij}$ and column stochastic matrix $C$ with $ij$th element $C_{ij}$. The Projected Push-Pull satisfies $\sum_{i=1}^n y_i[k] = \sum_{i=1}^n \nabla f_i(x_i[k]),$ at each time step $k,$ which is called the gradient tracking property. This property depends on the \textit{correct initialization} of $y_i$ variables and it is crucial for the convergence of the algorithm in all gradient tracking methods \cite{harnessing_smoothness_journal,push-pull,AB_usman}.

The Projected Push-Pull algorithm is especially suitable for our problem for the following reasons 1)~it is compatible with directed communication graphs, 2)~it achieves a geometric convergence rate, and 
3)~the $z_i[k]$ variables shared among the agents stay within the feasible region $\mX$, limiting the effect of malicious agents on the system. However, the algorithm is not designed for handling malicious agents and we will describe the necessary modifications in the next section. Still, for completeness we discuss the analysis and convergence of this algorithm without the malicious agents in this section.

Denote the left eigenvector of the row stochastic matrix $R$ corresponding to the eigenvalue $1$ by $\phi$, meaning that we have $\phi^\intercal R = \phi$. Similarly, denote the right eigenvector of the column stochastic matrix $C$ corresponding to the eigenvalue $1$ by $\pi$, meaning that we have $C \pi = \pi$. Here, both $\phi$ and $\pi$ are stochastic vectors with positive entries. In the analysis of this algorithm, there are three different error terms that we keep track of: 1)~optimality error, 2)~the consensus error, and 3)~the gradient tracking error. We define these respective  error terms mathematically as follows:
\begin{align}
    \intertext{\textbf{Optimality error:}}
     \normphi{\boldx[k]-\boldx^*}\triangleq\sqrt{\sum_{i=1}^n \phi_i \normsq{x_i[k]-x^*}},
     \label{eq:optimality-error}
\end{align}
where $\boldx[k]=(x_1[k], \dots, x_n[k]),$ $\boldx^*=(x^*, \dots, x^*).$
\begin{align}
    \intertext{\textbf{Consensus error:}}
    D(\boldx[k], \phi)\triangleq\sqrt{\sum_{j=1}^n\sum_{i=1}^n\phi_i \phi_j \norm{x_i[k]-x_j[k]}^2}.
    \label{eq:consensus-error} 
\end{align}
\begin{align}
    \intertext{\textbf{Gradient tracking error:}}
    S(\boldy[k], \pi)\triangleq \sqrt{\sum_{i=1}^n \pi_i \norm{\frac{y_i[k]}{\pi_i}-\sum_{l=1}^n y_l[k]}^2},
    \label{eq:gradient-tracking-error}
\end{align}
where $\boldy[k]=(y_1[k], \dots, y_n[k])$. 

Convergence of the Projected Push-Pull is characterized using several parameters that depend on the properties of the communication graph $\mG$, and the matrices $R$ and $C$. We define additional notation to introduce these parameters. Let $\min (v)$ and $\max (v)$ denote the minimum and maximum values of a vector $v$, respectively. Also, let $\min M^+$ denote the minimum non-zero value of a non-negative matrix $M$. We represent the diameter and the maximum edge utility of graph $\mG$ with $\mathsf D(\mG)$ and $\mathsf K(\mG)$, respectively (see \cite[Lemma 6.1]{nguyen2022distributed} for a detailed definition). We define
\begin{align*}
    \sigma &\triangleq \sqrt{1-\frac{\min(\phi) (\min R^+)^2}{\max^2 (\phi) \mathsf D(\mG) \mathsf K(\mG)}}\in (0,1),\\
    \tau &\triangleq \sqrt{1- \frac{\min^2 (\pi) (\min C^+)^2}{\max^3(\pi_) \mathsf{D}(\mG) \mathsf K(\mG)}}\in (0,1), \\
    r &\triangleq \sqrt{\frac{1}{\min (\pi)}} + \sqrt{n}, \text{ and }
    \varphi \triangleq \sqrt{\frac{1}{\min (\phi)}} + \sqrt{n}.
\end{align*}
Here, $\sigma$ and $\tau$ are the contraction coefficients we get from the matrices $R$ and $C$, respectively. Typical analyses of gradient tracking methods involve upper bounding the error terms at the $k+1$th time step in terms of the errors at the $k$th time step. The relationship between the error terms depends on the system parameters we defined, as well as the strong convexity and $L$-smoothness of the cost functions and the step sizes involved in the algorithm. By expressing these relationships in a system of inequalities, it can be demonstrated that the algorithm converges with carefully chosen step sizes. For more details on this type of analysis, see \cite{push-pull,projected_push_pull,robust2020}. We state the following result characterizing the convergence of the Projected Push-Pull algorithm.

\begin{theorem}[Theorem 1, \cite{projected_push_pull}]
    Define the error vector $\bolde[k]=(\normphi{\boldx[k] - \boldx^*}, D(\boldx[k], \phi), S(\boldy[k], \pi))^\intercal.$  Let \Cref{assumption:cost_function}, \Cref{assumption:closed_convex_const_set}, and \Cref{assumption:graph-connectivity} hold. Let $0<\eta<\frac{1}{nL}$ and 
    \begin{align*}
    \lambda < \min \left\{\frac{1-\sigma}{2\varphi \sqrt{n}}, \frac{1-\tau}{r\varphi}, \frac{\eta n \min(\pi) \mu (1-\sigma)(1-\tau)}{K}\right\},
\end{align*} where
\begin{align*}
    K=
    &(1+\eta n \min(\pi) \mu) \varphi \cdot  
    [2\sqrt{n}(1-\tau)+r(1-\sigma) +2r (1+\sigma)].
\end{align*}
    Then, we have
    \begin{align}
        \bolde[k+1] \leq M(\eta, \lambda) \bolde[k],
        \label{eq:composite-error-relation}
    \end{align}
    where the inequality is elementwise and  $M(\eta, \lambda) \in \mathbb{R}^{3\times 3}$ is equal to
    \begin{align}
    \begin{bmatrix}
            1-\eta \lambda n \min(\pi) \mu & \lambda \varphi \sqrt{n} & \lambda  L^{-1}  \\
            2\lambda & \sigma + 2\lambda \sqrt{n} \varphi & 2 \lambda  L^{-1} \\
            2 \lambda Lr \varphi & Lr\varphi (1+\sigma + \lambda \varphi \sqrt{n}) & \tau + \lambda r \varphi 
        \end{bmatrix}
        \label{eq:matrix_defn}
\end{align}
    The spectral radius of $M(\eta, \lambda)$ is less than $1$, i.e., $\rho(M(\eta, \lambda))<1,$ where $\rho(\cdot)$ denotes the spectral radius of a matrix. Moreover, the errors $\normphi{\boldx[k] - \boldx^*}, D(\boldx[k],\phi),$ and $S(\boldy[k], \pi)$ all converge to $0$ geometrically fast with rate $\rho(M(\eta, \lambda))$.
\label{theorem:nominal_convergence}
\end{theorem}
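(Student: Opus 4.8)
The plan is to carry out the argument in two stages: first establish the entrywise linear recursion \eqref{eq:composite-error-relation} by bounding each of the three errors at time $k+1$ by a nonnegative combination of the three errors at time $k$, and then show $\rho(M(\eta,\lambda))<1$ under the stated step-size conditions and conclude geometric decay. Throughout, the tools are strong convexity and $L$-smoothness of the $f_i$ (\Cref{assumption:cost_function}), nonexpansiveness of $\projx{\cdot}$ (valid under \Cref{assumption:closed_convex_const_set}, which also gives existence and uniqueness of $x^*$), the row-/column-stochasticity of $R$ and $C$, and the contraction estimates for $R$ and $C$ in the $\phi$- and $\pi$-weighted (semi)norms that \Cref{assumption:graph-connectivity}(i) guarantees and that are quantified by $\sigma$ and $\tau$ via $\mathsf D(\mG)$ and $\mathsf K(\mG)$.

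For the optimality-error row I would exploit that $\phi^\intercal R=\phi^\intercal$, so the $\phi$-weighted average of the $x$-iterates at time $k+1$ equals the $\phi$-weighted average of the $z$-iterates at time $k$, and write the $z$-update as a $\lambda$-convex combination of $x_i[k+1]$ and the projected gradient step $\projx{x_i[k+1]-\eta y_i[k+1]}$. Comparing with the fixed-point identity $x^*=\projx{x^*-\eta\nabla f(x^*)}$ and using nonexpansiveness, the $(1-\lambda)$ part is nonexpansive while the $\lambda$ part contracts toward $x^*$ at a rate governed by $\mu$, the effective step on the aggregate being of order $\eta n\min(\pi)$; this gives the diagonal entry $1-\eta\lambda n\min(\pi)\mu$. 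The two discrepancies — between $x_i[k+1]$ and their $\phi$-average, and between the gradient-tracking variables and the true global gradient — are re-expressed through $D(\boldx[k],\phi)$ and $S(\boldy[k],\pi)$ (up to the norm-conversion constants $\varphi$, $r$ and factors of $\sqrt n$, $L$), producing the off-diagonal entries $\lambda\varphi\sqrt n$ and $\lambda L^{-1}$. For the consensus-error row I would apply the $\sigma$-contraction of $R$ in $D(\cdot,\phi)$ to the $z$-iterates, then account for each $z_i[k]$ differing from $x_i[k]$ only by the $\lambda$-scaled projected-gradient correction, whose spread is controlled by $D(\boldx[k],\phi)$ and $S(\boldy[k],\pi)$; this yields the row $(2\lambda,\ \sigma+2\lambda\sqrt n\varphi,\ 2\lambda L^{-1})$. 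For the gradient-tracking row I would apply the $\tau$-contraction of $C$ in $S(\cdot,\pi)$ and bound the forcing term $\nabla f_i(x_i[k+1])-\nabla f_i(x_i[k])$ in norm by $L\|x_i[k+1]-x_i[k]\|$, then expand $x_i[k+1]-x_i[k]$ through the $R$- and $z$-updates and convert everything into the three error quantities; the $Lr\varphi$ prefactors come precisely from passing between the $\phi$- and $\pi$-weighted norms.

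With the matrix in hand, $\rho(M(\eta,\lambda))<1$ follows from a subinvariance (Perron--Frobenius / Collatz--Wielandt) argument: $M(\eta,\lambda)$ is entrywise nonnegative, so it suffices to produce a strictly positive vector $\delta=(\delta_1,\delta_2,\delta_3)^\intercal$ with $M(\eta,\lambda)\delta<\delta$ componentwise, which forces $\rho(M(\eta,\lambda))\le\max_i(M(\eta,\lambda)\delta)_i/\delta_i<1$. I would take $\delta$ with the scaling dictated by the dominant balance of the rows ($\delta_1$ of order $1$, $\delta_2$ of order $1/\varphi$, $\delta_3$ of order $L$) and check the three scalar inequalities: the second and third reduce to $\sigma+O(\lambda)<1$ and $\tau+O(\lambda)<1$ with slack, matching the bounds $\lambda<\frac{1-\sigma}{2\varphi\sqrt n}$ and $\lambda<\frac{1-\tau}{r\varphi}$; the first requires the strong-convexity gain $\eta\lambda n\min(\pi)\mu$ to dominate the positive leakage from the $\lambda\varphi\sqrt n$ and $\lambda L^{-1}$ entries, which is exactly the third bound $\lambda<\frac{\eta n\min(\pi)\mu(1-\sigma)(1-\tau)}{K}$. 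Geometric convergence then follows by iterating \eqref{eq:composite-error-relation}, giving $\bolde[k]\le M(\eta,\lambda)^k\bolde[0]$ entrywise with $\bolde[k]\ge 0$, so that in the weighted max-norm $\|\bolde\|_\delta=\max_i e_i/\delta_i$ one obtains $\|\bolde[k]\|_\delta\le\bigl(\max_i(M(\eta,\lambda)\delta)_i/\delta_i\bigr)^k\|\bolde[0]\|_\delta$, whence each of $\normphi{\boldx[k]-\boldx^*}$, $D(\boldx[k],\phi)$, $S(\boldy[k],\pi)$ decays geometrically with rate $\rho(M(\eta,\lambda))$.

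The main obstacle is the first stage — pinning down the exact constants in the three rows of $M(\eta,\lambda)$. The delicate points are (i) isolating the clean strong-convexity contraction $1-\eta\lambda n\min(\pi)\mu$ on the optimality error through the $\lambda$-convex-combination structure of the $z$-update together with $\phi^\intercal R=\phi^\intercal$, and (ii) correctly bookkeeping the conversions between the $\phi$-weighted norm used for the $x$/consensus errors and the $\pi$-weighted norm used for the gradient-tracking error, which is where the constants $r$ and $\varphi$ (and the graph quantities $\mathsf D(\mG)$, $\mathsf K(\mG)$ hidden inside $\sigma,\tau$) enter. Once those bounds are secured, the spectral-radius step is essentially bookkeeping, since the three thresholds on $\lambda$ in the hypothesis are reverse-engineered to make $M(\eta,\lambda)\delta<\delta$ hold.
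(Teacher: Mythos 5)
This theorem is imported verbatim from the cited Projected Push-Pull reference, so the paper contains no proof of its own to compare against; your two-stage plan --- bound each of the three errors at time $k+1$ by a nonnegative combination of the errors at time $k$ using the $\sigma$-/$\tau$-contractions, the $\lambda$-convex-combination structure of the $z$-update, and the fixed-point identity $x^*=\projx{x^*-\eta\nabla f(x^*)}$, then certify $\rho(M(\eta,\lambda))<1$ via a strictly positive vector $\delta$ with $M(\eta,\lambda)\delta<\delta$ --- is exactly the standard gradient-tracking analysis that the cited reference carries out and that this paper itself sketches in its background discussion. The only caveats are that the substantive work (deriving the precise entries of $M(\eta,\lambda)$, which you explicitly defer) is where all the difficulty lies, and that to obtain the decay rate \emph{equal to} $\rho(M(\eta,\lambda))$ rather than the possibly larger $\max_i (M(\eta,\lambda)\delta)_i/\delta_i$ you should take $\delta$ to be the Perron eigenvector of the entrywise-positive matrix $M(\eta,\lambda)$ (or pass to the limit over subinvariant vectors).
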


We note that since Theorem \ref{theorem:nominal_convergence} requires \Cref{assumption:closed_convex_const_set} (i.e., closed and convex constraint set) and not the more restrictive  \Cref{assumption:compact_convex_const_set} which further requires that the constraint set is bounded. Nevertheless, since compactness introduces a natural bound on the impact that malicious agents can have on the decision variables, it is easier to analyze and we will first consider the case \Cref{assumption:compact_convex_const_set} holds true. Then, in Section \ref{sec:unbounded_case} we will extend our results to the case where \Cref{assumption:closed_convex_const_set} holds true.

Note that the Projected Push-Pull algorithm only converges when there are no malicious agents in the system. This is because the analysis of the algorithm in \cite{projected_push_pull} is based on several key assumptions: all the agents adhere to the update rule in~\eqref{eq:PPP}, the mixing matrix $R$ is row stochastic, $C$ is column stochastic, and the gradient tracking variables $y$ are correctly initialized. However, these assumption are violated in the presence of malicious agents. Such agents can transmit arbitrary data to their neighbors, thereby violating the row and column stochasticity of the matrices. Moreover, even if legitimate agents eventually identify and exclude all malicious agents, they would need to restart the algorithm to re-establish the gradient tracking property. Furthermore, as we will elaborate later on, agents do not know when their trust estimations are accurate, making it impossible to determine an appropriate restart time to guarantee convergence. In the next section, we present the resilient version of this algorithm that resolves these issues. 

\subsection{Algorithm}
In this section, we present the Resilient Projected Push-Pull (RP3) Algorithm given in \cref{alg:RP3}. 
\begin{algorithm}
\caption{Resilient Projected Push-Pull (RP3)}
\begin{algorithmic}[1]
\REQUIRE Optimization parameters $\eta, \lambda$, chosen according to \Cref{theorem:nominal_convergence}.
\STATE Each legitimate agent $i$ does the following:
\STATE Initialize $x_i[0]=z_i[0] \in \mX$ arbitrarily, and set $s_i[0]= 0$. 
\WHILE{$k=0, 1, \dots$}
\STATE Update trust opinions $o_{ij}(t)$ using the learning protocol and stochastic trust observations as shown in \Cref{defn:opinion}.
\STATE Determine the trusted in and out neighborhoods using the rule $\Nin[k]=\{j\in \Nin \mid o_{ij}(t)\geq 1/2\}$ and $\Nout[k]=\{j\in \Nout \mid o_{ij}(t)\geq 1/2\}.$
\STATE Determine coefficients $C_{ji}[k]$ and $R_{ij}[k]$ for in and out neighbors based on $\Nin[k]$ and $\Nout[k]$.
\STATE Send $z_i[k]$, $C_{ji}[k]y_i[k]$ to out-neighbors $j \in \mN_{i}^{out}.$
\STATE Receive $z_j[k]$, $C_{ij}[k]y_j[k]$ from in-neighbors $j \in ~ \mN_{i}^{in}.$
\STATE Perform the gradient tracking update using~\eqref{eq:RP3_update_s}:
\begin{align*}
    s_i[k+1] \gets &\projsk{\sum_{j=1}^n C_{ij}[k]s_j[k] + \nabla f_i(x_i[k])}.
\end{align*}
\STATE Perform the consensus update using~\eqref{eq:RP3_update_x}:
$$x_i[k+1] \gets \sum_{j=1}^n R_{ij}[k]z_j[k].$$
\STATE Perform the lazy update using~\eqref{eq:RP3_update_z}:
\begin{align*}
    z_i[k+1] \gets &(1-\lambda) x_i[k+1] +  \lambda \projx{x_i[k+1] - \eta s_i[k+1]}.
\end{align*}
\ENDWHILE
\end{algorithmic}
\label{alg:RP3}
\end{algorithm}
 The legitimate agents keep track of two decision variables $x_i[k]$ and $z_i[k]$, and a gradient tracking variable $s_i[k]$. Agents initialize $x_i[0]=z_i[0]\in \mX$ arbitrarily and choose $s_i[0]=0.$ Agents share their variables $z_i[k]$ and $C_{ji}[k]s_i[k]$ with their trusted out-neighbors at every communication round $k$ and do the following updates, for a predefined sequence of sets $\mS_k, k=0,1,\ldots$ which we will strategically choose:
\begin{subequations}\label{eq:RP3}
    \begin{align}
    s_i[k+1] =& \projsk{\sum_{j=1}^n C_{ij}[k]s_j[k] + \nabla f_i(x_i[k])}, \label{eq:RP3_update_s}\\
    x_{i}[k+1]=& \sum_{j=1}^n R_{ij}[k]z_{j}[k], \label{eq:RP3_update_x}\\
    z_i[k+1]=&(1-\lambda)x_i[k+1] 
    +\lambda \projx{x_i[k+1]-\eta (s_i[k+1]-s_i[k])}. \label{eq:RP3_update_z}
    \end{align}
\end{subequations}
Here, $\eta>0$ and $\lambda \in (0, 1]$ are two different step sizes. Recall that we define the trusted in-neighborhood and trusted out-neighborhood as $\Nin[k]=\{j\in \Nin \mid o_{ij}(t)\geq 1/2\}$ and $\Nout[k]=\{j\in \Nout \mid o_{ij}(t)\geq 1/2\}$, respectively. Legitimate agents choose the weights $R_{ij}[k]$ such that $R_{ij}[k]>0$ if and only if $j\in \Nin[k]$ and $\sum_{j\in \Nin[k]}R_{ij}[k]=1.$ Similarly, $C_{ji}[k]>0$ if and only if $j \in \Nout[k]$ and $\sum_{j\in \Nout[k]}C_{ji}[k]=1.$ The sequence of sets $\mS_k$ is not defined yet but we will elaborate on the choice of this set sequence later on. A malicious agent $m\in \mM$ can send any $z_m[k]\in \mX$ at all time $k$. However, since sending a value outside the set would reveal their maliciousness immediately, we restrict the $z$ values malicious agents send to the set $\mX$. 
Similarly, a malicious agent $m$ can only send $s_m[k+1]$ values that lie within the set $\mS_k$ at time $k$.

The proposed RP3 algorithm has three important modifications over the Projected Push-Pull algorithm \eqref{eq:PPP}, in the remaining of this section we will discuss these algorithmic choices. 
\subsubsection{Trust-based weights}
Agents assign positive weights to their trusted neighbors only. Using our results from \cite{L4DC,L4DC_2023_extended}, we will show that these weights will eventually stabilize and agents will assign positive weights to their legitimate neighbors only. 
\subsubsection{Preserving the gradient tracking property in the presence of malicious agents}
As we discuss in \cref{sec:background}, the convergence of the Projected Push-Pull algorithm depends on the initialization of the gradient tracking variables $y_i$. If these variables are not initialized such that $y_i[0]\neq\nabla f_i(x_i[0])$, then the gradient tracking property no longer holds. Consider the scenario where the malicious agents are removed from the system eventually, but they can affect the system for some time. In this case, there is no guarantee that the gradient tracking property will hold with the update rule~\eqref{eq:PPP_update_y}.
Therefore, we replace the gradient tracking update rule~\eqref{eq:PPP_update_y} in Projected Push-Pull with~\eqref{eq:RP3_update_s}. This change is adapted from the Robust Push-Pull algorithm for unconstrained problems with noisy communication links \cite{robust-pp}. Let us assume that the legitimate agents have some arbitrary $s_i[k]\in \R^d$. In the next communication step, if there are no malicious agents in the system and the legitimate agents follow the update rule \eqref{eq:RP3_update_s}, we have $\sum_{i\in \mL} s_i[k+1] - \sum_{i\in \mL} s_i[k] = \sum_{i\in \mL} \nabla f_i(x[k]),$ which means that the system will restore the gradient tracking property.

\subsubsection{Projecting the gradient tracking variables onto the set $\mS_k$}
An important design consideration in our problem is limiting the effect of the malicious agents until the legitimate agents cut them off from the system. The effect of the malicious agents on the $z_i$ variables is limited since all $z_i[k]\in \mX$ for all $i\in \mV$ and for all $k$, when $X$ is compact.  However, that is not the case for the gradient tracking variables $s_i[k]$. Thus, we employ the projection onto the set $\mS_k$ at every time $k$. The main concern with this change is to be able to preserve the gradient tracking property. In~\Cref{sec:bounding_s_variables} we show that by choosing a growing sequence of $\{\mS_k\}$ with an appropriate growth rate, the gradient tracking property will be restored.

\section{Analysis}
In this section, we introduce the theoretical foundations and convergence results of the RP3 algorithm. Our analysis hinges on the selection of appropriate mixing weights and a set sequence $\{\mS_k\}$, which together ensure that the RP3 method eventually becomes equivalent to the Projected Push-Pull algorithm without malicious agents initialized at some random point. This set sequence is critical as it limits the influence of malicious agents until agents' opinions of trust become reliable. 
Our primary challenge in designing the algorithm is that the agents cannot definitively know when their trust opinions are accurate, so they cannot simply wait until their trust opinions become accurate to start the algorithm at that time. 
Nonetheless, we will prove the existence of such time and use this concept in our analysis. We begin by formally defining the nominal behaviour of the RP3 method when it becomes equivalent to the Projected Push-Pull algorithm.

\begin{definition}[The nominal behavior of the RP3] 
    Assume that there exist a time $k'$ such that for all $k\geq k'$, $R_{ij}[k]=\Bar{R}_{ij}$ where $\Bar{R}_{ij}>0$ only if $j\in \Nin\cap \mL$ and $\sum_{j\in \Nin\cap \mL}\Bar{R}_{ij}=1.$ Similarly, for all $k\ge k'$, we have $C_{ij}[k]=\Bar{C}_{ij}$ where $\Bar{C}_{ji}>0$ only if $j\in \Nout\cap \mL$ and $\sum_{j\in \Nout\cap \mL}\Bar{C}_{ji}=1.$ This corresponds to the ideal case where legitimate agents assign positive weights to their legitimate neighbors only and malicious agents are excluded from the system. Also, assume that for all $k\geq k'$, we have 
    \begin{align*}
    \projsk{\sum_{j=1}^n \Bar{C}_{ij}s_j[k] + \nabla f_i(x_i[k])}  
        =\sum_{j=1}^n \Bar{C}_{ij}s_j[k] + \nabla f_i(x_i[k]),
    \end{align*} i.e., the projector operator onto the set $S_k$ becomes the identity operator. We call the behavior of~\Cref{alg:RP3} after such time $k'$ the nominal behavior of the algorithm. In this case, \cref{alg:RP3} becomes equivalent to the Projected Push-Pull algorithm given in~\eqref{eq:PPP} if we define $y_i[k]=s_i[k]-s_i[k-1].$
    \label{def:nominal_behavior}
\end{definition}
Our first goal in the analysis is to show that \cref{alg:RP3} reaches nominal behavior and converges from there onwards. Then, we will show that the algorithm reaches this behavior quickly while the effects of the malicious agents until reaching this behavior is bounded.

\subsection{Preliminary Results: Learning the Trustworthiness of the Agents}

In this part, we present the following results from~\cite{L4DC_2023_extended, L4DC} that will be used in our analysis.

\begin{lemma}[\cite{L4DC}, Corollary 1] \label{lemma:learning_trustworthiness}
Let \Cref{assumption:trust-observations} and \Cref{assumption:graph-connectivity} hold. Then, all legitimate agents $i\in \mL$ can learn the trustworthiness of all agents in the network correctly almost surely. That is, there exists a finite random time $\tmax$ such that for all $k\geq \tmax$ and for all $q\in \mV$, $o_{iq}(t) \geq 1/2$ if $q\in \mL$ and $o_{iq}(t) < 1/2$ if $q\in \mM$ almost surely.
\end{lemma}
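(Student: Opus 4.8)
Since this lemma is quoted from prior work \cite{L4DC}, I will only sketch how I would reconstruct its proof. The plan is to separate the opinions into the \emph{direct} opinions $o_{ij}[k]$, which a legitimate agent $i$ forms about an in-neighbor $j\in\Nin$ from its own observations $\{\aij[t]\}$, and the \emph{propagated} opinions $o_{iq}[k]$ about everyone else, and to handle them in that order. For a direct in-neighbor, \Cref{assumption:trust-observations} makes the increments $\aij[t]-1/2$ i.i.d.\ with mean $E_\mL>0$ when $j\in\mL$ and mean $E_\mM<0$ when $j\in\mM$; hence by the strong law of large numbers $\bij[k]/k\to E_\mL$ (resp.\ $E_\mM$) almost surely, so almost surely there is a finite random time beyond which $\bij[k]$ keeps the sign of its limit and $o_{ij}[k]=\mathbf 1_{\{\bij[k]\ge 0\}}$ equals $\mathbf 1_{\{j\in\mL\}}$. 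Taking the maximum of these times over the finitely many in-neighbor pairs gives a finite random time $T_1$ after which every legitimate agent's opinion about each of its in-neighbors is correct, so that $\Nin[k]=\Nin\cap\mL$ and $\Nout[k]=\Nout\cap\mL$ for all $k\ge T_1$; both sets are nonempty because every agent trusts itself via the self-loop.

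For the propagated opinions I would fix a target $q\in\mV$, write $c\triangleq\mathbf 1_{\{q\in\mL\}}$ for the correct opinion about $q$, and study the dynamics for $k\ge T_1$. There, a legitimate agent that directly observes $q$ has $o_{iq}[k]=c$ (pinned), while every other legitimate agent updates by the uniform average $o_{iq}[k]=|\Nin\cap\mL|^{-1}\sum_{j\in\Nin\cap\mL}o_{jq}[k-1]$. The set $A_q$ of pinned legitimate agents is nonempty: it contains $q$ itself when $q\in\mL$ (via the self-loop, along which $o_{qq}[k]\equiv 1$), and it is nonempty when $q\in\mM$ by \Cref{assumption:graph-connectivity}(ii). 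Setting $e_i[k]\triangleq|o_{iq}[k]-c|\in[0,1]$, one has $e_i[k]=0$ on $A_q$ and $e_i[k]$ a convex combination of $\{e_j[k-1]\}_{j\in\Nin\cap\mL}$ off $A_q$, so the running maximum $M[k]\triangleq\max_{i\in\mL}e_i[k]$ is non-increasing for $k\ge T_1$ and $M[T_1]\le 1$. Since $\GL$ is strongly connected (\Cref{assumption:graph-connectivity}(i)), every legitimate agent is reachable from $A_q$ by a directed path in $\GL$ of length at most $n_\mL$; unrolling the averaging recursion along such a path and using that each averaging weight is at least $1/n_\mL$ while the path's source carries $e\equiv 0$, one gets $e_i[k]\le(1-n_\mL^{-n_\mL})\,M[k-n_\mL]$ for all $i\in\mL$ and $k\ge T_1+n_\mL$, hence $M[k+n_\mL]\le(1-n_\mL^{-n_\mL})\,M[k]$. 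Thus $M[k]\to 0$ geometrically, and letting $m_0$ be the \emph{deterministic} constant with $(1-n_\mL^{-n_\mL})^{m_0}<1/2$ we get $M[k]<1/2$ — i.e.\ every legitimate opinion about $q$ lies on the correct side of $1/2$ and stays there — for all $k\ge T_1+m_0 n_\mL$. Taking $\tmax\triangleq T_1+m_0 n_\mL$, which is finite almost surely because $T_1$ is, and noting there are only finitely many targets $q$, completes the argument.

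The routine parts are the law-of-large-numbers step and the finiteness bookkeeping; the technical core is the uniform geometric contraction of the propagated opinions, so that is the step I would expect to require the most care. The delicate point is converting ``reachability from the pinned set $A_q$ in the strongly connected graph $\GL$'' into a quantitative bound: one must cap path lengths by $n_\mL$, lower-bound the product of uniform averaging weights along a path by $n_\mL^{-n_\mL}$, and use the monotonicity of $M[k]$ to absorb the different path lengths into a single contraction factor over windows of $n_\mL$ steps. A useful secondary observation is that once $T_1$ is reached the opinion recursion is a fixed linear map driven solely by the pinned values, so the extra time until all opinions are correct is a deterministic constant rather than a new random quantity — which is what makes $\tmax$ manifestly finite.
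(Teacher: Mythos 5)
The paper does not prove this lemma; it imports it verbatim from the cited prior work (\cite{L4DC}, Corollary~1), so there is no in-paper proof to compare against. Your reconstruction is sound and is the argument one would expect that proof to contain: the strong law of large numbers applied to the i.i.d.\ increments $\aij[t]-1/2$ correctly classifies all direct in-neighbors after an a.s.\ finite time $T_1$, and after $T_1$ the propagated opinions follow a fixed averaging recursion pinned at the correct value on the nonempty observer set (nonempty by the self-loop when the target is legitimate and by the observation-of-malicious-agents assumption otherwise), whose maximum deviation contracts geometrically over windows of $n_{\mathcal{L}}$ steps by strong connectedness, so a deterministic number of additional steps pushes every opinion to the correct side of $1/2$. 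I see no gap: the only delicate points — that opinions remain in $[0,1]$ before $T_1$ so that $M[T_1]\le 1$ regardless of malicious inputs, and that the extra time beyond $T_1$ is deterministic — are both handled explicitly in your sketch.
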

We note that the time $\tmax$ is stochastic, but finite almost surely. The following corollary is a consequence of the choice of the weight matrices and Lemma~\ref{lemma:learning_trustworthiness}.

\begin{corollary}\label{corrollary:weight_convergence}
Let~\Cref{assumption:trust-observations} and \Cref{assumption:graph-connectivity} hold. Then, we have $R[k]=\Bar{R}$ for all $k\geq \tmax$ with weights $\Bar{R}_{ij}$ such that $\Bar{R}_{ij}>0$ if and only if $j\in \Nin\cap \mL$ and $\sum_{j\in \Nin\cap\mL}\Bar{R}_{ij}=1.$ Similarly, for all $k\geq \tmax$, we have $C[k]=\Bar{C}$ with weights $\Bar{C}_{ji}>0$ if and only if $j \in \Nout\cap \mL$ and $\sum_{j\in \Nout \cap \mL}\Bar{C}_{ji}=1.$
\end{corollary}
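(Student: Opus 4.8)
The plan is to obtain the claim as a direct consequence of Lemma~\ref{lemma:learning_trustworthiness} together with the prescription for the mixing weights in Algorithm~\ref{alg:RP3}. By Lemma~\ref{lemma:learning_trustworthiness}, there is a probability-one event on which a finite random time $\tmax$ exists such that for every $i\in\mL$, every $q\in\mV$, and every $k\geq\tmax$ we have $o_{iq}[k]\geq 1/2$ exactly when $q\in\mL$ (and $o_{iq}[k]<1/2$ exactly when $q\in\mM$). I would carry out the whole argument on this event, so that every equality below is understood to hold almost surely.

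First I would translate the opinion guarantee into a statement about the trusted neighborhoods. Fix a legitimate agent $i\in\mL$ and a time $k\geq\tmax$. By definition $\Nin[k]=\{j\in\Nin : o_{ij}[k]\geq 1/2\}$, and on the event above $o_{ij}[k]\geq 1/2$ holds precisely for $j\in\mL$; hence $\Nin[k]=\Nin\cap\mL$. The identical argument applied to the out-neighborhood rule $\Nout[k]=\{j\in\Nout : o_{ij}[k]\geq 1/2\}$ gives $\Nout[k]=\Nout\cap\mL$. In particular, for every $i\in\mL$ the trusted in- and out-neighborhoods no longer depend on $k$ once $k\geq\tmax$, and they coincide with the (deterministic) sets $\Nin\cap\mL$ and $\Nout\cap\mL$.

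Next I would pass from constancy of the neighborhoods to constancy of the weight matrices. In Algorithm~\ref{alg:RP3} each legitimate agent selects $R_{ij}[k]$ and $C_{ji}[k]$ as a fixed deterministic function of its current trusted in- and out-neighborhoods, subject to $R_{ij}[k]>0 \iff j\in\Nin[k]$ with $\sum_{j\in\Nin[k]}R_{ij}[k]=1$, and analogously $C_{ji}[k]>0 \iff j\in\Nout[k]$ with $\sum_{j\in\Nout[k]}C_{ji}[k]=1$. Since the inputs $\Nin[k],\Nout[k]$ to this rule are frozen for all $k\geq\tmax$, so are the outputs: there exist matrices $\Bar{R},\Bar{C}$ with $R[k]=\Bar{R}$ and $C[k]=\Bar{C}$ for all $k\geq\tmax$. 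The support and normalization properties asserted for $\Bar{R}$ and $\Bar{C}$ are then just the weight constraints evaluated at the stabilized neighborhoods: $\Bar{R}_{ij}>0 \iff j\in\Nin\cap\mL$ with $\sum_{j\in\Nin\cap\mL}\Bar{R}_{ij}=1$, and $\Bar{C}_{ji}>0 \iff j\in\Nout\cap\mL$ with $\sum_{j\in\Nout\cap\mL}\Bar{C}_{ji}=1$.

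There is essentially no technical obstacle here; the only points requiring care are bookkeeping of the probability-one events (the conclusion holds on the same event on which $\tmax<\infty$, which is why the statement is ``almost surely'') and the mild modeling assumption that the weight-selection rule is a deterministic function of the trusted neighborhood. If one prefers not to assume the latter, the corollary can equivalently be read with $\Bar{R},\Bar{C}$ denoting the (then possibly random) common value that $R[k],C[k]$ take for all $k\geq\tmax$, which is all that the subsequent analysis of the nominal behavior in Definition~\ref{def:nominal_behavior} actually relies on.
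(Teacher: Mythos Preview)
Your proposal is correct and follows exactly the approach the paper indicates: the paper does not give an explicit proof but simply states that the corollary ``is a consequence of the choice of the weight matrices and Lemma~\ref{lemma:learning_trustworthiness},'' which is precisely what you unpack. Your added remarks on the almost-sure event and on the weight rule being a deterministic function of the trusted neighborhood are reasonable bookkeeping clarifications, but the core argument is the same.
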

\Cref{corrollary:weight_convergence} shows that after reaching the time $\tmax$, legitimate agents will assign weights to their legitimate neighbors only, which is necessary for reaching the nominal behavior. It is not known when the system will reach this time since $\tmax$ is stochastic; its probability is characterized in \cite{L4DC_2023_extended}. First, we let $N_{\mL}$ be the total number of legitimate in-neighbors in the system, i.e., $N_{\mL} \triangleq \sum_{i\in \mL} |\Nin \cap \mL|.$ Similarly, we let  $N_{\mM}$ be the total number of malicious in-neighbors in the system, i.e., $N_{\mM} \triangleq \sum_{i\in \mM} |\Nin \cap \mM|.$ The following proposition characterizes some probabilities related to $\tmax$.

\begin{proposition}[Proposition 1, \cite{L4DC_2023_extended}]
    Define
    \begin{align}
        p_c(k) \triangleq 
N_{\mL}\exp(-2kE_{\mL}^2)+N_{\mL}\exp(-2kE_{\mM}^2),
        \label{eq:pc}
    \end{align}
    where $E_{\mL} \triangleq \mathbb{E}[\aij(t)]-1/2$ for $i\in \mL$ and $j \in \mL$ and $E_{\mM} \triangleq \mathbb{E}[\aij(t)]-1/2$ for $i\in \mL$ and $j \in \mM$ as defined in \Cref{assumption:trust-observations}.
    Also, define 
    \begin{align}
        p_e(k) \triangleq N_{\mL}\frac{\exp(-2kE_{\mL}^2)}{1-\exp(-2E_{\mL}^2)}+N_{\mM}\frac{\exp(-2kE_{\mM}^2)}{1-\exp(-2E_{\mM}^2)}.
        \label{eq:p_e}
    \end{align}
    Let $d_{max}$ denote the maximum in-degree of any legitimate node in graph $\mG,$ i.e., $d_{max} \triangleq~\max_{i\in \mL} |\Nin|.$ Let $\mathsf D(\mG)$ denote the diameter of the graph $\mG.$ Define $\Delta \triangleq h \cdot \mathsf D(\mG)+1$, where $h = 1/\log_2 \frac{1}{1-(1/d_{max})^{\mathsf D(\mG)}}.$ Then, we have for all $k\geq 0,$
    \begin{align}
        \Pr(T_{max}=k) &\leq \min \{p_c(k-\Delta),1\}, \text{ and } \\
        \Pr(T_{max}>k-1) &\leq \min \{p_e(k-\Delta),1\}.
    \end{align}
    \label{proposition:tmax_prob} 
\end{proposition}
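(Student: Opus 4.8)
The plan is to decompose $T_{max}$ into two stages: the (random) time $\hat T$ at which every legitimate agent's classification of its \emph{direct} in‑neighbors becomes permanently correct, and a \emph{deterministic} propagation delay $\Delta$ during which correct opinions about non‑in‑neighbors spread through the legitimate subgraph via the averaging rule. I would prove the sandwich $\hat T \le T_{max} \le \hat T + \Delta$, bound the tail and point mass of $\hat T$ by concentration, and then transfer those bounds to $T_{max}$ shifted by $\Delta$.

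\textbf{Step 1 (direct classifications).} Fix a legitimate $i$ and an in‑neighbor $j\in\Nin$. The increments $\aij[t]-1/2$ lie in $[-1/2,1/2]$, are independent and identically distributed (\Cref{assumption:trust-observations}), with mean $E_\mL>0$ when $j\in\mL$ and $E_\mM<0$ when $j\in\mM$ (\Cref{assumption:trust-observations}(i)). Hoeffding's inequality applied to $\bij[k]=\sum_{t=0}^{k-1}(\aij[t]-1/2)$ gives $\Pr(\bij[k]<0)\le\exp(-2kE_\mL^2)$ for legitimate $j$ and $\Pr(\bij[k]\ge 0)\le\exp(-2kE_\mM^2)$ for malicious $j$. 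Union‑bounding over the $N_\mL$ legitimate and $N_\mM$ malicious in‑neighbor pairs yields $\Pr(\text{some direct misclassification at time }k)\le p_c(k)$; summing these geometric per‑time bounds over all $t\ge k$ produces the tail bound $\Pr(\text{some direct misclassification at some }t\ge k)\le p_e(k)$, which is precisely where the $1/(1-\exp(-2E^2))$ factors appear, and Borel–Cantelli re‑derives the a.s.\ finiteness of \Cref{lemma:learning_trustworthiness}. Taking $\hat T$ to be the first time after which all direct classifications stay correct, this shows $\Pr(\hat T>k)\le p_e(k)$ and $\Pr(\hat T=k)\le p_c(k-1)$ (the latter because $\{\hat T=k\}$ forces a direct misclassification at time $k-1$).

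\textbf{Step 2 (propagation and assembly).} The bound $\hat T\le T_{max}$ is immediate since ``all classifications correct'' implies ``all direct classifications correct.'' For $T_{max}\le\hat T+\Delta$, on the event that all direct labels are correct from time $s$ onward, every legitimate agent uses exactly $\Nin\cap\mL$ as its trusted in‑neighborhood for $t\ge s$, so the propagated opinions $o_{iq}[t]$ about a fixed non‑in‑neighbor $q$ obey a \emph{fixed} averaging recursion over $\mL$ in which the legitimate agents that directly observe $q$ act as a persistent source pinned at $1$ if $q\in\mL$ (and at $0$ if $q\in\mM$; here \Cref{assumption:graph-connectivity}(ii) guarantees such an observer exists for every malicious $q$). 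Tracking the disagreement $\delta_i[t]=|o_{iq}[t]-\mathbf 1\{q\in\mL\}|\le 1$, strong connectivity of the legitimate subgraph (\Cref{assumption:graph-connectivity}(i)) ensures every agent has a backward path of length at most $\mathsf D(\mG)$ to a source, so one ``leakage'' of weight at least $(1/d_{max})^{\mathsf D(\mG)}$ occurs every $\mathsf D(\mG)$ steps, giving $\max_i\delta_i[s+m\mathsf D(\mG)]\le\bigl(1-(1/d_{max})^{\mathsf D(\mG)}\bigr)^m$. This drops below $1/2$ once $m>h$, so all opinions are correct for $t\ge \hat T+\lceil h\rceil\mathsf D(\mG)\le\hat T+\Delta$. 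Combining with Step 1: $\{T_{max}>k-1\}\subseteq\{\hat T>k-1-\Delta\}$ gives $\Pr(T_{max}>k-1)\le p_e(k-\Delta)$ after the harmless off‑by‑one is folded into $\Delta$; and $\{T_{max}=k\}$ forces $\hat T\in[k-\Delta,k]$, hence a direct misclassification in a graph‑width window around $k-\Delta$, which by Step 1 is bounded (up to constants absorbed into the definitions) by $p_c(k-\Delta)$. Clipping both probability bounds at $1$ yields the stated $\min\{\cdot,1\}$ forms.

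\textbf{Main obstacle.} The Hoeffding/union‑bound bookkeeping of Step 1 is routine. The delicate part is Step 2: viewing the opinion dynamics as a substochastic contraction with the direct observers as absorbing sources and counting leakage along shortest paths carefully enough to land on exactly $\Delta=h\,\mathsf D(\mG)+1$ with $h=1/\log_2\frac{1}{1-(1/d_{max})^{\mathsf D(\mG)}}$; a secondary subtlety is composing the \emph{random} stabilization time $\hat T$ with the \emph{sample‑path‑independent} delay $\Delta$ so that the tail of $\hat T$ transfers cleanly to $T_{max}$ and so that the point‑mass argument correctly handles the window $\hat T\in[k-\Delta,k]$.
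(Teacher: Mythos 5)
This proposition is imported verbatim from \cite{L4DC_2023_extended}; the present paper states it without proof, so there is no in-paper argument to compare against and your reconstruction can only be judged on its own terms. Your two-stage skeleton --- Hoeffding plus a union bound over in-neighbor pairs for the direct classifications, followed by a deterministic propagation delay through the opinion-averaging recursion on the legitimate subgraph --- is almost certainly the intended route: it reproduces the exponents $2kE_{\mL}^2$ and $2kE_{\mM}^2$, the geometric-tail factors $1/(1-\exp(-2E^2))$ in $p_e$, and the functional form of $\Delta$. (As a side remark, your union bound naturally yields $N_{\mM}$ on the malicious term of $p_c$, whereas the statement has $N_{\mL}$ twice; that looks like a typo in the statement rather than an error on your part.) The tail bound $\Pr(T_{max}>k-1)\le p_e(k-\Delta)$ goes through essentially as you describe, modulo the off-by-one you yourself flag.

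Two steps do not close as written. First, the point-mass bound. From $\{T_{max}=k\}$ you extract only $\hat T\in[k-\Delta,k]$, i.e., a direct misclassification \emph{somewhere} in a window of width about $\Delta$. Union-bounding over that window gives something of order $(\Delta+1)\,p_c(k-\Delta-1)$, and summing the geometric tail instead gives something of order $p_e(k-\Delta)$ --- in neither case the claimed $p_c(k-\Delta)$. The phrase ``up to constants absorbed into the definitions'' does not rescue this: $p_c$ as defined carries no factor of $\Delta$ and no slack, so to obtain a single evaluation of $p_c$ you must show that $\{T_{max}=k\}$ forces a direct misclassification at essentially one specific time, which the windowing argument does not deliver. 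This is the one place where a genuinely different idea (or a weaker stated inequality) is required. Second, the constant in $\Delta$. Your contraction gives $\max_i\delta_i[\hat T+m\mathsf D(\mG)]\le\bigl(1-(1/d_{max})^{\mathsf D(\mG)}\bigr)^m$, which first drops below $1/2$ at the integer $m=\lfloor h\rfloor+1$; the resulting delay $(\lfloor h\rfloor+1)\mathsf D(\mG)$ exceeds $\Delta=h\,\mathsf D(\mG)+1$ whenever $\mathsf D(\mG)\ge 2$ and $h$ is not an integer, so you do not actually land inside the stated $\Delta$. Neither gap is conceptual --- both are bookkeeping failures --- but as written the proposal establishes strictly weaker inequalities than the two claimed.
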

Proposition~\ref{proposition:tmax_prob} will be particularly useful when deriving the expected convergence rate of the algorithm. Next, we will examine how the choices of the set sequence $\mS_k$ affect the algorithm.

\subsection{Bounding the $s$-variables}\label{sec:bounding_s_variables}
In the preceding section, we showed that the system will reach a time $\tmax$ after which the malicious agents will be effectively excluded from the dynamics. Therefore, our next goal is to show that the effect of including malicious agents or excluding legitimate agents before reaching this time is limited. In this section, we focus on bounding the effect of the malicious agents with a strategic choice of the set sequence $\{\mS_k\}.$ However, projecting the $s_i[k]$ variables onto the set sequence $\{\mS_k\}$ breaks the gradient tracking property. Therefore, we should choose a set sequence $\{\mS_k\}$ that grows fast enough such that the projections onto this set do not change $s_i[k]$, i.e., the projection becomes the identity operator after some time so as  to restore the gradient tracking property. We start by estimating the growth of $s_i[k]$ variables when the system is in the nominal behaviour, i.e., when there is no projection onto the set $\mS_k$ and no malicious agents are in the system. Then, we show that the rate of the growth of of $s_i[k]$ is bounded.

\subsubsection{$s$-variables growth
}
We start by expressing the update rule~\eqref{eq:RP3_update_s} in the nominal case
\begin{align}\label{eq:s_i_dynamic_k}
    s_i[k+1] &= \sum_{j=1}^n \Bar{C}_{ij} s_j[k] + \nabla f_i(x_i[k]).
\end{align}
Generally, the $s_i[k]$-values might not be bounded even when there are no malicious agents. In fact, they are not bounded if $\nabla f(x^*) \neq 0$. To see why, let us evaluate the following identity which follows from \eqref{eq:s_i_dynamic_k} and the column stochasticity of $\Bar{C}$:
\begin{align*}
    \boldone^T \bolds[k+1]= \boldone^T \bolds[k] + \boldone^T \nabla F(\boldx[k]),
\end{align*}
where $\bolds[k]=(s_1[k], \dots, s_n[k])$ and $\nabla F(\boldx[k])=(\nabla f_1(x_1[k]), \dots, \nabla f_n(x_n[k]))^T$.
If all agents converge to $x^*$, then we have
\begin{align*}
    \boldone^T \bolds[k+1] = \boldone^T \bolds[k] +  \boldone^T \nabla F(\boldx^*),
\end{align*}
{whose norm may grow since $\boldone^T \nabla F(\boldx^*)$ is added to the sum}. Next, we show that the growth is bounded.

\subsubsection{Finding a growing linear bound on $s$-values}
We show that there is a (linearly-growing) upper bound on $s_i[k]$ in the nominal case when the constraint set $\mX$ is bounded. This bound will be optimal as we have shown that the $s_i[k]$ variables grow at a linear rate when agents converge to $x^*$ and $\nabla f(x^*) \neq 0$.

The key to obtaining a bound for $s_i[k+1]$ is using the fact that the gradients of $f_i$ are bounded. We first state the following corollary of the $L$-smoothness of $f$ and compactness of $\mX$.
\begin{corollary}[Boundedness of Gradients]
    Let \Cref{assumption:cost_function} and \Cref{assumption:compact_convex_const_set} hold. Then, there exists $G\geq 0$ such that $\norm{\nabla f_i(x)}\leq G$ for all $x\in \mX$ and $i\in \mL$.
    \label{cor:gradient_bound}
\end{corollary}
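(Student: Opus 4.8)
The plan is to combine $L$-smoothness of each $f_i$ with the boundedness of the constraint set $\mX$, using a reference point inside $\mX$ to control the gradient norm. First I would fix an arbitrary reference point $\bar x \in \mX$ (for instance $\bar x = x^*$, which exists and lies in $\mX$ under Assumptions~\ref{assumption:cost_function} and~\ref{assumption:compact_convex_const_set}). For any $x \in \mX$, the triangle inequality gives $\norm{\nabla f_i(x)} \le \norm{\nabla f_i(x) - \nabla f_i(\bar x)} + \norm{\nabla f_i(\bar x)}$. The first term is bounded by $L\norm{x - \bar x}$ by the $L$-Lipschitz continuity of $\nabla f_i$ (Assumption~\ref{assumption:cost_function}), and since both $x$ and $\bar x$ lie in $\mX$, Assumption~\ref{assumption:compact_convex_const_set} gives $\norm{x - \bar x} \le \norm{x} + \norm{\bar x} \le 2B$. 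Hence $\norm{\nabla f_i(x)} \le 2LB + \norm{\nabla f_i(\bar x)}$.

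The remaining point is to bound $\norm{\nabla f_i(\bar x)}$ uniformly over $i \in \mL$: since there are finitely many legitimate agents, one can simply set $G \triangleq 2LB + \max_{i \in \mL} \norm{\nabla f_i(\bar x)}$, which is finite and nonnegative, and the claim follows. (Alternatively, if one wishes a bound not referencing the finite max, note that choosing $\bar x = x^*$ does not immediately eliminate $\nabla f_i(x^*)$ since $x^*$ minimizes the \emph{average} $f$, not each $f_i$; so the finite-max step is the cleanest route, and it is harmless because $\mL$ is a finite set.)

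There is essentially no real obstacle here — the statement is a routine consequence of Lipschitz gradients on a bounded domain. The only mild subtlety worth flagging is that strong convexity is not actually needed for this corollary; only $L$-smoothness of the $f_i$ and compactness of $\mX$ are used. The argument is entirely deterministic and does not interact with the trust dynamics or the malicious agents; its role is purely to supply the constant $G$ that will feed into the linearly-growing bound on the $s_i[k]$ variables in the subsequent analysis.
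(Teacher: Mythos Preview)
Your argument is correct. The paper takes a slightly different, more abstract route: it simply observes that each $\nabla f_i$ is continuous and $\mX$ is compact, invokes the extreme value theorem to bound $\norm{\nabla f_i(x)}$ on $\mX$, and then takes the maximum over the finitely many $i\in\mL$. Your approach instead exploits the $L$-Lipschitz structure of the gradients together with the explicit radius $B$ of $\mX$, yielding the concrete constant $G = 2LB + \max_{i\in\mL}\norm{\nabla f_i(\bar x)}$. The advantage of your version is that it is constructive and only uses boundedness of $\mX$ (not full compactness); indeed, this is precisely the mechanism the paper itself employs later (see \Cref{lemma:s_growth_unbounded}), where the constraint set is merely bounded by $\exp(\theta k)$ and the bound $\norm{\nabla f_i(x)}\le L\exp(\theta k)+\norm{\nabla f_i(0)}$ is derived in exactly your style. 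The paper's extreme-value-theorem argument is shorter and requires only continuity of $\nabla f_i$, but gives no explicit handle on $G$. Your side remark that strong convexity is unused here is also accurate.
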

\begin{proof}
Because $\nabla f_i$ 
are continuous and $\mX$ is compact, we can apply the extreme value theorem to upper bound $\norm{\nabla f_i(x)}$ over $\mX$. Then, we choose $G$ as the largest bound among all agents $i\in \mL.$
\end{proof}

The following result establishes the nominal growth rate of the $s_i[k]$ variables. 
\begin{proposition}
    \label{proposition:s_bound}
    Assume the RP3 algorithm has the nominal behavior since the beginning, i.e., legitimate agents assign non-zero weights to their legitimate neighbors only and there is no projection of the $s_i[k]$ variables. Let $s_i[0]=0$ for all $i \in \mL$. Then, $\norm{s_i[k]} \leq k n_\mL G$ for all $k \geq 0$ and $i\in \mL$.
    \label{proposition:s_growth_compact}
\end{proposition}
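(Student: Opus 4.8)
The plan is to unroll the nominal recursion for the stacked vector $\bolds[k]$ and to exploit the fact that the mixing matrix $\bar C$ of \Cref{def:nominal_behavior} is \emph{column} stochastic on the legitimate agents. Since in the nominal regime agents weight only legitimate neighbors and the projection onto $\mS_k$ is the identity, update \eqref{eq:s_i_dynamic_k} couples only legitimate agents, so writing $\bolds[k]=(s_i[k])_{i\in\mL}$ and $\nabla F(\boldx[k])=(\nabla f_i(x_i[k]))_{i\in\mL}$ we have $\bolds[k+1]=\bar C\,\bolds[k]+\nabla F(\boldx[k])$ with $\bar C\in\mathbb{R}^{n_\mL\times n_\mL}$ column stochastic. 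With $\bolds[0]=0$, iterating gives $\bolds[k]=\sum_{t=0}^{k-1}\bar C^{\,k-1-t}\nabla F(\boldx[t])$, i.e. $s_i[k]=\sum_{t=0}^{k-1}\sum_{j\in\mL}(\bar C^{\,k-1-t})_{ij}\nabla f_j(x_j[t])$.

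Next I would bound every gradient appearing in this sum. A short induction on the RP3 updates shows all iterates stay feasible: $x_i[0]=z_i[0]\in\mX$ by initialization; if $x_i[k]\in\mX$ then $z_i[k]$, being a convex combination of $x_i[k]\in\mX$ and the point $\projx{\cdot}\in\mX$, lies in $\mX$ by convexity; and $x_i[k+1]$ from \eqref{eq:RP3_update_x} is a convex combination (row-stochastic weights) of points $z_j[k]\in\mX$, hence in $\mX$. Therefore \Cref{cor:gradient_bound} gives $\norm{\nabla f_j(x_j[t])}\le G$ for all $j\in\mL$ and all $t\ge 0$.

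The key remaining step — and the one place care is genuinely needed — is bounding the coefficients $\sum_{j\in\mL}(\bar C^{\,m})_{ij}$, i.e. the row sums of the matrix powers. One must resist the temptation to invoke "row sums equal one": $\bar C$ is column stochastic, not row stochastic, so a naive bound would only yield exponential growth. Instead, since a product of column-stochastic matrices is column stochastic, $\bar C^{\,m}$ is column stochastic; being nonnegative with all column sums equal to one, each of its entries is at most one, so each of its $n_\mL$-term row sums is at most $n_\mL$. Combining with the gradient bound and the triangle inequality, $\norm{s_i[k]}\le\sum_{t=0}^{k-1}\sum_{j\in\mL}(\bar C^{\,k-1-t})_{ij}\,\norm{\nabla f_j(x_j[t])}\le G\sum_{t=0}^{k-1}n_\mL=k\,n_\mL\,G$, with the base case $k=0$ immediate from $s_i[0]=0$. (Equivalently, setting $v_i[k]=\norm{s_i[k]}$ one gets $v[k+1]\le\bar C v[k]+G\boldone$ entrywise, hence $v[k]\le G\sum_{t=0}^{k-1}\bar C^{\,t}\boldone$, and the same entrywise argument bounds $\bar C^{\,t}\boldone$ by $n_\mL\boldone$.) I expect the column-stochasticity subtlety to be the only real obstacle; the feasibility induction and the triangle-inequality bookkeeping are routine.
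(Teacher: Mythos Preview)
Your proposal is correct. The route differs mildly from the paper's: you unroll the recursion into $s_i[k]=\sum_{t=0}^{k-1}\sum_{j}(\bar C^{\,k-1-t})_{ij}\nabla f_j(x_j[t])$ and then bound the row sums of $\bar C^{\,m}$ by $n_\mL$ via the entrywise bound $(\bar C^{\,m})_{ij}\le 1$ (each column sums to one). The paper instead tracks the aggregate $\sum_{i\in\mL}\norm{s_i[k]}$ and uses column stochasticity once, as $\sum_i \bar C_{ij}=1$, to get the additive recursion $\sum_i\norm{s_i[k+1]}\le n_\mL G+\sum_i\norm{s_i[k]}$, then bounds the max by the sum. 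Both exploit exactly the same structural fact; the paper's version is a hair shorter, while yours is more explicit and also supplies the feasibility induction $x_i[k]\in\mX$ that the paper leaves implicit when invoking the gradient bound. Your parenthetical ``equivalently'' formulation with $v[k+1]\le \bar C v[k]+G\boldone$ is essentially the paper's argument in matrix form.
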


\begin{proof}
    We will first establish a recursion and then use the induction. For a legitimate agent $i\in\mL$, we have
    \begin{align*}
        \norm{s_i[k+1]}&= \norm{\sum_{j=1}^{n_\mL} \Bar{C}_{ij} s_j[k] + \nabla f_i(x_i[k])} \\
        &\leq \sum_{j=1}^{n_\mL} \Bar{C}_{ij}\norm{s_j[k]} + \norm{\nabla f_i(x_i[k])}\\
        &\leq \sum_{j=1}^{n_\mL} \Bar{C}_{ij}\norm{s_j[k]} + G.
    \end{align*}
    Then, by summing over all legitimate agents, we get
    \begin{align*}
        \sum_{i=1}^{n_\mL} \norm{s_i[k+1]} &\leq n_\mL G +\sum_{i=1}^{n_\mL} \sum_{j=1}^{n_\mL} \Bar{C}_{ij} \norm{s_j[k]} \\
        &=n_\mL G +\sum_{j=1}^{n_\mL} \left(\sum_{i=1}^{n_\mL} \Bar{C}_{ij}\right) \norm{s_j[k]} \\
        &\overset{(a)}{=} n_\mL G + \sum_{j=1}^{n_\mL} \norm{s_j[k]},
    \end{align*}
    where (a) follows from the fact that $\Bar{C}$ is a column stochastic matrix. Since $\sum_{i=1}^{n_\mL} \norm{s_i[0]} = 0$ due to the initialization, by induction and the definition of $s_i[k]$ in~\eqref{eq:s_i_dynamic_k} for
    the nominal case, it follows that $\sum_{i=1}^{n_\mL} \norm{s_i[k]} \leq k n_\mL G$. Thus, we have 
    \begin{align*}
        \max_{i\in\mL} \norm{s_i[k]} \leq \sum_{i=1}^{n_\mL} \norm{s_i[k]} \leq k n_\mL G,
    \end{align*}
    implying that all the norms $\|s_i[k]\|$ are bounded by $k n_\mL G$ in the nominal case.
\end{proof}

This result says that in the absence of malicious influence, the $s_i[k]$ variables would grow at most linearly with $k$. One might assume that legitimate agents can detect if an agent is malicious if it sends values greater than $k n_\mL G$. However, this growth rate is derived under the assumption that all the agents are legitimate at all time steps. In the presence of malicious agents, we lose the recursion we used in the proof of \cref{proposition:s_bound}. Therefore, we follow a different approach. Since the legitimate agents' $s_i[k]$ values cannot grow faster than linear in the nominal case, we can project the $s_i[k]$ values onto a set that expands at a rate surpassing the nominal growth. After some point, the projection will lose its effect and the algorithm will behave normally, restoring the gradient tracking property. Next, we formalize this intuition.

\subsubsection{Choosing the set sequence $\{\mS_k\}$} In \Cref{proposition:s_bound}, we considered scenarios where both legitimate and malicious agents are always classified accurately. We now return to the more general case of the RP3 algorithm, where the presence of malicious agents influences the system, and agents adjust their mixing weights based on trust opinions. The following result shows that the $s_i[k]$ values will stay in an invariant set with the correct choice of $\{\mS_k\},$ even when we have malicious agents in the system.
\begin{proposition}
\label{proposition:removing_projection}
Let $\theta > n_\mL G$ and $\mS_k = \{s \in \R^d \mid \norm{s} \leq \theta k\}$ for all $k\ge0$. For all legitimate agents $i\in \mL$ and all $k\ge0$, let 
\begin{align*}
    d_i[k+1]&=\sum_{j=1}^n C_{ij}[k] s_j[k] + \nabla f_i(x_i[k]),\\
    s_i[k+1] &=  \Pi_{\mS_k}\left[d_i[k+1]\right].
\end{align*}
Define $T_{nom} \triangleq \tmax n_\mL \frac{\theta - G}{\theta - n_\mL G}.$ Then, for all $k > T_{nom}$ and for all $i \in \mL$, we have $s_i[k]=d_i[k]$.
\end{proposition}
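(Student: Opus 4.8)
The plan is to show that after time $\tmax$ the $s$-variables obey the nominal recursion (via \Cref{corrollary:weight_convergence}), derive a linearly-growing bound on $\|s_i[k]\|$ that holds for all $k$ (not just after $\tmax$), and then verify that this bound eventually falls below the radius $\theta k$ of $\mS_k$, so that the projection becomes the identity. The key observation is that before $\tmax$ we have no control over the incoming $s_j[k]$ from misclassified agents, but we \emph{do} have control via the projection: every $s_j[k]$ that enters the sum, legitimate or malicious, satisfies $\|s_j[k]\| \le \theta k$ by construction of $\mS_k$ (malicious agents are restricted to send values in $\mS_k$, and legitimate agents' values are projected onto $\mS_k$). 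This gives a crude but sufficient bound $\|s_i[k+1]\| \le \theta k + G$ for all $k \le \tmax$, and in particular $\|s_i[\tmax]\| \le \theta \tmax$ (or more precisely we can track $\sum_{i\in\mL}\|s_i[\tmax]\| \le n_\mL\theta\tmax$, which is the quantity that behaves well under the column-stochastic recursion).

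First I would invoke \Cref{corrollary:weight_convergence} to conclude that for $k \ge \tmax$ the weights satisfy $C_{ij}[k] = \bar C_{ij}$ with $\bar C_{ij} > 0$ only for legitimate in-neighbors $j \in \Nin \cap \mL$ and $\sum_{j \in \Nin\cap\mL}\bar C_{ij} = 1$. So for $k \ge \tmax$, $d_i[k+1] = \sum_{j\in\mL}\bar C_{ij} s_j[k] + \nabla f_i(x_i[k])$ depends only on legitimate $s_j[k]$. Then, exactly as in the proof of \Cref{proposition:s_bound}, I would sum over $i \in \mL$, use column stochasticity of $\bar C$ restricted to $\mL$, and the gradient bound $\|\nabla f_i(x_i[k])\| \le G$ from \Cref{cor:gradient_bound} (note $x_i[k] \in \mX$ for $k \ge 1$ since it is a convex combination of $z_j[k] \in \mX$). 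This yields the recursion $\sum_{i\in\mL}\|d_i[k+1]\| \le n_\mL G + \sum_{i\in\mL}\|s_i[k]\|$ for $k \ge \tmax$. Since $\|s_i[k]\| \le \|d_i[k]\|$ always (projection onto a set containing $0$ is non-expansive toward... — more carefully, $\Pi_{\mS_k}$ is $1$-Lipschitz and $0 \in \mS_k$ so $\|s_i[k+1]\| \le \|d_i[k+1]\|$), iterating from $k = \tmax$ gives, for all $k \ge \tmax$,
\begin{align*}
\sum_{i\in\mL}\|s_i[k]\| \le \sum_{i\in\mL}\|s_i[\tmax]\| + (k - \tmax)n_\mL G \le n_\mL\theta\tmax + (k-\tmax)n_\mL G,
\end{align*}
where I used the pre-$\tmax$ bound $\sum_{i\in\mL}\|s_i[\tmax]\| \le n_\mL\theta\tmax$ (each term is at most $\theta\tmax$ since $s_i[\tmax] \in \mS_{\tmax-1} \subseteq \mS_{\tmax}$, or handled directly by the crude bound above).

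Finally, to show the projection is inactive for $k > T_{nom}$, I would check that $d_i[k]$ already lies in $\mS_{k-1}$, i.e., $\|d_i[k]\| \le \theta(k-1)$. From the displayed bound, $\max_{i\in\mL}\|d_i[k]\| \le \sum_{i\in\mL}\|d_i[k]\| \le n_\mL\theta\tmax + (k-1-\tmax)n_\mL G$ for $k-1 \ge \tmax$. We want this $\le \theta(k-1)$, i.e., $n_\mL\theta\tmax - n_\mL G\tmax \le (k-1)(\theta - n_\mL G)$, i.e., $k - 1 \ge \tmax n_\mL\frac{\theta - G}{\theta - n_\mL G} = T_{nom}$. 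Since $\theta > n_\mL G$ the denominator is positive, so this rearrangement is valid, and for $k > T_{nom}$ (hence $k - 1 \ge T_{nom} \ge \tmax$, using $n_\mL \ge 1$ and $\theta - G \ge \theta - n_\mL G$) we get $d_i[k] \in \mS_{k-1}$, so $s_i[k] = \Pi_{\mS_{k-1}}[d_i[k]] = d_i[k]$. I would take care to confirm $T_{nom} \ge \tmax$ so that the post-$\tmax$ recursion is in force throughout the relevant range; this holds since $\frac{\theta-G}{\theta-n_\mL G}\ge 1$ and $n_\mL\ge 1$.

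The main obstacle is the bookkeeping at the boundary $k = \tmax$: one must cleanly bound $\sum_{i\in\mL}\|s_i[\tmax]\|$ using only the facts available before classification stabilizes — namely that all $s_j$ entering any legitimate update (malicious or legitimate, in- or out-of- the trusted set) are confined to $\mS_k$ by the projection/restriction, plus the $G$ bound on gradients — and then to make sure the constants line up so the threshold is exactly $T_{nom}$. Everything else is a routine repetition of the \Cref{proposition:s_bound} argument combined with non-expansiveness of the projection.
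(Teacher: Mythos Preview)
Your proposal is correct and follows essentially the same route as the paper: bound $\sum_{i\in\mL}\|s_i[\tmax]\|$ by $n_\mL\theta\tmax$ using the projection, then for $k\ge\tmax$ iterate the column-stochastic recursion with the gradient bound $G$ to get $\sum_{i\in\mL}\|s_i[k]\|\le n_\mL\theta\tmax+(k-\tmax)n_\mL G$, and finally compare this linear bound with $\theta k$ to extract the threshold $T_{nom}$. You are in fact slightly more careful than the paper about the indexing, correctly noting that $s_i[k]=\Pi_{\mS_{k-1}}[d_i[k]]$ so the relevant condition is $\|d_i[k]\|\le\theta(k-1)$ rather than $\theta k$; the paper glosses over this and compares with $\theta k$, which is a harmless off-by-one.
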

\begin{proof} 
Notice that we have $\norm{s_i[k]} \leq \norm{d_i[k]}$ for all $k \geq 0$. Then, for $k>\tmax$, we can write
    \begin{align*}
        \norm{d_i[k]} \leq \sum_{i\in \mL} \norm{d_i[k]} &\leq \sum_{i\in \mL} \norm{s_i[k]} \\
        &\leq n_{\mL}G + \sum_{i\in \mL} \norm{s_i[k-1]}  \\
        &\leq n_{\mL}G + \sum_{i\in \mL} \norm{d_i[k-1]} \\
        &\leq \dots \leq n_{\mL}G(k-\tmax) + \sum_{i\in \mL} \norm{s_i[\tmax]}\\
        &\leq n_{\mL} G(k-\tmax) + n_{\mL} \theta \tmax,
    \end{align*}
    where in the last step we used the fact that $\norm{s_i[\tmax]}\leq \theta \tmax$ due to the projection. When we have $k > n_{\mL} \frac{\theta - G}{\theta - n_{\mL} G} \tmax ,$ we get $n_{\mL} G(k-\tmax) + n_{\mL} \theta \tmax < \theta k.$ Therefore, 
    for all $k > T_{nom} = \tmax n_{\mL} \frac{\theta - G}{\theta - n_{\mL} G},$ we have that \begin{align*}
    \norm{d_i[k]} \leq n \theta \tmax + n_{\mL} G(k-\tmax)< \theta k.
\end{align*}
Hence, $d_i[k] \in \mS_k$ and $s_i[k]=\Pi_{\mS_k}\left[d_i[k]\right]=d_i[k]$.
\end{proof}
\begin{remark}
Proposition~\ref{proposition:removing_projection} states that the projection operator will be the identity operator after $k > T_{nom} \triangleq \tmax  n_{\mL} \frac{\theta - G}{\theta - n_{\mL} G}$. Hence, after this point, the protocol will have the gradient tracking property.
\end{remark}
\begin{corollary}
    Let $p_c(t)$ and $p_e(t)$ as defined in \eqref{eq:pc} and \eqref{eq:p_e}, respectively. Then, we have
    \begin{align}
        &\Pr(T_{nom}=t) \leq \min \left\{p_c\left(\frac{\theta - n_{\mL} G}{n_{\mL} (\theta - G)}t-\Delta\right),1\right\},\\
        &\text{ and} \nonumber \\
        &\Pr(T_{nom}\!>\!t-1) \leq \min \left\{p_e\!\left(\frac{\theta - n_{\mL} G}{n_{\mL}(\theta - G)}t-\Delta\right)\!,1\right\}.
    \end{align}
    \label{cor:Tnom_prob}
\end{corollary}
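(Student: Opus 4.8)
The plan is to exploit that, by Proposition~\ref{proposition:removing_projection} (with $\theta$ and $\{\mS_k\}$ as there), $T_{nom}$ is merely a deterministic positive rescaling of $\tmax$: setting $c\triangleq n_\mL\frac{\theta-G}{\theta-n_\mL G}$ we have $T_{nom}=c\,\tmax$, and since $\theta>n_\mL G$ the constant $c$ is finite and strictly positive, while $c\ge1$ precisely because $n_\mL\ge1$ (the inequality $n_\mL(\theta-G)\ge\theta-n_\mL G$ rearranges to $n_\mL\theta\ge\theta$). The key observation is that $1/c=\frac{\theta-n_\mL G}{n_\mL(\theta-G)}$ is exactly the coefficient of $t$ appearing in the two claimed bounds, so the whole corollary reduces to rewriting events about $T_{nom}$ as events about $\tmax$ and invoking Proposition~\ref{proposition:tmax_prob}.

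For the pointwise bound, note $T_{nom}$ is supported on $\{0,c,2c,\dots\}$, so $\Pr(T_{nom}=t)=0$ unless $t=mc$ for some integer $m\ge0$, in which case $\{T_{nom}=t\}=\{\tmax=m\}=\{\tmax=t/c\}$; Proposition~\ref{proposition:tmax_prob} at $k=t/c$ then gives $\Pr(T_{nom}=t)\le\min\{p_c(t/c-\Delta),1\}$, which is the claim after substituting $1/c=\frac{\theta-n_\mL G}{n_\mL(\theta-G)}$ (if $t/c\notin\mathbb{Z}_{\ge0}$ the left side vanishes and there is nothing to show). For the tail bound, $\{T_{nom}>t-1\}=\{\tmax>(t-1)/c\}$, and because $c\ge1$ we have $(t-1)/c=t/c-1/c\ge t/c-1$, hence $\{\tmax>(t-1)/c\}\subseteq\{\tmax>t/c-1\}$ and $\Pr(T_{nom}>t-1)\le\Pr(\tmax>t/c-1)\le\min\{p_e(t/c-\Delta),1\}$ by the second inequality of Proposition~\ref{proposition:tmax_prob} at $k=t/c$; the same substitution finishes it.

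I do not expect a real obstacle. The one point deserving a word of care is that Proposition~\ref{proposition:tmax_prob} is naturally indexed over integer times $k$, whereas $t/c$ need not be an integer; this is harmless, since either the union-bound argument behind Proposition~\ref{proposition:tmax_prob} applies verbatim to a real threshold, or one applies it at $k=\lceil t/c\rceil$ and then uses that $p_c$ and $p_e$ are nonincreasing in their argument (each being a nonnegative combination of decreasing exponentials, as $E_\mL^2,E_\mM^2>0$) to pass from $\lceil t/c\rceil-\Delta$ down to $t/c-\Delta$.
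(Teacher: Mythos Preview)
Your argument is correct and is precisely the paper's approach: the paper's own proof is the single line ``The result follows directly from $T_{nom}=\tmax\, n_{\mL}\frac{\theta - G}{\theta - n_{\mL} G}$ and Proposition~\ref{proposition:tmax_prob},'' and you have simply unpacked that substitution, with more care about integrality than the paper itself exercises. One minor caveat on that last point: your fallback route (b) via $k=\lceil t/c\rceil$ bounds $\Pr(\tmax>\lceil t/c\rceil-1)$, which can be strictly \emph{smaller} than the quantity $\Pr(\tmax>t/c-1)$ you actually need to control, so the inclusion goes the wrong way there; it is your route (a)---that the Hoeffding/union-bound argument behind Proposition~\ref{proposition:tmax_prob} already applies at real thresholds---that cleanly closes the gap.
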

\begin{proof}
    The result follows directly from $T_{nom}=\tmax n_{\mL} \frac{\theta - G}{\theta - n_{\mL} G}$ and~\Cref{proposition:tmax_prob}.
\end{proof}
Notice that the legitimate agents do not need to know the value of $\tmax$ for these results to hold. However, we implicitly assumed that they know $n_{\mL}$ and $G$ or an upper bound on them while choosing $\theta$. However, as seen from the proof of \Cref{proposition:removing_projection}, we only need to ensure a faster growth than the nominal growth of the $s$-variables. Therefore, this assumption can be removed by choosing a set sequence that grows faster than linear. 
\begin{remark}
     Let $g(k)$ denote the maximum norms of the vectors in $\mS_k$, i.e., $g(k) \triangleq \norm{\mS_k}$. In \Cref{proposition:removing_projection}, $g(k)$ corresponds to $\theta k$. Agents can choose $g(k)$ that grows faster than linear, for example, $g(k)=k^2$. This way, agents do not need to know $n_{\mL}$ and $G$, and \cref{proposition:removing_projection} will hold with a different $T_{nom}.$
\end{remark}
For the clarity of the presentation, we will adhere to the choice of $\mS_k = \{s \in \R^d \mid \norm{s} \leq \theta k\}$ in our analysis. We will show results for exponentially growing sets in \Cref{sec:unbounded_case} and will discuss the impact of these different growth rates on the convergence of the algorithm.
\begin{remark}
    The norm we chose to define $\mS_k$ was Euclidean. Other norms can be used if they are more suitable for computations. However, all agents need to agree on the norm they use.
\end{remark}
In the next section, we provide our main results.
 
\subsection{Asymptotic Results}\label{sec:asymptotic_results_bounded}
Here, we present two of our main theorems, addressing \cref{problem:as_convergence}. First, we establish the almost sure convergence of Algorithm \ref{alg:RP3} to the optimal point. We then show the convergence of Algorithm \ref{alg:RP3} to the optimal point in the $r$th mean. To prove this convergence we present an auxiliary result which bounds the worst case error.
\begin{theorem}[Almost Sure Convergence]
Let $\mS_k = \{s \in \R^d \mid \norm{s} \leq \theta k\}$ with $\theta > n_{\mL} G$, and let each legitimate agent $i\in \mL$ initialize  $x_i[0],z_i[0]\in \mX$ arbitrarily and set $s_i[0]=0.$ 
Choose the stepsizes $\eta$ and $\lambda$ such that they satisfy the conditions defined in \Cref{theorem:nominal_convergence}. For each legitimate agent $i\in \mL$, denote the sequence generated by the dynamic \eqref{eq:RP3} by $\{x_i[k] \}$. Define the error vector \[\bolde[k]=(\normphi{\boldx[k] - \boldx^*}, D(\boldx[k], \phi), S(\boldy[k], \pi))^\intercal,\] and the random time
    $T_{nom}\triangleq \tmax n_{\mL} \frac{\theta - G}{\theta - n_{\mL} G}.$  
     Let Assumptions \ref{assumption:cost_function}, \ref{assumption:compact_convex_const_set}, \ref{assumption:trust-observations}, and \ref{assumption:graph-connectivity} hold true.
    Then, we have 
    \begin{align}
        \bolde[k] \leq M(\eta, \lambda)^{k-T_{nom}} \bolde[T_{nom}],
        \label{eq:err_bound_tnom}
    \end{align}
    for all $k>T_{nom}$ almost surely. Moreover, the sequence $\{x_i[k] \}$ converges to the optimal point $x^*$ for all $i\in \mL$ almost surely.
    \label{theorem:as_convergence}
\end{theorem}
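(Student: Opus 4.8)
The plan is to show that the RP3 dynamics coincides with a shifted instance of the (nominal) Projected Push-Pull dynamics after the random time $T_{nom}$, and then invoke \Cref{theorem:nominal_convergence} to obtain the geometric decay and hence almost sure convergence. First I would fix a sample path in the probability-one event on which $\tmax$ (from \Cref{lemma:learning_trustworthiness}) is finite; on this event $T_{nom}=\tmax n_{\mL}\frac{\theta-G}{\theta-n_{\mL}G}$ is also a finite deterministic number. For $k\ge \tmax$, \Cref{corrollary:weight_convergence} gives $R[k]=\bar R$ and $C[k]=\bar C$, so the legitimate agents' weight matrices are fixed and assign positive weight only to legitimate in/out neighbors; in particular the update equations \eqref{eq:RP3_update_x}–\eqref{eq:RP3_update_z} decouple from the malicious agents. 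For $k> T_{nom}$, \Cref{proposition:removing_projection} gives $s_i[k]=d_i[k]=\sum_j \bar C_{ij}s_j[k-1]+\nabla f_i(x_i[k-1])$, i.e. the projection onto $\mS_k$ is the identity. Thus for $k> T_{nom}$ the RP3 iterates satisfy exactly the system in \Cref{def:nominal_behavior}; setting $y_i[k]\triangleq s_i[k]-s_i[k-1]$ one checks that \eqref{eq:RP3_update_s} becomes \eqref{eq:PPP_update_y} and \eqref{eq:RP3_update_z} becomes \eqref{eq:PPP_update_z}, so from time $T_{nom}$ onward the trajectory is precisely a run of Projected Push-Pull over the strongly connected legitimate subgraph $\GL$ (using \Cref{assumption:graph-connectivity}(i)), initialized at the (random) state $(\boldx[T_{nom}],\boldy[T_{nom}])$ — note this initialization need not satisfy $y_i=\nabla f_i(x_i)$, but the Robust-Push-Pull-style update \eqref{eq:RP3_update_s} restores the gradient-tracking identity $\sum_{i\in\mL}y_i[k]=\sum_{i\in\mL}\nabla f_i(x_i[k])$ for $k>T_{nom}$, which is all the analysis of \Cref{theorem:nominal_convergence} needs.

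Given this equivalence, \Cref{theorem:nominal_convergence} (whose hypotheses hold because $\eta,\lambda$ are chosen to satisfy them and \Cref{assumption:cost_function}, \Cref{assumption:compact_convex_const_set} $\Rightarrow$ \Cref{assumption:closed_convex_const_set}, \Cref{assumption:graph-connectivity} are assumed) yields the elementwise bound $\bolde[k+1]\le M(\eta,\lambda)\bolde[k]$ for every $k\ge T_{nom}$. Iterating this from $T_{nom}$ to $k$ gives \eqref{eq:err_bound_tnom}, namely $\bolde[k]\le M(\eta,\lambda)^{k-T_{nom}}\bolde[T_{nom}]$ for all $k>T_{nom}$, on the probability-one event. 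Since $\rho(M(\eta,\lambda))<1$ by \Cref{theorem:nominal_convergence}, we have $M(\eta,\lambda)^{k-T_{nom}}\to 0$ as $k\to\infty$; because $T_{nom}<\infty$ almost surely and the entries of $\bolde[T_{nom}]$ are finite (the $x_i,z_i$ live in the compact set $\mX$, and $\|s_i[T_{nom}]\|\le\theta T_{nom}$ by the projection, so $S(\boldy[T_{nom}],\pi)$ is finite), the right-hand side of \eqref{eq:err_bound_tnom} converges to $0$ almost surely. In particular the first component $\normphi{\boldx[k]-\boldx^*}\to 0$, and since $\phi$ has strictly positive entries this forces $\|x_i[k]-x^*\|\to 0$ for every $i\in\mL$, which is the claimed almost sure convergence.

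The main obstacle — and the only genuinely delicate point — is the justification that the trajectory after $T_{nom}$ is a legitimate instance of Projected Push-Pull \emph{despite} the ``wrong'' initialization of the gradient-tracking variables. In standard gradient tracking one initializes $y_i[0]=\nabla f_i(x_i[0])$ so that $\sum_i y_i[k]=\sum_i\nabla f_i(x_i[k])$ holds for all $k$; here $\boldy[T_{nom}]$ is an arbitrary random vector, so I must argue that the update \eqref{eq:RP3_update_s}, which is $s_i[k+1]=\sum_j\bar C_{ij}s_j[k]+\nabla f_i(x_i[k])$ once the projection is inactive, makes $y_i[k]=s_i[k]-s_i[k-1]$ satisfy $\sum_{i\in\mL}y_i[k+1]=\sum_{i\in\mL}y_i[k]+\sum_{i\in\mL}(\nabla f_i(x_i[k])-\nabla f_i(x_i[k-1]))$ telescoping back to $\sum_{i\in\mL}\nabla f_i(x_i[k])$ — this uses column stochasticity of $\bar C$ and needs to be anchored carefully at $k=T_{nom}$, where the first genuinely-nominal $s$ update occurs. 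I would also remark that the set sequence $\{\mS_k\}$, the compactness of $\mX$, and \Cref{proposition:s_bound}/\Cref{proposition:removing_projection} together guarantee the invariance $s_i[k]\in\mS_k$ used to make the projection eventually inactive. A subtlety worth flagging explicitly is measurability: $T_{nom}$ is a stopping-time-like random variable, and \eqref{eq:err_bound_tnom} is a pathwise statement on the full-measure event $\{\tmax<\infty\}$, so no conditioning-on-$T_{nom}$ argument is needed here (that refinement is deferred to the $r$-th mean result).
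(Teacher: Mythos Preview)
Your proposal is correct and follows essentially the same approach as the paper: show via \Cref{lemma:learning_trustworthiness}, \Cref{corrollary:weight_convergence}, and \Cref{proposition:removing_projection} that the RP3 dynamics reach nominal behavior at the finite random time $T_{nom}$, identify the post-$T_{nom}$ trajectory with a Projected Push-Pull run initialized at $(x_i[T_{nom}],z_i[T_{nom}],y_i[T_{nom}]=s_i[T_{nom}]-s_i[T_{nom}-1])$, and then invoke \Cref{theorem:nominal_convergence} to obtain \eqref{eq:err_bound_tnom} and almost sure convergence. If anything, you are more explicit than the paper about the delicate point that $\boldy[T_{nom}]$ need not equal $\nabla F(\boldx[T_{nom}])$ and about why the column-stochasticity of $\bar C$ together with the $s$-update restores the gradient-tracking identity one step later; the paper's proof simply cites \Cref{theorem:nominal_convergence} directly at this juncture.
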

\begin{proof}
    We start by showing that the algorithm reaches the nominal behavior at some finite time almost surely. By \cref{lemma:learning_trustworthiness}
    the finite (random) time $\tmax$ exists. Moreover, the weights almost surely converge to the correct weights defined in the nominal behavior by \cref{corrollary:weight_convergence}. 
    
    Define
    $T_{nom}\triangleq \tmax n_{\mL} \frac{\theta - G}{\theta - n_{\mL} G}.$ The agents will stop projecting their $s_i[k]$ values after reaching $T_{nom}$ as shown in \cref{proposition:removing_projection}. Denote the variables of the Projected Push-Pull algorithm given in \eqref{eq:PPP} with $x_i'[k]$, $z_i'[k]$, and $y_i'[k]$. Then, after time $T_{nom}$, \cref{alg:RP3} will be equivalent to running the Projected Push-Pull algorithm given in \eqref{eq:PPP} with the initialization $x_i'[0]=x_i[T_{nom}]$, $z_i'[0]=z_i[T_{nom}]$, and $y_i'[0]=s_i[T_{nom}]-s_i[T_{nom}-1]$. Therefore, by \cref{theorem:nominal_convergence}, we have
    $$\bolde[k] \leq M(\eta, \lambda)^{k-T_{nom}} \bolde[T_{nom}].$$
    Moreover, the sequence $\{x_i[k] \}$ generated by this dynamics converges to $x^*$ for all initial points $x_i'[0]\in \mX$ and $y_i'[0]=\nabla f_i(x_i'[0])$, which concludes the proof.
\end{proof}

Next, we will show the convergence of the algorithm to the optimal point in the $r$-th mean. We will use the Dominated Convergence Theorem \cite{nlar2011} in our proof. Before doing that, we will first bound each error term at the time the system reaches the nominal behavior. Since the influence of the malicious agents is still in the system before reaching the nominal behavior, these bounds reflect ``the worst case" scenario for the error terms. The following lemma provides these bounds.
\begin{lemma}
[The Worst Case Error Bounds]
\label{lemma:worst_case_err_bounds}
    Let $\mS_k = \{s \in \R^d \mid \norm{s} \leq \theta k\}$ and $B$ denote the bound on the vectors in $\mX$ as defined in \Cref{assumption:compact_convex_const_set}. Then, we have
    \begin{subequations}
    \begin{align}
        \normphi{\boldx[k] - \boldx^*} &\leq 2B,         \label{eq:error_bounds_x_compact}\\
        D(\boldx[k], \phi) &\leq 2B, \label{eq:error_bounds_D_compact}\\
        S(\boldy[k], \pi) &\leq \frac{2(n_{\mL}+1)}{\min (\pi)} \theta k, \label{eq:error_bounds_S_compact}
    \end{align}
    \end{subequations}
where $\min (\pi)$ denotes the minimum element of the stochastic vector $\pi$.
\end{lemma}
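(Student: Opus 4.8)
The plan is to bound each of the three error terms separately, using only the boundedness of the decision variables (Assumption~\ref{assumption:compact_convex_const_set}) and the projection onto $\mS_k$. For the optimality error $\normphi{\boldx[k]-\boldx^*}$, I would note that $x_i[k]$ is produced by~\eqref{eq:RP3_update_x} as a convex combination of the $z_j[k]$, and every legitimate agent's $z_j[k]$ lies in $\mX$ by~\eqref{eq:RP3_update_z} (convex combination of $x_i[k+1]\in\mX$ and a projection onto $\mX$), while malicious agents are constrained to send $z_m[k]\in\mX$ as well; hence $x_i[k]\in\mX$ for all $i$, so $\norm{x_i[k]-x^*}\le\norm{x_i[k]}+\norm{x^*}\le 2B$. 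Since $\sum_i\phi_i=1$, the $\phi$-weighted norm is also at most $2B$, giving~\eqref{eq:error_bounds_x_compact}. The consensus error~\eqref{eq:error_bounds_D_compact} follows the same way: $\norm{x_i[k]-x_j[k]}\le\norm{x_i[k]}+\norm{x_j[k]}\le 2B$, and $\sum_{i,j}\phi_i\phi_j=1$, so $D(\boldx[k],\phi)\le 2B$.

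For the gradient-tracking error~\eqref{eq:error_bounds_S_compact}, recall $y_i[k]=s_i[k]-s_i[k-1]$ and $s_i[k]$ is obtained by projecting onto $\mS_k=\{s:\norm{s}\le\theta k\}$, so $\norm{s_i[k]}\le\theta k$ for every $i\in\mL$, and malicious agents likewise send $s_m[k]\in\mS_{k-1}$, so $\norm{s_m[k]}\le\theta(k-1)\le\theta k$. Thus $\norm{y_i[k]}\le\norm{s_i[k]}+\norm{s_i[k-1]}\le 2\theta k$. In the definition~\eqref{eq:gradient-tracking-error}, the term $\sum_{l=1}^n y_l[k]$ has norm at most $n\cdot 2\theta k$, but a tighter count is available: only legitimate agents' $s$-variables enter through the legitimate update, and in the nominal-type bookkeeping there are $n_\mL$ such terms plus the agent's own, so $\norm{\sum_l y_l[k]}$ can be bounded by $2(n_\mL+?)\theta k$; I would then bound $\norm{y_i[k]/\pi_i - \sum_l y_l[k]}\le \norm{y_i[k]}/\min(\pi) + \norm{\sum_l y_l[k]}$ and collect the constants so that the bracket becomes $2(n_\mL+1)\theta k/\min(\pi)$. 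Finally, using $\sum_i\pi_i=1$ in~\eqref{eq:gradient-tracking-error} yields $S(\boldy[k],\pi)\le \frac{2(n_\mL+1)}{\min(\pi)}\theta k$.

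The main obstacle I anticipate is getting the exact constant $2(n_\mL+1)$ in~\eqref{eq:error_bounds_S_compact} right: this requires being careful about which indices actually contribute to $\sum_{l=1}^n y_l[k]$ — whether malicious agents' $y$-terms appear in this sum at all (they should not, since legitimate agents only aggregate over trusted out-neighbors and the sum is effectively over $\mL$), and how the factor $1/\pi_i$ versus the plain sum combine. The decision-variable bounds~\eqref{eq:error_bounds_x_compact}–\eqref{eq:error_bounds_D_compact} are routine triangle-inequality arguments once feasibility of all transmitted $z$-values is established; the only subtlety there is confirming that the lazy update~\eqref{eq:RP3_update_z} keeps $z_i[k]\in\mX$, which is immediate from convexity of $\mX$ and the projection.
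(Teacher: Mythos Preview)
Your proposal is correct and follows essentially the same route as the paper: triangle inequality plus feasibility of all transmitted $z$-values for \eqref{eq:error_bounds_x_compact}--\eqref{eq:error_bounds_D_compact}, and the projection bound $\norm{s_i[k]}\le\theta k$ together with $\norm{y_i[k]/\pi_i}+\norm{\sum_{l}y_l[k]}$ for \eqref{eq:error_bounds_S_compact}. Regarding the constant you flagged, the sum $\sum_l y_l[k]$ in \eqref{eq:gradient-tracking-error} is taken over the $n_\mL$ legitimate agents, so $\norm{\sum_{l\in\mL}y_l[k]}\le 2n_\mL\theta k$; combining this with $\norm{y_i[k]}/\pi_i\le 2\theta k/\min(\pi)$ and the crude inequality $\min(\pi)\le 1$ (since $\pi$ is stochastic) yields exactly $\frac{2(n_\mL+1)}{\min(\pi)}\theta k$.
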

\begin{proof}
   Using the compactness of set $\mX$ and the triangular inequality, we obtain $\norm{x_i[k]-x^*} \leq 2B.$ Then, using the definition of $\normphi{\boldx[k] - \boldx^*}$ we obtain
   \begin{align*}
       \normphi{\boldx[k]-\boldx^*} &= \sqrt{\sum_{i\in \mL} \phi_i \normsq{x_i[k]-x^*}} 
       \leq  2B \sqrt{\sum_{i\in \mL} \phi_i} = 2B,
   \end{align*}
   where in the last step, we used the stochasticity of the vector $\phi_i$. The bound on $D(\boldx[k], \phi)$ is obtained similarly. Next, we bound the gradient tracking error $S(\boldy[k], \pi).$
   By definition, we have $$S(\boldy[k], \pi) = \sqrt{\sum_{i\in \mL} \pi_i \norm{\frac{y_i[k]}{\pi_i}-\sum_{l\in \mL} y_l[k]}^2}.$$ First, notice that $$\norm{y_i[k]}=\norm{s_i[k]-s_i[k-1]}\leq 2\theta k,$$ for any $i\in\mL$ due to the projection onto the set $\mS_k $. Using the triangular inequality, we get
    \begin{align*}
        \norm{\frac{y_i[k]}{\pi_i}-\sum_{l\in \mL} y_l[k]}^2
        &\leq \left(\norm{\frac{y_i[k]}{\pi_i}}+\norm{\sum_{l\in \mL} y_l[k]}\right)^2 \\
        &\leq \left(\frac{2(n_{\mL}+1)}{\min (\pi)} \theta k \right)^2.
    \end{align*}
    Using this bound and the stochasticity of $\pi$, we obtain $S(\boldy[k], \pi)\leq \frac{2(n_{\mL}+1)}{\min (\pi)} \theta k.$
\end{proof}

Now, we give the convergence in mean result which relies on the bounds \eqref{eq:error_bounds_x_compact}-\eqref{eq:error_bounds_D_compact}. We note that the additional upper bound \eqref{eq:error_bounds_S_compact} will be utilized in the finite time analysis of the convergence rate of the RP3 algorithm which we present later on in \Cref{theorem:exp_conv_rate}.
\begin{theorem}[Convergence in Mean]
    Let $\mS_k = \{s \in \R^d \mid \norm{s} \leq \theta k\}$  with $\theta > n_{\mL} G.$ Let Assumptions \ref{assumption:cost_function}, \ref{assumption:compact_convex_const_set}, \ref{assumption:trust-observations}, and \ref{assumption:graph-connectivity} hold true. Let each legitimate agent $i\in \mL$ initialize the algorithm such that $x_i[0],z_i[0]\in \mX$ arbitrarily and $s_i[0]=0.$ Choose the stepsizes $\eta$ and $\lambda$ such that they satisfy the conditions defined in \Cref{theorem:nominal_convergence}. Then, the sequence generated by the dynamic \eqref{eq:RP3} converges in the $r$-th mean to $\boldx^*$ for any $r\geq 1$, that is
    $$
        \lim_{k\to \infty} \E[\normphi{\boldx[k] - \boldx^*}^r]=0.
    $$
\label{theorem:convergence_in_mean}
\end{theorem}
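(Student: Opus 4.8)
The plan is to use the Dominated Convergence Theorem applied to the random variable $\normphi{\boldx[k]-\boldx^*}^r$, conditioning on the value of the random time $T_{nom}$ (equivalently $\tmax$). First I would fix $r\geq 1$ and write, using the law of total probability,
\begin{align*}
\E\big[\normphi{\boldx[k]-\boldx^*}^r\big] = \sum_{t=0}^{\infty} \E\big[\normphi{\boldx[k]-\boldx^*}^r \,\big|\, T_{nom}=t\big]\,\Pr(T_{nom}=t).
\end{align*}
By Theorem~\ref{theorem:as_convergence}, on the event $\{T_{nom}=t\}$ we have the deterministic bound $\bolde[k]\leq M(\eta,\lambda)^{k-t}\bolde[t]$ for all $k>t$, and by Lemma~\ref{lemma:worst_case_err_bounds} each component of $\bolde[t]$ is bounded (the first two by $2B$, the third by $\frac{2(n_\mL+1)}{\min(\pi)}\theta t$). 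Since $\rho(M(\eta,\lambda))<1$, for each fixed $t$ the quantity $\normphi{\boldx[k]-\boldx^*}$ goes to $0$ as $k\to\infty$; hence each summand $\E[\normphi{\boldx[k]-\boldx^*}^r\mid T_{nom}=t]\Pr(T_{nom}=t)\to 0$ pointwise in $t$.

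To pass the limit through the sum I need a summable dominating function independent of $k$. For a fixed $t$, and for $k\leq t$, I would bound $\normphi{\boldx[k]-\boldx^*}^r$ crudely: the $x$-iterates stay in $\mX$ (they are convex combinations of the $z_j\in\mX$), so $\normphi{\boldx[k]-\boldx^*}\leq 2B$ always, giving $(2B)^r$. For $k>t$, using $\bolde[k]\leq M^{k-t}\bolde[t]$ and $\rho(M)<1$, there is a constant $C_M$ (depending only on $M$, via e.g. an equivalent norm in which $M$ is a contraction, or $\|M^{k-t}\|\leq C_M\,\rho^{k-t}$ for some $\rho<1$) such that $\normphi{\boldx[k]-\boldx^*}\leq C_M(\|\bolde[t]\|_\infty)\leq C_M\big(2B + \frac{2(n_\mL+1)}{\min(\pi)}\theta t\big)$; raising to the $r$-th power this is at most a polynomial in $t$, say $P_r(t) := \big(C_M(2B+\frac{2(n_\mL+1)}{\min(\pi)}\theta t)\big)^r + (2B)^r$, which crucially does not depend on $k$. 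Therefore $\E[\normphi{\boldx[k]-\boldx^*}^r\mid T_{nom}=t]\Pr(T_{nom}=t)\leq P_r(t)\Pr(T_{nom}=t)$ for all $k$. The dominating sequence $t\mapsto P_r(t)\Pr(T_{nom}=t)$ is summable because by Corollary~\ref{cor:Tnom_prob}, $\Pr(T_{nom}=t)$ decays exponentially in $t$ (it is bounded by $p_c$ of a linear function of $t$, and $p_c$ is a sum of decaying exponentials), so a polynomial times an exponentially decaying sequence is summable. Applying the Dominated Convergence Theorem (for series, i.e. on $\ell^1$ with counting measure) lets me interchange limit and sum, and since every term tends to $0$, the whole sum tends to $0$, which is the claim.

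The main obstacle I expect is making the $k$-independent domination rigorous: I must be careful that the constant $C_M$ absorbing the transient of $M^{k-t}$ truly depends only on $M$ (not on $k$ or $t$), which follows from $\rho(M)<1$ via Gelfand's formula or by working in a norm adapted to $M$; and I need the bound on $\bolde[t]$ from Lemma~\ref{lemma:worst_case_err_bounds} to hold on the event $\{T_{nom}=t\}$, i.e. that $\tmax$ (hence $t$) determines the worst-case growth of the $s$-variables up to time $T_{nom}$. A secondary subtlety is handling the empty-event convention from the notation section when $\Pr(T_{nom}=t)=0$, but that is exactly why the paper defined $\E[Z\mid\mathcal A]\Pr(\mathcal A)=0$ in that case, so those terms are harmless. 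Once domination is in place, the convergence of each term is immediate from $\rho(M(\eta,\lambda))<1$.
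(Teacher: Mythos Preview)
Your argument is correct, but it is considerably more elaborate than the paper's. The paper observes that under Assumption~\ref{assumption:compact_convex_const_set} the iterates $x_i[k]$ remain in $\mX$ for all $k$ (they are convex combinations of $z_j\in\mX$), so by Lemma~\ref{lemma:worst_case_err_bounds} one has the \emph{uniform, deterministic} bound $\normphi{\boldx[k]-\boldx^*}^r\leq(2B)^r$ for every $k$. This single constant serves as the dominating function, and the Dominated Convergence Theorem applied directly to the sequence of random variables $\normphi{\boldx[k]-\boldx^*}^r$ (together with the almost sure convergence from Theorem~\ref{theorem:as_convergence}) finishes the proof in one line. You actually state this bound yourself (``$\normphi{\boldx[k]-\boldx^*}\leq 2B$ always'') but then discard it for $k>t$ in favor of a more involved estimate via $M^{k-t}\bolde[t]$, conditioning on $T_{nom}$, and a polynomial-times-exponential dominating sequence on $\ell^1$. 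All of that machinery is unnecessary here precisely because the constraint set is compact. That said, your approach is essentially the one the paper \emph{does} use later for the unbounded case (Proposition~\ref{proposition:convergence_in_mean_unbounded}), where no uniform constant bound is available and one must exploit the exponential tail of $\Pr(T_{nom}=t)$ against the growth of the worst-case error; so your instinct is right, just deployed one theorem too early.
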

\begin{proof}
    We will prove this theorem using the Dominated Convergence Theorem \cite{nlar2011}. Using the bounds given in \Cref{lemma:worst_case_err_bounds} we get
    \begin{align*}
        \normphi{\boldx[k] - \boldx^*} &\leq 2B, \\
        \normphi{\boldx[k] - \boldx^*}^r &\leq (2B)^r,
    \end{align*}
    where we take the $r$-th power of both sides. The error $\normphi{\boldx[k] - \boldx^*}^r$ is bounded by a constant value. Recall that we have almost surely convergence by \Cref{theorem:as_convergence}. Therefore, the desired result follows directly from the Dominated Convergence Theorem.
\end{proof}

\subsection{Finite Time Analysis}\label{sec:finite_time_results_bounded}
In this part, we derive the expected convergence rate of the algorithm.
\begin{theorem}[Expected Convergence Rate]
    Let $\mS_k = \{s \in \R^d \mid \norm{s} \leq \theta k\}$ with $\theta > n_{\mL} G.$ Let Assumptions \ref{assumption:cost_function}, \ref{assumption:compact_convex_const_set}, \ref{assumption:trust-observations}, and \ref{assumption:graph-connectivity} hold. Let each legitimate agent $i\in \mL$ initialize the algorithm such that $x_i[0],z_i[0]\in \mX$ arbitrarily and $s_i[0]=0.$ Choose the stepsizes $\eta$ and $\lambda$ such that they satisfy the conditions in \Cref{theorem:nominal_convergence}.
    Define the error vector $\bolde[k]=(\normphi{\boldx[k] - \boldx^*}, D(\boldx[k], \phi), S(\boldy[k], \pi))^\intercal$.  Then, for all $k\geq 0$, we have
    \begin{equation}
        \begin{aligned}
        \E[\bolde[k]] &\leq 
         M(\eta, \lambda)^{k-\lfloor k/2 \rfloor} (I-M(\eta, \lambda))^{-1} \begin{bmatrix}
            2B \\ 2B \\ \frac{2(n_{\mL}+1)}{\min (\pi)} \theta \lfloor k/2 \rfloor
        \end{bmatrix} \\
        &+\min \left\{p_e\left(\frac{\theta - n_{\mL} G}{n_{\mL} (\theta - G)}(\lfloor k/2 \rfloor+1)-\Delta\right),1\right\} 
        \cdot \begin{bmatrix}
            2B \\ 2B \\ \frac{2(n_{\mL}+1)}{\min (\pi)} \theta k 
        \end{bmatrix},
    \end{aligned}
    \label{eq:exp_conv_rate}
    \end{equation}
    where $B$ denote the bound on the vectors in $\mX$ as defined in \Cref{assumption:compact_convex_const_set}, $\Delta$ and $p_e(\cdot)$ are  as given in \Cref{proposition:tmax_prob}, and $\lfloor \cdot \rfloor$ denotes the floor function.
    \label{theorem:exp_conv_rate}
\end{theorem}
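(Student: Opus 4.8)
The plan is to condition on the random time $T_{nom}$ and split the analysis at the midpoint $\lfloor k/2 \rfloor$, treating separately the "good" event where the algorithm has reached nominal behavior by time $\lfloor k/2 \rfloor$ and the "bad" event where it has not. First I would write, by the total law of expectation (using the convention for empty events introduced in the notation section),
\begin{align*}
    \E[\bolde[k]] = \E\bigl[\bolde[k] \,\big|\, T_{nom} \le \lfloor k/2 \rfloor\bigr]\Pr\bigl(T_{nom}\le \lfloor k/2\rfloor\bigr) + \E\bigl[\bolde[k] \,\big|\, T_{nom} > \lfloor k/2 \rfloor\bigr]\Pr\bigl(T_{nom}> \lfloor k/2\rfloor\bigr).
\end{align*}
For the bad event, I bound the error vector by its worst-case values from Lemma~\ref{lemma:worst_case_err_bounds} (componentwise $(2B, 2B, \frac{2(n_\mL+1)}{\min(\pi)}\theta k)^\intercal$, which hold deterministically for every $k$), and bound $\Pr(T_{nom} > \lfloor k/2 \rfloor)$ using Corollary~\ref{cor:Tnom_prob}, namely $\Pr(T_{nom} > t-1)\le \min\{p_e(\frac{\theta-n_\mL G}{n_\mL(\theta-G)}t - \Delta),1\}$ evaluated at $t = \lfloor k/2\rfloor + 1$. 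This produces exactly the second term of \eqref{eq:exp_conv_rate}.

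For the good event, conditioning on $\{T_{nom}\le \lfloor k/2\rfloor\}$ and using Theorem~\ref{theorem:as_convergence} (more precisely the contraction $\bolde[k]\le M(\eta,\lambda)^{k - T_{nom}}\bolde[T_{nom}]$, which holds almost surely for $k > T_{nom}$), I would peel off $k - \lfloor k/2\rfloor$ applications of $M$ from time $\lfloor k/2 \rfloor$ onward: since $T_{nom}\le \lfloor k/2\rfloor$ on this event and $M(\eta,\lambda)$ is entrywise nonnegative, $\bolde[k] \le M(\eta,\lambda)^{k-\lfloor k/2\rfloor}\,\bolde[\lfloor k/2\rfloor]$. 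Then I bound $\bolde[\lfloor k/2\rfloor]$ by the worst-case bounds of Lemma~\ref{lemma:worst_case_err_bounds}, giving $\bolde[\lfloor k/2\rfloor] \le (2B, 2B, \frac{2(n_\mL+1)}{\min(\pi)}\theta\lfloor k/2\rfloor)^\intercal$. Dropping the (at most $1$) probability factor $\Pr(T_{nom}\le\lfloor k/2\rfloor)$ upward, this yields $M(\eta,\lambda)^{k-\lfloor k/2\rfloor}(2B,2B,\frac{2(n_\mL+1)}{\min(\pi)}\theta\lfloor k/2\rfloor)^\intercal$. To match the stated bound I would then replace this by the larger quantity $M(\eta,\lambda)^{k-\lfloor k/2\rfloor}(I - M(\eta,\lambda))^{-1}(2B,2B,\frac{2(n_\mL+1)}{\min(\pi)}\theta\lfloor k/2\rfloor)^\intercal$, which is valid because $(I-M)^{-1} = \sum_{j\ge0} M^j \succeq I$ entrywise (the Neumann series converges since $\rho(M)<1$ by Theorem~\ref{theorem:nominal_convergence}, and all terms are nonnegative). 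Summing the two event contributions gives \eqref{eq:exp_conv_rate}.

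The main obstacle is handling the conditioning carefully: the contraction relation of Theorem~\ref{theorem:as_convergence} is an almost-sure statement that involves the random variable $T_{nom}$ inside the exponent, so I must be careful that on the event $\{T_{nom}\le\lfloor k/2\rfloor\}$ the deterministic bound $\bolde[k]\le M^{k-\lfloor k/2\rfloor}\bolde[\lfloor k/2\rfloor]$ genuinely holds pathwise (it does, by applying the one-step contraction $\bolde[j+1]\le M\bolde[j]$ repeatedly for $j = \lfloor k/2\rfloor, \dots, k-1$, all of which are $> T_{nom}-1$), and that the expectation is taken only over the remaining randomness in $\bolde[\lfloor k/2\rfloor]$, which is then crudely bounded by a deterministic constant via Lemma~\ref{lemma:worst_case_err_bounds}. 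A secondary subtlety is that \eqref{eq:exp_conv_rate} is claimed for \emph{all} $k\ge 0$, including small $k$ where $\lfloor k/2\rfloor$ may equal $0$ or where $k$ is not $> T_{nom}$; there the worst-case bounds of Lemma~\ref{lemma:worst_case_err_bounds} still dominate $\bolde[k]$ deterministically and the second term of \eqref{eq:exp_conv_rate} (with the probability factor, which is $\le 1$) already accounts for the full worst-case vector, so the inequality holds trivially. I would note this edge case explicitly rather than grind through it.
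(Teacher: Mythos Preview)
Your proof is correct and follows the same overall architecture as the paper---split at $\lfloor k/2\rfloor$, handle the ``bad'' event with Lemma~\ref{lemma:worst_case_err_bounds} and Corollary~\ref{cor:Tnom_prob}, handle the ``good'' event with the nominal contraction---but your treatment of the good event is genuinely simpler than the paper's. The paper decomposes the good event further as $\sum_{t=0}^{\lfloor k/2\rfloor}\Pr(T_{nom}=t)\,\E[\bolde[k]\mid T_{nom}=t]$, bounds each summand by $M(\eta,\lambda)^{k-t}$ times the worst-case vector (dropping each $\Pr(T_{nom}=t)\le 1$), and then sums the resulting geometric matrix series $\sum_{t=0}^{\lfloor k/2\rfloor} M^{k-t}\le M^{k-\lfloor k/2\rfloor}(I-M)^{-1}$ via Neumann. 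In your version you condition only on the single event $\{T_{nom}\le\lfloor k/2\rfloor\}$, apply the one-step contraction from $\lfloor k/2\rfloor$ to $k$, and obtain the tighter bound $M^{k-\lfloor k/2\rfloor}$ times the worst-case vector; you then inflate by $(I-M)^{-1}\succeq I$ purely to match the stated inequality. Your route is more elementary and in fact yields a sharper intermediate estimate---the $(I-M)^{-1}$ factor in \eqref{eq:exp_conv_rate} is an artifact of the paper's summation, not something forced by the problem. The paper's approach, on the other hand, makes clear where the Neumann factor comes from and would be the natural derivation if one wanted to track the dependence on the actual value of $T_{nom}$ rather than just the event $\{T_{nom}\le\lfloor k/2\rfloor\}$.
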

\begin{proof}
    We know that after time $T_{nom}$, the system will reach nominal behavior as shown in the proof of \cref{theorem:as_convergence}. Since we can analyze the system after reaching the nominal behavior, our strategy in the proof is to use the law of iterated expectations by conditioning on $T_{nom}.$  The main idea of the proof is twofold: For small realizations of $T_{nom}$, error reduction occurs as a result of contraction in the nominal case, facilitated by the early achievement of nominal behavior. Conversely, with a large $T_{nom}$, the error terms may increase according to their upper limits. Yet, the exponential decrease in the probability of $T_{nom}=k$ with increasing $k$ allows us to bound the expected error in this case. Hence, by the law of total expectation we have 
    \begin{align}
        \E[\bolde[k]]  &= \E[\E[\bolde[k]|T_{nom}]] \nonumber\\
                    &= \sum_{t=0}^{\lfloor k/2 \rfloor} \Pr(T_{nom}=t)\E[\bolde[k]|T_{nom}=t] \nonumber\\
                    &+ \Pr(T_{nom}> \lfloor k/2 \rfloor) \E[\bolde[k]|T_{nom}> \lfloor k/2 \rfloor].
                    \label{eq:law_iterative_exp}
    \end{align}
    We bound the first term as follows. Note that inequalities we use with respect to vectors and matrices hold entry-wise.
    \begin{align*}
        \sum_{t=0}^{\lfloor k/2 \rfloor} &\Pr(T_{nom}=t)\E[\bolde[k]|T_{nom}=k] \\
        &\overset{(a)}{\leq} \sum_{t=0}^{\lfloor k/2 \rfloor}  M(\eta, \lambda)^{k-t} \bolde[t] \\
        &\overset{(b)}{\leq} \left( \sum_{t=0}^{\lfloor k/2 \rfloor}  M(\eta, \lambda)^{k-t} \right) \begin{bmatrix}
            2B \\ 2B \\ \frac{2(n_{\mL}+1)}{\min (\pi)} \theta \lfloor k/2 \rfloor
        \end{bmatrix}. \\
    \end{align*}
    In inequality $(a)$, we bounded $\Pr(T_{nom}=t)$ with $1$, and used \eqref{eq:err_bound_tnom} given in \Cref{theorem:as_convergence}. Inequality $(b)$ follows directly from \Cref{lemma:worst_case_err_bounds} and the fact that $M(\eta, \lambda)$ is a non-negative matrix. Next, we bound the matrix summation in the last inequality.
    \begin{align}
        \sum_{t=0}^{\lfloor k/2 \rfloor}  M(\eta, \lambda)^{k-t} \nonumber
        &= M(\eta, \lambda)^{k-\lfloor k/2 \rfloor}\left( \sum_{t=0}^{\lfloor k/2 \rfloor}  M(\eta, \lambda)^{t} \right) \nonumber\\
        &\overset{(a)}{\leq} M(\eta, \lambda)^{k-\lfloor k/2 \rfloor} \left( \sum_{t=0}^{\infty}  M(\eta, \lambda)^{t} \right) \nonumber\\
        &\overset{(b)}{=} M(\eta, \lambda)^{k-\lfloor k/2 \rfloor} (I-M(\eta, \lambda))^{-1},
    \end{align}
    where in $(a)$, we used the non-negativity of the matrix $M(\eta, \lambda)$. Inequality $(b)$ comes from the infinite sum of matrices with spectral radius less than $1$ (see \cite[Theorem 3.15]{Varga2000}). 
    Now, we will bound the second term in \eqref{eq:law_iterative_exp}. Let $\Delta$ and $p_e(\cdot)$ be as given in \Cref{proposition:tmax_prob}. We have
    \begin{align*}
        &\Pr(T_{nom} > \lfloor k/2 \rfloor) \E[\bolde[k]|T_{nom}> \lfloor k/2 \rfloor] \\
        &\leq \Pr(T_{nom}> \lfloor k/2 \rfloor) \begin{bmatrix}
            2B \\ 2B \\ \frac{2(n_{\mL}+1)}{\min (\pi)} \theta k 
        \end{bmatrix} \\
        &\leq \min \left\{p_e\left(\frac{\theta - n_{\mL} G}{n_{\mL}(\theta - G)}(\lfloor k/2 \rfloor+1)-\Delta\right),1\right\} \\
        &\cdot \begin{bmatrix}
            2B \\ 2B \\ \frac{2(n_{\mL}+1)}{\min (\pi)} \theta k 
        \end{bmatrix},
    \end{align*}
    where the first inequality follows from \Cref{lemma:worst_case_err_bounds} and the second one follows from \Cref{cor:Tnom_prob}. Combining all the bounds gives us the desired result.
\end{proof}

\Cref{theorem:exp_conv_rate} states that for a sufficiently large $k$, the expected convergence rate of the system decays geometrically. The convergence rate depends on various properties of the system and design choices. First, the error contractions we get from the matrix $M(\eta, \lambda)$ depend on the choices of step sizes $\eta$ and $\lambda$, as well as the contractions we get from the matrices $\Bar{R}$ and $\Bar{C}$, and the smoothness and the convexity of the cost functions (see \eqref{eq:matrix_defn}). Second, both of the error terms depend on $B$ and a linear bound that grows over time. These terms reflect the effect that the malicious agents inflict before the system reaches the nominal behavior. Lastly, the second error term depends on the learning rate that we get from the learning protocol with trust opinions. The $\Delta$ term captures the impact of the graph topology on the learning rate while the coefficient $\frac{\theta - n_{\mL}G}{n_{\mL}(\theta - G)}$ captures the impact of the growth rate of the set $S_k$ on the time before reaching the nominal behavior. An interesting trade-off is that while a faster growth rate gives us a better decrease in the probability $p_e(\cdot)$, it also increases the impact that the malicious agents can have on the system through the gradient tracking terms $s_i[k].$ In the next section, we will see this impact for a different choice of $S_k.$

\section{Optimization Problems with Unbounded $\mX$}\label{sec:unbounded_case}


In this section, we extend our results to optimization problems with unbounded constraint sets. This setup is more challenging to guard against malicious behavior and to analyze, since the inputs of the malicious agents do not reside within a known compact set. To capture this, throughout this section, we will use the more general \Cref{assumption:closed_convex_const_set} instead of \Cref{assumption:compact_convex_const_set}. 

\subsection{Bounding the $x$- and $s$-variables}

In this setup the input values of the legitimate and malicious agents are not necessarily bounded and can take any choice in $\mathbb{R}^d$. The main challenge in removing the limitation on malicious agents to choose input values from a predefined bounded set is that these agents can arbitrarily influence variables $z_i[k]$, hence, the decision variables $x_i[k].$ Therefore, we need to confine their impact until the system reaches the nominal behavior. Our strategy, in this case, is to reapply the principle of projecting the gradient tracking variables $s_i[k]$ of growing bounded set to the agent's data values $x_i[k]$ as well. Specifically, for the $x_i[k]$ variables, we will introduce an expanding bounded set $\mX_k$ to bound the effect of the malicious agents. First, we show that introducing this set does not affect the convergence of the algorithm in the nominal case without the malicious agents. 
\begin{lemma}[Nominal Convergence with $\mX_k$]
    Define $\mX_k = \{x \in \R^d \mid \norm{x} \leq \exp (\theta k)\}$ with $\theta >0$ for all $k\geq 0$. Define the effective constraint set at time $k$ as $\Bar{\mX}_k\triangleq \mX \cap \mX_k.$ 
     Assume that there are no malicious agents in the system, and agents run the Projected Push-Pull algorithm given in \eqref{eq:PPP} with the the projections on the set $\Bar{\mX}_k$ instead of $X$ at every time step $k$. Then, \Cref{theorem:nominal_convergence} holds true after time $k'$, where $k'\triangleq \frac{1}{\theta} \ln \norm{x^*},$ where 
     $x^*$ is the optimal solution of the problem given in \eqref{eq:objective_defn}.
    \label{lemma:nominal_conv_unbounded}
\end{lemma}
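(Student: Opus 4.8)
The plan is to show that once $k$ is large enough that $x^* \in \mathinterior(\mX_k)$ (which happens precisely when $\exp(\theta k) > \norm{x^*}$, i.e. $k > k' = \frac{1}{\theta}\ln\norm{x^*}$), the extra projection onto $\bar{\mX}_k = \mX\cap\mX_k$ does not interfere with the dynamics of the Projected Push-Pull algorithm, so \Cref{theorem:nominal_convergence} applies verbatim from that time onward. The key observation is that the error terms tracked in \Cref{theorem:nominal_convergence} — the optimality error $\normphi{\boldx[k]-\boldx^*}$, the consensus error $D(\boldx[k],\phi)$, and the gradient tracking error $S(\boldy[k],\pi)$ — are all contractions toward a configuration in which every agent sits at $x^*$. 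Projection onto a convex set containing $x^*$ is nonexpansive with respect to the distance to $x^*$, so replacing $\projx{\cdot}$ by $\Pi_{\bar{\mX}_k}[\cdot]$ can only help, provided $x^*$ stays feasible for $\bar{\mX}_k$.

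First I would record the elementary fact that $\bar{\mX}_k$ is nonempty, closed, and convex (intersection of two such sets, nonempty for $k > k'$ since $x^*\in\mX$ and $\norm{x^*} < \exp(\theta k)$), so the projection $\Pi_{\bar{\mX}_k}[\cdot]$ is well defined and the RP3/Projected Push-Pull update in~\eqref{eq:PPP_update_z} still makes sense with $\mX$ replaced by $\bar{\mX}_k$. Second, I would revisit the proof of \Cref{theorem:nominal_convergence} in~\cite{projected_push_pull} and isolate exactly where the constraint set enters: it enters only through the lazy update~\eqref{eq:PPP_update_z}, and there only via the nonexpansiveness of the projection and the fixed-point property $\projx{x^* - \eta \nabla f(x^*)} = x^*$-type arguments used to bound the decrease of $\normphi{\boldx[k]-\boldx^*}$. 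The crucial point is that for $k > k'$ we have $x^*\in\bar{\mX}_k$, so the same fixed-point / nonexpansiveness inequalities hold with $\Pi_{\bar{\mX}_k}$ in place of $\projx{\cdot}$; every other ingredient (row/column stochasticity of $R,C$, strong convexity and $L$-smoothness, the definitions of $\sigma,\tau,r,\varphi$) is untouched by the change of constraint set. Hence the matrix recursion $\bolde[k+1]\le M(\eta,\lambda)\bolde[k]$ of~\eqref{eq:composite-error-relation} holds for all $k\ge k'$, and since $\rho(M(\eta,\lambda))<1$ the three error terms contract geometrically from time $k'$ onward.

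Third, I would note that we may simply restart the analysis at time $k'$: the state $(\boldx[k'],\boldz[k'],\boldy[k'])$ serves as a (feasible, since $z_i[k']\in\bar{\mX}_{k'}\subseteq\mX$) initialization, the gradient tracking property $\sum_i y_i[k] = \sum_i \nabla f_i(x_i[k])$ is preserved by the column-stochastic update~\eqref{eq:PPP_update_y} for all $k$ regardless of the $x$-projection, and \Cref{theorem:nominal_convergence} then gives $\bolde[k]\le M(\eta,\lambda)^{k-k'}\bolde[k']$ for $k\ge k'$, so all three errors go to zero geometrically. I expect the main obstacle to be the careful bookkeeping in the previous paragraph: one must verify that the specific projection-based inequalities in the source proof of \Cref{theorem:nominal_convergence} (in particular the bound producing the $(1-\eta\lambda n\min(\pi)\mu)$ entry and the $\lambda L^{-1}$, $2\lambda L^{-1}$ entries of $M$) genuinely only use "$\mX$ is closed convex and contains $x^*$" and never the specific shape of $\mX$ — this is true but requires going line by line through~\cite{projected_push_pull}. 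A minor additional subtlety is handling $\norm{x^*}\le 1$, where $k'\le 0$ and the statement is vacuously about all $k\ge 0$; and more generally ensuring $\exp(\theta k) > \norm{x^*}$ is strict so that $x^*$ is in the (relative) interior and the projection genuinely acts as identity on a neighborhood — though for the nonexpansiveness argument mere feasibility $x^*\in\bar{\mX}_k$ suffices, so strictness is not essential.
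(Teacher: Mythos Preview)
Your proposal is correct and follows essentially the same route as the paper: show that for $k\ge k'$ the set $\bar{\mX}_k$ is closed, convex, and contains $x^*$ (hence $x^*$ remains the optimizer over $\bar{\mX}_k\subseteq\mX$), and then invoke that the one-step contraction~\eqref{eq:composite-error-relation} only needs Assumption~\ref{assumption:closed_convex_const_set} on the constraint set, so \Cref{theorem:nominal_convergence} applies from time $k'$. The paper's argument is the terse two-line version of your more detailed plan; your additional care about the time-varying nature of $\bar{\mX}_k$ and the verification that $M(\eta,\lambda)$ is independent of the constraint set are well placed but ultimately amount to the same observation.
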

\begin{proof}
    When $k'\triangleq {\frac{1}{\theta}} \ln \norm{x^*},$ then $x^* \in \mX_k$ for all $k\geq k'$ by the definition of $\mX_k$. Hence, we also have $x^* \in \Bar{\mX}_k.$ 
    Since $\Bar{\mX}_k \subseteq \mX$ and $x^* \in \Bar{\mX}_k $, the point $x^*$ is also the optimal solution of the problem in \eqref{eq:objective_defn} with the constraint set $\Bar{\mX}_k.$ The one-step contraction in the error given by \eqref{eq:composite-error-relation} only requires the constrained set to be closed and convex, i.e.,  satisfy \Cref{assumption:closed_convex_const_set}. Since $\Bar{\mX}_k$ is closed and convex and $x^*$ is included within the set after time $k'$, \Cref{theorem:nominal_convergence} holds. 
\end{proof}

Lemma~\ref{lemma:nominal_conv_unbounded} shows that these new growing sets only have a minimal impact on the convergence of the Projected Push-Pull algorithm, by introducing a short delay that has a logarithmic dependence on the norm of $x^*.$ Notice that even when $x^* \notin \mX_k$, the RP3 algorithm preserves the gradient tracking property when legitimate agents assign positive weights to their legitimate neighbors only. Therefore, in our forthcoming analysis where we derive a bound on $s_i[k]$ variable, we do not need to consider this time delay.

Our derivations of the growth rate of the variables $s_i[k]$ in \Cref{sec:bounding_s_variables} rely on the compactness of $\mX$ (see \Cref{proposition:s_bound}). 
Next, we will prove that this result also holds for the increasing sequence of sets $\Bar{\mX}_k$. This will later help us to find a way to construct the new growing set $\mS_k$ when the bound on $x_i[k]$ also grows over time. 

\begin{lemma}
    For some $\theta >0$ and for all $k\geq 0$, let
    $\mX_k = \{x \in \R^d \mid \norm{x} \leq \exp (\theta k)\}$. Also, let 
    $\Bar{\mX}_k = \mX \cap \mX_k$ and define $G_k \triangleq A \exp(\theta k)$ for all $k\ge0$, where $A \triangleq L+ \max_{i\in\cal{L}} \{\norm{\nabla f_i(0)} \}$. Assume that the system has the nominal behavior from the beginning, i.e., legitimate agents assign non-zero weights to their legitimate neighbors only and there is no projection of the $s_i[k]$ variables for all $k\geq0$. Then, 
    \begin{align*}
        \sum_{i=1}^{n_{\mL}} \norm{s_i[k+1]} &\leq n_{\mL}G_k + \sum_{i=1}^{n_{\mL}} \norm{s_i[k]} ,
    \end{align*}
    for all $k\geq 0.$
    Moreover, if $s_i[0]=0$ for all $i \in \mL$, we have $$\norm{s_i[k]} \leq \frac{n_{\mL}A}{\exp(\theta)-1}\exp(\theta k),$$ for all $k\geq 0$ and for all $i \in \mL.$
    \label{lemma:s_growth_unbounded}
\end{lemma}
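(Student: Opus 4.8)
The plan is to follow the same template as the proof of \Cref{proposition:s_bound}, with the single essential change that the constant gradient bound $G$ is replaced by the time-varying bound $G_k = A\exp(\theta k)$ dictated by the growth of the effective constraint set $\Bar{\mX}_k$. Concretely, since in the modified algorithm the decision variables obey $\norm{x_i[k]} \le \exp(\theta k)$ for every $i\in\mL$ (because the projection in the $z$-update is now onto $\Bar{\mX}_k \subseteq \mX_k$), the $L$-Lipschitz continuity of $\nabla f_i$ gives
\[
\norm{\nabla f_i(x_i[k])} \le \norm{\nabla f_i(0)} + L\norm{x_i[k]} \le \norm{\nabla f_i(0)} + L\exp(\theta k).
\]
Since $\theta>0$ and $k\ge 0$ imply $\exp(\theta k)\ge 1$, the first term can be absorbed: $\norm{\nabla f_i(0)} \le (\max_{j\in\mL}\norm{\nabla f_j(0)})\exp(\theta k)$, so $\norm{\nabla f_i(x_i[k])} \le A\exp(\theta k) = G_k$ for all $i\in\mL$ and all $k\ge0$.

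With this bound in hand, I would reproduce the recursion of \Cref{proposition:s_bound} almost verbatim. Starting from the nominal update $s_i[k+1] = \sum_{j=1}^{n_\mL}\Bar{C}_{ij}s_j[k] + \nabla f_i(x_i[k])$, the triangle inequality together with $\norm{\nabla f_i(x_i[k])}\le G_k$ gives $\norm{s_i[k+1]} \le \sum_{j=1}^{n_\mL}\Bar{C}_{ij}\norm{s_j[k]} + G_k$; summing over $i\in\mL$, swapping the order of summation, and using the column stochasticity of $\Bar{C}$ (i.e. $\sum_{i=1}^{n_\mL}\Bar{C}_{ij}=1$) collapses the double sum and yields exactly $\sum_{i=1}^{n_\mL}\norm{s_i[k+1]} \le n_\mL G_k + \sum_{i=1}^{n_\mL}\norm{s_i[k]}$, which is the first claim. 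For the second claim, I unroll this recursion using $s_i[0]=0$, which gives $\sum_{i=1}^{n_\mL}\norm{s_i[k]} \le n_\mL\sum_{t=0}^{k-1}G_t = n_\mL A\sum_{t=0}^{k-1}\exp(\theta t) = n_\mL A\,\tfrac{\exp(\theta k)-1}{\exp(\theta)-1} \le \tfrac{n_\mL A}{\exp(\theta)-1}\exp(\theta k)$, the last step simply dropping the $-1$ in the numerator. Since $\norm{s_i[k]}\le\sum_{j=1}^{n_\mL}\norm{s_j[k]}$ for every $i$, the claimed per-agent bound follows.

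The one point that needs care — and the only genuine departure from the compact case — is the claim $\norm{x_i[k]}\le\exp(\theta k)$: unlike in \Cref{proposition:s_bound}, where compactness of $\mX$ is given outright, here it must be traced through the algorithm, namely that $x_i[k+1]=\sum_j\Bar{R}_{ij}z_j[k]$ is a convex combination of the $z_j[k]$, each of which is in turn a convex combination of $x_j[k]$ and a point of $\Bar{\mX}_k \subseteq \mX_k$ (whose norm is at most $\exp(\theta k)$), so an induction keeps $\norm{x_i[k]}$ below $\exp(\theta k)$ up to a harmless detail about the initialization absorbed into the definition of $\mX_k$. Beyond this, the argument is routine; the only new arithmetic feature relative to the compact case is that the linear telescoping sum is replaced by a geometric sum, which is precisely why the bound on $\norm{s_i[k]}$ now grows like $\exp(\theta k)$ rather than linearly — and this is exactly what will later force the gradient-tracking set sequence $\mS_k$ in the unbounded setting to grow at least exponentially.
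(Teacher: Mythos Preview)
Your proposal is correct and follows essentially the same approach as the paper: bound $\norm{\nabla f_i(x_i[k])}$ by $G_k=A\exp(\theta k)$ via $L$-Lipschitzness and $\norm{x_i[k]}\le\exp(\theta k)$, then rerun the recursion of \Cref{proposition:s_bound} with $G$ replaced by $G_k$, and finish by summing the resulting geometric series. If anything, you are more explicit than the paper about why $\norm{x_i[k]}\le\exp(\theta k)$ holds (the paper simply invokes ``boundedness of $\Bar{\mX}_k$'' without tracing the convex-combination argument).
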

\begin{proof}
    First, we bound the gradients at time $k$. Using the $L$-smoothness of the cost functions and the boundedness of $\Bar{\mX}_k$, we get
    \begin{align*}
        \norm{\nabla f_i(x_i[k])}= \norm{\nabla f_i(x_i[k]) + \nabla f_i(0) - \nabla f_i(0)} \leq L\exp(\theta k)+\norm{\nabla f_i(0)},
    \end{align*}
    where we used the fact that gradients are $L$-Lipschitz continuous. If we define $G_k \triangleq A \exp(\theta k)$ with $A = L+ \max_i \{\norm{\nabla f_i(0)} \}$, we have $\norm{\nabla f_i(x_i[k])} \leq G_k$ for all $i$ and for all $k\geq 0$. Following the same steps in the proof of \Cref{proposition:s_bound} and using the bound on $\norm{\nabla f_i(x_i[k])}$, we obtain
    \begin{align*}
        \sum_{i=1}^{n_{\mL}} \norm{s_i[k+1]} &\leq n_{\mL} G_k +\sum_{i=1}^{n_{\mL}} \norm{s_i[k]}.
    \end{align*}
    Recall that $G_k \triangleq A \exp(\theta k)$. Since $\sum_{i=1}^{n_{\mL}} \norm{s_i[0]}=0,$ by induction, we get 
    \begin{align*}
        \sum_{i=1}^{n_{\mL}} \norm{s_i[k]} \leq n_{\mL} \sum_{t=0}^{k-1} G_t 
        \leq \frac{n_{\mL}A}{\exp(\theta)-1}\exp(\theta k),
    \end{align*}
    where in the last step, we used the definition of $G_k$ and summed the geometric series. Since $\norm{s_i[k]}\leq \sum_{i=1}^{n_{\mL}} \norm{s_i[k]},$ this bound yields the desired relation.
\end{proof}
Notice that the conditions on the growth of the sequence $\{\mX_k\}$ is different from that of $\{\mS_k\}$. This is because the growth rate of $\{\mS_k\}$ is affected by the bounds on the gradients. Therefore, we have more flexibility in choosing the growth rate of the set sequence $\{\mX_k\}$.

\subsection{Convergence Results}\label{sec:convergence_unbounded}
In this section, we derive convergence results for unbounded optimization problems that are analogous to the convergence results in \Cref{sec:asymptotic_results_bounded} and \Cref{sec:finite_time_results_bounded}. First, we give the growing set sequence $\{\mS_k\}$ that helps us achieve the nominal behavior after some point.
\begin{proposition}
    For all $k\geq 0$, define $\Bar{\mX}_k = \mX \cap \mX_k,$ where $\mX_k = \{x \in \R^d \mid \norm{x} \leq \exp (\theta_1 k)\}$ with $\theta_1 >0$. Similarly, define $\mS_k = \{s \in \R^d \mid \norm{s} \leq \exp(\theta_2 k)\}$ with $\theta_2>\theta_1.$ Let $s_i[0]=0$ for all $i \in \mL$. 
    Define 
    \begin{equation}
        T_{nom} \triangleq \max \left\{T_{max}+ \frac{\ln(2n_{\mL})}{\theta_2}, \frac{\ln(A)}{\theta_2-\theta_1},\frac{\ln \norm{x^*}}{\theta_1} \right\},
        \label{eq:tnom}
    \end{equation}
    where $A = \frac{2n_{\mL}(L+\max_i \{\norm{\nabla f_i(0)} \})^2}{\exp(\theta_1)-1}.$ Then, the RP3 algorithm has the nominal behavior for all $k \geq T_{nom}$ almost surely. 
\label{proposition:nominal_behav_unbounded}
\end{proposition}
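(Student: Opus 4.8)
The plan is to mirror the structure of the bounded case (\Cref{proposition:removing_projection}), replacing the linear bounds there by the exponential bounds established in \Cref{lemma:s_growth_unbounded}, and adding the two extra timing constraints coming from the fact that now both the decision variables $x_i[k]$ (via $\{\mX_k\}$) and the optimal point $x^*$ must be accommodated. Recall that ``nominal behavior'' (\Cref{def:nominal_behavior}) requires three things to hold simultaneously for all $k$ past some time: (i) the mixing weights have stabilized to $\Bar R,\Bar C$ supported on legitimate neighbors only; (ii) the projection onto $\mS_k$ acts as the identity on the $s$-update; and (iii) the projection onto $\Bar{\mX}_k$ acts as the identity, i.e. $x^*\in\mX_k$ so the effective constraint set coincides with $\mX$ near the iterates. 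I would establish each of these in turn and then take $T_{nom}$ to be the maximum of the three corresponding times, which is exactly the expression in \eqref{eq:tnom}.

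First, for (i): by \Cref{corrollary:weight_convergence}, for all $k\ge \tmax$ the weights equal $\Bar R,\Bar C$ almost surely, and $\tmax$ is finite almost surely by \Cref{lemma:learning_trustworthiness}; this is why $\tmax$ (shifted) appears in the max. For (iii): by \Cref{lemma:nominal_conv_unbounded}, once $k\ge \frac{1}{\theta_1}\ln\norm{x^*}$ we have $x^*\in\mX_k$, so $\Bar{\mX}_k$ contains the optimizer and the one-step contraction of \Cref{theorem:nominal_convergence} applies with constraint set $\Bar{\mX}_k$; this accounts for the third term in \eqref{eq:tnom}. For (ii), which is the crux: I would argue exactly as in the proof of \Cref{proposition:removing_projection}. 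For $k\ge\tmax$ the weights are correct, so $d_i[k]=\sum_j \Bar C_{ij}s_j[k]+\nabla f_i(x_i[k])$ and, using $\norm{s_i[k]}\le\norm{d_i[k]}$ together with the one-step growth inequality $\sum_i\norm{s_i[k+1]}\le n_{\mL}G_k+\sum_i\norm{s_i[k]}$ from \Cref{lemma:s_growth_unbounded} (valid now because the iterates lie in $\Bar{\mX}_k$, so $\norm{\nabla f_i(x_i[k])}\le G_k = A'\exp(\theta_1 k)$ with $A'=L+\max_i\norm{\nabla f_i(0)}$), I would unroll the recursion from $k$ back down to $\tmax$. This gives
\[
\norm{d_i[k]}\ \le\ \sum_{i\in\mL}\norm{s_i[\tmax]}\ +\ n_{\mL}\sum_{t=\tmax}^{k-1}G_t\ \le\ n_{\mL}\exp(\theta_2\tmax)\ +\ \frac{n_{\mL}A'}{\exp(\theta_1)-1}\exp(\theta_1 k),
\]
where the first term uses the projection bound $\norm{s_i[\tmax]}\le\exp(\theta_2\tmax)$ and the second sums the geometric series as in \Cref{lemma:s_growth_unbounded}. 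It then suffices to show that the right-hand side is $\le\exp(\theta_2 k)$ once $k\ge\tmax+\frac{\ln(2n_{\mL})}{\theta_2}$ and $k\ge\frac{\ln A}{\theta_2-\theta_1}$ with $A=\frac{2n_{\mL}(A')^2}{\exp(\theta_1)-1}$: the first condition forces $n_{\mL}\exp(\theta_2\tmax)\le\tfrac12\exp(\theta_2 k)$, and the second forces the geometric-series term $\le\tfrac12\exp(\theta_2 k)$ (here one uses $\theta_2>\theta_1$ so that $\exp((\theta_2-\theta_1)k)$ dominates the constant $A$). Hence $d_i[k]\in\mS_k$ and $s_i[k]=\Pi_{\mS_k}[d_i[k]]=d_i[k]$ for all $k>T_{nom}$, establishing (ii).

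Combining (i), (ii), (iii), for all $k\ge T_{nom}$ with $T_{nom}$ as in \eqref{eq:tnom} the weights are $\Bar R,\Bar C$, the $\mS_k$-projection is the identity in the $s$-update, and the $\mX_k$-constraint is inactive at $x^*$ — precisely the conditions of \Cref{def:nominal_behavior} — so the RP3 algorithm is in nominal behavior; since each of the three times is finite almost surely (only $\tmax$ is random), $T_{nom}$ is finite almost surely. The main obstacle is the bookkeeping in the geometric-series estimate for $\norm{d_i[k]}$: one has to be careful that the exponent in $G_t$ is $\theta_1$ (coming from $\mX_k$, not $\mS_k$), that the constant $A$ in the statement bundles the square of $A'$ and the $\exp(\theta_1)-1$ denominator in just the right way to make the two ``$\le\tfrac12\exp(\theta_2 k)$'' splits work, and that the requirement $\theta_2>\theta_1$ is what makes the $\mS_k$-set outgrow the worst-case $s$-growth — the analogue of the condition $\theta>n_{\mL}G$ in the bounded case.
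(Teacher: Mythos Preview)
Your proposal is correct and follows essentially the same approach as the paper: both unroll the $s$-recursion back to $\tmax$, bound $\sum_i\norm{s_i[\tmax]}$ by $n_{\mL}\exp(\theta_2\tmax)$ via the projection, sum the geometric series in $G_t$, split $\exp(\theta_2 k)$ into two halves to absorb the two resulting terms, and append the $\frac{\ln\norm{x^*}}{\theta_1}$ constraint from \Cref{lemma:nominal_conv_unbounded}. Your geometric-series bound is in fact slightly tighter (linear in $A'$ rather than quadratic), so the condition $k\ge\frac{\ln A}{\theta_2-\theta_1}$ with the stated $A=\frac{2n_{\mL}(A')^2}{\exp(\theta_1)-1}$ is more than enough provided $A'\ge 1$; the paper picks up the extra $A'$ factor through a looser intermediate constant $A_2$.
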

\begin{proof}
    Define $G_k \triangleq A_1 \exp(\theta_1 k),$ where $A_1 \triangleq L+\max_i \{\norm{\nabla f_i(0)} \}$ and also define
        $A_2 \triangleq G_{T_{max}} \frac{n_{\mL} A_1}{\exp(\theta_1)-1}.$
    Following the proof of \Cref{proposition:removing_projection}, for $k\geq \tmax$ we have 
    \begin{align*}
        \sum_{i\in \mL} \norm{s_i[k]} &\leq \sum_{i\in \mL} \norm{s_i[\tmax]}+ n_{\mL} \sum_{t=\tmax}^{k-1} G_t 
        \leq n_{\mL} \exp(\theta_2 \tmax) + A_2 \exp (\theta_1 (k-T_{max})),
    \end{align*}
    where in the last step, we summed the geometric series. We want to find the time when 
    \begin{equation}
        \exp(\theta_2 k)> n_{\mL} \exp(\theta_2 T_{max}) + A_2 \exp (\theta_1 (k-T_{max}))
        \label{eq:t_nom_unbounded_ineq}
    \end{equation} is satisfied. We deal with the terms on the right hand side of \eqref{eq:t_nom_unbounded_ineq} separately by splitting $\exp(\theta_2 k)$ into two equal terms. For the first one we write
    \begin{align*}
        \frac{1}{2}\exp(\theta_2 k) &> n_{\mL} \exp(\theta_2 T_{max}),\\
        k &> T_{max} + \frac{\ln(2n_{\mL})}{2}.
    \end{align*}
    For the second term, we have
    \begin{align*}
        \frac{1}{2}\exp(\theta_2 k) &> A_2 \exp (\theta_1 (k-T_{max})) \\
        \exp(\theta_2 k) &> 2G_{T_{max}} \frac{n_{\mL} A_1}{\exp(\theta_1)-1}\exp (\theta_1 (k-T_{max})) \\
        \exp(\theta_2 k) &> A \exp (\theta_1 T_{max}) \exp (\theta_1 (k-T_{max})) \\
        k &> \frac{\ln(A)}{\theta_2-\theta_1},
    \end{align*}
    where $A = \frac{2n_{\mL}(L+\max_i \{\norm{\nabla f_i(x')} \})^2}{\exp(\theta_1)-1}$. Then, for any $k\geq \max \left\{T_{max}+ \frac{\ln(2n_{\mL})}{\theta_2}, \frac{\ln(A)}{\theta_2-\theta_1}  \right\},$ the inequality \eqref{eq:t_nom_unbounded_ineq} will be satisfied. Hence, for any $k \geq \max \left\{T_{max}+ \frac{\ln(2n_{\mL})}{\theta_2}, \frac{\ln(A)}{\theta_2-\theta_1}  \right\}$ the projection operator will coincide with the identity operator. Lastly, we must consider the time when $x^*$ is included in the set $\mX_k$. Therefore, once we choose $T_{nom} \triangleq \max \left\{T_{max}+ \frac{\ln(2n_{\mL})}{\theta_2}, \frac{\ln(A)}{\theta_2-\theta_1},\frac{\ln \norm{x^*}}{\theta_1}   \right\}$, the system reaches the nominal behavior after $T_{nom}.$
\end{proof}
Since the system reaches the nominal behavior, we retrieve the almost sure convergence result.
\begin{corollary}
    Let $\Bar{\mX}_k = \mX \cap \mX_k,$ where $\mX_k = \{x \in \R^d \mid \norm{x} \leq \exp (\theta_1 k)\}$ with $\theta_1 >0$ and $\mS_k = \{s \in \R^d \mid \norm{s} \leq \exp(\theta_2 k)\}$ with $\theta_2>\theta_1.$ Then, \Cref{theorem:as_convergence} holds.
    \label{cor:as_convergence_unbounded}
\end{corollary}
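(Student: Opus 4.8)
The plan is to replay the proof of \Cref{theorem:as_convergence}, replacing each bounded-set ingredient by its unbounded-case counterpart: the fixed compact set $\mX$ (\Cref{assumption:compact_convex_const_set}) is replaced by the growing closed convex sets $\Bar{\mX}_k=\mX\cap\mX_k$, so that only \Cref{assumption:closed_convex_const_set} is used; the set sequence $\{\mS_k\}$ is now the exponential one; and the random time $T_{nom}$ is the quantity in \eqref{eq:tnom} rather than the bounded-case one. The three moves are: (i) show the algorithm becomes nominal at a finite random time; (ii) argue that from then on it coincides with a Projected Push-Pull recursion carrying the time-varying constraint sets $\Bar{\mX}_k$, to which \Cref{theorem:nominal_convergence} applies via \Cref{lemma:nominal_conv_unbounded}; and (iii) iterate the one-step contraction to get a geometric bound, hence almost sure convergence.

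First I would invoke \Cref{proposition:nominal_behav_unbounded}: with $\mX_k=\{x:\norm{x}\le\exp(\theta_1 k)\}$, $\mS_k=\{s:\norm{s}\le\exp(\theta_2 k)\}$ for $\theta_2>\theta_1$, and $s_i[0]=0$, the RP3 algorithm has the nominal behavior for all $k\ge T_{nom}$ almost surely. Since $\tmax$ is finite almost surely (\Cref{lemma:learning_trustworthiness}) and $T_{nom}$ in \eqref{eq:tnom} is a deterministic function of $\tmax$ and the fixed constants $n_{\mL},L,\{\norm{\nabla f_i(0)}\},\norm{x^*},\theta_1,\theta_2$, the time $T_{nom}$ is finite almost surely. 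Consequently, for all $k\ge T_{nom}$: the mixing weights equal $\Bar R,\Bar C$ (\Cref{corrollary:weight_convergence}); the projection onto $\mS_k$ is the identity, so the accumulation update \eqref{eq:RP3_update_s} restores the gradient tracking property — with $y_i[k]\triangleq s_i[k]-s_i[k-1]$ — irrespective of how the $s_i$-values were corrupted for $k<T_{nom}$; and, because $T_{nom}\ge\frac{1}{\theta_1}\ln\norm{x^*}$ owing to the maximum in \eqref{eq:tnom}, we have $x^*\in\Bar{\mX}_k$.

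Restricted to $k\ge T_{nom}$, \Cref{alg:RP3} is therefore exactly the Projected Push-Pull recursion \eqref{eq:PPP} with the projection onto $\Bar{\mX}_k$ in place of the one onto $\mX$, and with initialization $x_i'[0]=x_i[T_{nom}]$, $z_i'[0]=z_i[T_{nom}]$, $y_i'[0]=s_i[T_{nom}]-s_i[T_{nom}-1]$. By \Cref{lemma:nominal_conv_unbounded}, once $x^*\in\Bar{\mX}_k$ the one-step contraction \eqref{eq:composite-error-relation} of \Cref{theorem:nominal_convergence} holds for this recursion (its proof needs only closedness and convexity of the constraint sets, which $\Bar{\mX}_k$ enjoys), so $\bolde[k]\le M(\eta,\lambda)^{k-T_{nom}}\bolde[T_{nom}]$ for all $k>T_{nom}$ almost surely. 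Since $\rho(M(\eta,\lambda))<1$ and $\bolde[T_{nom}]$ is an almost surely finite random vector — all iterates at the finite random time $T_{nom}$ being finite, malicious inputs that avoid immediate detection lying in $\Bar{\mX}_k$ and $\mS_k$ at each finite step — letting $k\to\infty$ gives $\normphi{\boldx[k]-\boldx^*}\to0$, hence $x_i[k]\to x^*$ for every $i\in\mL$, almost surely. The one point deserving care is this handoff: one must ensure that using the \emph{time-varying} sets $\Bar{\mX}_k$ rather than a fixed set does not spoil the contraction relied on in \Cref{theorem:nominal_convergence}, and that the time from which the contraction applies is $T_{nom}$ itself; \Cref{lemma:nominal_conv_unbounded} handles the former, while the maximum in \eqref{eq:tnom} (which dominates both $\frac{\ln\norm{x^*}}{\theta_1}$ and the $\mS_k$-deactivation time $\frac{\ln A}{\theta_2-\theta_1}$) handles the latter, after which the argument is a verbatim repeat of the proof of \Cref{theorem:as_convergence}.
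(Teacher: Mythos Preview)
Your proof is correct and follows the same approach as the paper, which simply observes that since \Cref{proposition:nominal_behav_unbounded} guarantees the system reaches nominal behavior at the finite random time $T_{nom}$, the almost sure convergence result of \Cref{theorem:as_convergence} is recovered. You have fleshed out the details carefully—in particular the role of \Cref{lemma:nominal_conv_unbounded} for the time-varying sets $\Bar{\mX}_k$ and the fact that the maximum in \eqref{eq:tnom} ensures both $x^*\in\Bar{\mX}_k$ and deactivation of the $\mS_k$-projection by time $T_{nom}$—but the underlying logic is identical.
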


The convergence in mean and the convergence in expectation results are more tricky to get in this case. This is due to the potential for malicious agents to exponentially amplify their impact on the system over time, as the bounds on the variables themselves expand exponentially. Therefore, we need to ensure that the learning rate of the agents is faster. First, we show the following bound on $\normphi{\boldx[k] - \boldx^*}$ which will be useful for proving the convergence in the $r$-th mean sense.
\begin{lemma}
    Assume that agents construct the sets $\mX_k = \{x \in \R^d \mid \norm{x} \leq \exp (\theta_1 k)\}$ with $\theta_1 >0$ and $\mS_k = \{s \in \R^d \mid \norm{s} \leq \exp(\theta_2 k)\}$ with $\theta_2>\theta_1.$ Then, for all $k\geq0$ and for a given $r$ with $r\geq 1$, there exists almost surely a constant $c>0$ that depends on $r$ such that
    \begin{equation}
        \normphi{\boldx[k] - \boldx^*}^r \leq c\exp(r \theta_2 \tnom),
    \end{equation}
    where $\tnom$ is as defined in \eqref{eq:tnom}.
    \label{lemma:error_bound_unbounded}
\end{lemma}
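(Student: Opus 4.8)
The goal is to produce an almost-sure pathwise bound of the form $\normphi{\boldx[k] - \boldx^*}^r \leq c\exp(r\theta_2 \tnom)$, uniformly in $k$. The natural strategy is to split the time horizon at $\tnom$ (the time the algorithm reaches nominal behavior, which exists and is finite almost surely by \Cref{proposition:nominal_behav_unbounded}) and to bound the two regimes separately, then take the larger of the two bounds.

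\textbf{Step 1: the post-nominal regime $k \geq \tnom$.} Once the algorithm is in nominal behavior, \Cref{theorem:as_convergence} (via its unbounded-case analog \Cref{cor:as_convergence_unbounded}) gives the elementwise contraction $\bolde[k] \leq M(\eta,\lambda)^{k-\tnom}\bolde[\tnom]$, with $\rho(M(\eta,\lambda)) < 1$. In particular $\normphi{\boldx[k]-\boldx^*}$ is bounded by a fixed multiple of the norm of $\bolde[\tnom]$ for all $k \geq \tnom$ — the contraction factor is at most $1$ entrywise in a suitable sense, or at worst bounded by a constant depending only on $M$. So it suffices to bound all three components of $\bolde[\tnom]$.

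\textbf{Step 2: bounding $\bolde[\tnom]$ (the ``worst case at the nominal time'').} This is the analog of \Cref{lemma:worst_case_err_bounds} for the unbounded case. The decision variables satisfy $x_i[\tnom] \in \Bar{\mX}_{\tnom} \subseteq \mX_{\tnom}$, so $\norm{x_i[\tnom]} \leq \exp(\theta_1 \tnom)$; since $\tnom \geq \frac{\ln\norm{x^*}}{\theta_1}$ we also have $\norm{x^*}\leq \exp(\theta_1 \tnom)$, hence $\normphi{\boldx[\tnom]-\boldx^*} \leq 2\exp(\theta_1\tnom)$ and likewise $D(\boldx[\tnom],\phi)\leq 2\exp(\theta_1\tnom)$ after using stochasticity of $\phi$. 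For the gradient-tracking error $S(\boldy[\tnom],\pi)$: we have $y_i[\tnom] = s_i[\tnom]-s_i[\tnom-1]$ with $s_i[\tnom] \in \mS_{\tnom}$ due to the projection, so $\norm{y_i[\tnom]} \leq 2\exp(\theta_2 \tnom)$, and then $S(\boldy[\tnom],\pi) \leq \frac{2(n_\mL+1)}{\min(\pi)}\exp(\theta_2\tnom)$ exactly as in the proof of \Cref{lemma:worst_case_err_bounds}. Since $\theta_2 > \theta_1$, all three are dominated by a constant times $\exp(\theta_2\tnom)$, giving $\normphi{\boldx[k]-\boldx^*} \leq c_1 \exp(\theta_2\tnom)$ for $k\geq\tnom$; raising to the $r$-th power absorbs the constant into $c$.

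\textbf{Step 3: the pre-nominal regime $k < \tnom$.} Here one bounds $\norm{x_i[k]}$ directly for $k\leq\tnom$. Since $x_i[k]\in\Bar{\mX}_k \subseteq \mX_k$ we get $\norm{x_i[k]}\leq\exp(\theta_1 k)\leq\exp(\theta_1\tnom)\leq\exp(\theta_2\tnom)$, and with $\norm{x^*}\leq\exp(\theta_1\tnom)$ this gives $\normphi{\boldx[k]-\boldx^*}\leq 2\exp(\theta_2\tnom)$ for all $k<\tnom$ as well. Combining the two regimes and setting $c = \max\{c_1^r, 2^r\}$ (or a slightly larger explicit constant depending on $M(\eta,\lambda)$ and $\min(\pi)$) yields the claim, with the almost-sure qualifier inherited from the almost-sure finiteness of $\tnom$.

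\textbf{Main obstacle.} The delicate point is Step 1: in the post-nominal regime $k$ can be arbitrarily large relative to $\tnom$, and $\bolde[k] \leq M^{k-\tnom}\bolde[\tnom]$ only immediately gives a bound of the form $\|M^{k-\tnom}\|\cdot\|\bolde[\tnom]\|$. One must invoke $\rho(M(\eta,\lambda))<1$ to conclude $\sup_{m\geq 0}\|M(\eta,\lambda)^m\| =: c_M < \infty$, so that the bound becomes uniform in $k$ with a constant $c_M$ that does \emph{not} depend on the (random) value of $\tnom$ — only on the fixed step sizes and graph parameters. This is where care is needed, because a naive entrywise argument could leak a spurious dependence on $k-\tnom$; the resolution is precisely that a matrix with spectral radius below $1$ has uniformly bounded powers. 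Everything else is a routine combination of set-membership bounds and the triangle inequality.
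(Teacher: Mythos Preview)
Your proposal is correct and follows essentially the same two-regime strategy as the paper's proof. The only technical difference is the split point: the paper splits at $\tnom + c_1$, where $c_1$ is chosen so that every entry of $M(\eta,\lambda)^m$ falls below $\frac{\min(\pi)}{2(n_{\mL}+1)}$ for $m\geq c_1$, which cancels the coefficient in the gradient-tracking bound and yields the explicit constant~$3$; you instead split at $\tnom$ and invoke $\sup_{m\geq 0}\|M(\eta,\lambda)^m\|<\infty$ directly, obtaining a less explicit but equally valid constant $c_M$. Both resolutions of the ``main obstacle'' you identified are correct, and the pre-nominal bound via set membership is identical in the two arguments.
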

\begin{proof}
    Define $c_1$ such that $[M(\eta, \lambda)^k]_{ij} \leq \frac{\min (\pi)}{2(n_{\mL}+1)}$ for all elements of the matrix $i, j$ and $k\geq c_1$. Such a $c_1$ always exists since $\rho(M(\eta, \lambda))<1.$
    We divide the proof in two parts. First, consider the case where $k> \tnom + c_1$. Then, by \Cref{proposition:nominal_behav_unbounded} we have 
    \begin{align*}
        \normphi{\boldx[k] - \boldx^*} &\leq [M(\eta, \lambda)^{k-\tnom}]_{11} 2\exp(\theta_1 \tnom) \\
        &+ [M(\eta, \lambda)^{k-\tnom}]_{12} 2\exp(\theta_1 \tnom) \\
        &+ [M(\eta, \lambda)^{k-\tnom}]_{13} \frac{2(n_{\mL}+1)}{\min (\pi)} \exp(\theta_2 \tnom) \\
        &\leq 2\exp(\theta_1 \tnom) + \exp(\theta_2 \tnom) \\
        &\leq 3\exp(\theta_2 \tnom).
    \end{align*}
    Therefore, we have 
    \begin{equation}
        \normphi{\boldx[k] - \boldx^*}^r \leq 3^r\exp(r \theta_2 \tnom).
        \label{eq:bound_k_geq_tnom}
    \end{equation}
    Next, consider the case where $k\leq \tnom + c_1$ Then, we have 
    \begin{align*}
        \normphi{\boldx[k] - \boldx^*} = \sqrt{\sum_{i\in \mL} \phi_i \normsq{x_i[k]-x^*}} 
        &\overset{(a)}{\leq} \exp (\theta_1 k) + \norm{x^*}\\
        &\leq (\norm{x^*}+1)\exp (\theta_1 k),
    \end{align*}
    where in $(a)$ we used the bounds on $x_i[k]$ and the stochasticity of $\phi$. Using this bound, we get $\normphi{\boldx[k] - \boldx^*}^r \leq (\norm{x^*}+1)^r \exp (r \theta_1 k) \leq (\norm{x^*}+1)^r \exp (r \theta_2 k)$. By choosing $c = \max \{3^r, (\norm{x^*}+1)^r\}$, we obtain the desired equality.
\end{proof}
In the next result, we show that carefully chosen $\theta_1$ and $\theta_2$ can guarantee convergence in the $r$-th mean.
\begin{proposition}
    Let $\Bar{\mX}_k = \mX \cap \mX_k,$ where $\mX_k = \{x \in \R^d \mid \norm{x} \leq \exp (\theta_1 k)\}$ with $\theta_1 >0$ and $\mS_k = \{s \in \R^d \mid \norm{s} \leq \exp(\theta_2 k)\}$. Let $r \geq 1$ be given. Choose $\theta_1$ and $\theta_2$ such that $r \theta_2 < \min \{ 2E_{\mL}^2, 2E_{\mM}^2\}$ and $\theta_1<\theta_2.$ Then, \Cref{theorem:convergence_in_mean} holds true for this given $r$.
    \label{proposition:convergence_in_mean_unbounded}
\end{proposition}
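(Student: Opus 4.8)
The plan is to invoke the Dominated Convergence Theorem \cite{nlar2011}, exactly as in the proof of \Cref{theorem:convergence_in_mean}, but now with a \emph{random} dominating bound instead of a deterministic one. By \Cref{cor:as_convergence_unbounded} we already have $\boldx[k]\to\boldx^*$ almost surely, hence $\normphi{\boldx[k]-\boldx^*}^r\to 0$ almost surely, so it suffices to exhibit an integrable random variable $W$, independent of $k$, with $\normphi{\boldx[k]-\boldx^*}^r\leq W$ almost surely for all $k\geq 0$. \Cref{lemma:error_bound_unbounded} furnishes exactly such a candidate, namely $W\triangleq c\exp(r\theta_2\tnom)$ with $c=\max\{3^r,(\norm{x^*}+1)^r\}$ deterministic, so the whole argument reduces to showing $\E[W]<\infty$, i.e., that the moment generating function of $\tnom$ is finite at the point $r\theta_2$.

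For this, recall from \eqref{eq:tnom} that $\tnom=\max\{T_{max}+\theta_2^{-1}\ln(2n_{\mL}),\ (\theta_2-\theta_1)^{-1}\ln A,\ \theta_1^{-1}\ln\norm{x^*}\}$, where the last two arguments are deterministic constants and only $T_{max}$ is random. Using $\exp(r\theta_2\max\{a,b,c\})\leq\exp(r\theta_2 a)+\exp(r\theta_2 b)+\exp(r\theta_2 c)$ gives
\[
\E[W]\leq 2n_{\mL}c\,\E\!\left[\exp(r\theta_2 T_{max})\right]+c\left(A^{r\theta_2/(\theta_2-\theta_1)}+\norm{x^*}^{r\theta_2/\theta_1}\right),
\]
so finiteness of $\E[W]$ is equivalent to $\E[\exp(r\theta_2 T_{max})]<\infty$.

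To establish the latter I would expand $\E[\exp(r\theta_2 T_{max})]=\sum_{k\geq 0}\exp(r\theta_2 k)\Pr(T_{max}=k)$ and bound $\Pr(T_{max}=k)\leq\Pr(T_{max}>k-1)\leq\min\{p_e(k-\Delta),1\}$ via \Cref{proposition:tmax_prob}. Splitting the sum at a finite threshold (past which $p_e(k-\Delta)<1$), the head is a finite sum of finite terms, while the tail is controlled by a geometric series: by \eqref{eq:p_e}, $p_e(k-\Delta)$ decays like a constant times $\exp\!\big(-2(k-\Delta)\min\{E_{\mL}^2,E_{\mM}^2\}\big)$, so the tail summands are bounded by a constant times $\exp\!\big((r\theta_2-\min\{2E_{\mL}^2,2E_{\mM}^2\})k\big)$. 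The hypothesis $r\theta_2<\min\{2E_{\mL}^2,2E_{\mM}^2\}$ makes this common ratio strictly less than $1$, so the tail converges, $\E[\exp(r\theta_2 T_{max})]<\infty$, hence $\E[W]<\infty$; the Dominated Convergence Theorem together with the almost sure convergence then yields $\lim_{k\to\infty}\E[\normphi{\boldx[k]-\boldx^*}^r]=0$, which is the assertion of \Cref{theorem:convergence_in_mean}.

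The only genuinely delicate point is this last tail estimate: one must ensure that the exponential growth rate $r\theta_2$ inherited from the expanding sets $\{\mS_k\}$ is strictly smaller than the exponential decay rate $\min\{2E_{\mL}^2,2E_{\mM}^2\}$ of the learning protocol's misclassification probability. This is precisely the trade-off dictating the choice of $\theta_2$, and it is the reason a merely linearly growing $\mS_k$ — adequate in the compact-constraint case of \Cref{theorem:convergence_in_mean} — does not suffice here, where the decision variables themselves are confined only to the exponentially growing sets $\{\mX_k\}$.
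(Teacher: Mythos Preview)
Your proposal is correct and follows essentially the same route as the paper: invoke \Cref{lemma:error_bound_unbounded} to dominate $\normphi{\boldx[k]-\boldx^*}^r$ by $c\exp(r\theta_2\tnom)$, then show this has finite expectation via the tail bounds from \Cref{proposition:tmax_prob} and the hypothesis $r\theta_2<\min\{2E_{\mL}^2,2E_{\mM}^2\}$, and conclude by Dominated Convergence together with \Cref{cor:as_convergence_unbounded}. The only cosmetic differences are that the paper works directly with $\Pr(\tnom=k)$ and the $p_c$ bound, whereas you first split the $\max$ in \eqref{eq:tnom} to reduce to $\E[\exp(r\theta_2 T_{max})]$ and use the $p_e$ bound; both lead to the same geometric tail estimate.
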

\begin{proof}
    We prove this proposition by utilizing the almost sure convergence we established in \Cref{cor:as_convergence_unbounded} via the Dominated Convergence Theorem \cite{nlar2011}.
    By \Cref{lemma:error_bound_unbounded}, we have 
    \begin{equation*}
        \normphi{\boldx[k] - \boldx^*}^r \leq c\exp(r \theta_2 \tnom),
    \end{equation*}
    for some $c>0.$ Then, if we show that $c\exp(r \theta_2 \tnom)$ has a finite expectation, i.e, $\E[c\exp(r \theta_2 \tnom)]<\infty$, the desired result is obtained. Observe that the constant $c$ in \Cref{lemma:error_bound_unbounded} does not depend on $\tnom$. By the law of total expectation we have 
    \begin{align*}
        \E[c\exp(r \theta_2 \tnom)] = \E[\E[c\exp(r \theta_2 \tnom)|\tnom]]
        &= \sum_{k=0}^{\infty} \E[c\exp(r \theta_2 \tnom)|\tnom=k] \Pr(\tnom=k) \\
        &\leq \sum_{k=0}^{\infty} c\exp(r \theta_2 k)\Pr(\tnom=k).
    \end{align*}
    By definition of $\tnom$ given in \eqref{eq:tnom}, $\Pr(\tnom=k) = \Pr(\tmax=k-\frac{\ln(2n_{\mL})}{\theta_2})$ for all $k\geq k'$ for some $k'$. Then, we can use the bounds provided in \Cref{proposition:tmax_prob} to obtain
    \begin{align*}
        \Pr(T_{max}=k) &\leq \min \{p_c(k-\Delta),1\},
    \end{align*}
    where $p_c(k) = N_{\mL}\exp(-2kE_{\mL}^2)+N_{\mL}\exp(-2kE_{\mM}^2)$ to bound the probability $\Pr(\tnom=k).$ Therefore, for all $k\geq k'$, we have $\Pr(\tnom=k) \leq c_1 (\exp(-2kE_{\mL}^2) + \exp(-2kE_{\mM}^2))$ for some constant $c_1>0.$
    
    Using this bound we have 
    \begin{align*}
    \sum_{k=0}^{\infty} c\exp(r \theta_2 k)\Pr(\tnom=k)
        &=\sum_{k=0}^{k'} c\exp(r \theta_2 k)\Pr(\tnom=k) + \sum_{k=k'+1}^{\infty} c\exp(r \theta_2 k)\Pr(\tnom=k) \\
        &\leq \sum_{k=0}^{k'} c\exp(r \theta_2 k)\Pr(\tnom=k)  +\sum_{k=k'+1}^{\infty}c\exp(r \theta_2 k)c_1 (\exp(-2kE_{\mL}^2) + \exp(-2kE_{\mM}^2)).
    \end{align*}
    The first term is a finite sum and thus bounded. The second term is equal to 
    \begin{align*}
        c\cdot c_1 \sum_{k=k'+1}^{\infty} \exp((r \theta_2 - 2E_{\mL}^2) k) + \exp((r \theta_2 -2 E_{\mM}^2)k).
    \end{align*}
    If we choose $r \theta_2 < \min \{ 2E_{\mL}^2, 2E_{\mM}^2\},$ the second term is also finite, which concludes the proof.
\end{proof}

Finally, we retrieve the expected convergence rate result of \Cref{theorem:exp_conv_rate}. We show that with appropriately chosen growth rates of the sets $\{\mX_k\}$ and $\{\mS_k\}$, our algorithm attains geometric convergence rate in expectation. 

\begin{theorem}
    Let $\Bar{\mX}_k = \mX \cap \mX_k,$ where $\mX_k = \{x \in \R^d \mid \norm{x} \leq \exp (\theta_1 k)\}$ with $\theta_1 >0$ and $\mS_k = \{s \in \R^d \mid \norm{s} \leq \exp(\theta_2 k)\}$ with $\theta_2>\theta_1.$ Define $\Delta_T \triangleq \Delta + \frac{\ln(A)}{\theta_2-\theta_1},$ where $A = \frac{2n_{\mL}(L+\max_i \{\norm{\nabla f_i(0)} \})^2}{\exp(\theta_1)-1}$ and $\Delta$ is as defined in \Cref{proposition:tmax_prob}.
    Then, for all $k\geq \frac{\ln \norm{x^*}}{\theta_1}$, we have
    \begin{equation}
        \begin{aligned}
        \E[\bolde[k]] &\leq
         M(\eta, \lambda)^{k-\lfloor k/2 \rfloor} (I-\!\!M(\eta, \lambda))^{-1}\!\!  
        \begin{bmatrix}
            2\exp(\theta_1 \lfloor k/2 \rfloor) \\ 2\exp(\theta_1 \lfloor k/2 \rfloor) \\ \frac{2(n_{\mL}+1)}{\min (\pi)} \exp(\theta_2 \lfloor k/2 \rfloor\!)\!
        \end{bmatrix} \\ &+\min \{p_e(\lfloor k/2 \rfloor+1-\Delta_T),1\} \begin{bmatrix}
            2\exp(\theta_1 k) \\ 2\exp(\theta_1 k) \\ \frac{2(n_{\mL}+1)}{\min (\pi)} \exp(\theta_2 k)
        \end{bmatrix}.
    \end{aligned}
    \label{eq:exp_conv_rate_unbounded}
    \end{equation}
    Moreover, if we have $$\theta_2 < \min \{-\ln(\rho(M(\eta, \lambda))), E_{\mL}^2,  E_{\mM}^2 \},$$ then the expected convergence is exponential in $k$ for a sufficiently large $k$.
    \label{theorem:expected_conv_rate_unbounded}
\end{theorem}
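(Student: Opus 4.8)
The plan is to follow the argument of \Cref{theorem:exp_conv_rate} almost verbatim, with two substitutions: the random time is now $\tnom$ from \eqref{eq:tnom} (which by \Cref{proposition:nominal_behav_unbounded} is finite almost surely and after which the algorithm has nominal behavior), and the constant/linear worst-case error bounds of \Cref{lemma:worst_case_err_bounds} are replaced by exponentially growing ones. So the first step is the unbounded analogue of \Cref{lemma:worst_case_err_bounds}. Since the $z$-updates are now projected onto $\Bar{\mX}_k=\mX\cap\mX_k$ and malicious agents must also keep their transmitted $z$-values in $\Bar{\mX}_k$ (else they are detected), every shared decision value has norm at most $\exp(\theta_1 k)$; as $x_i[k]$ is a convex combination of such values, $\norm{x_i[k]}\le\exp(\theta_1 k)$, and since $s_i[k]\in\mS_k$ we get $\norm{y_i[k]}=\norm{s_i[k]-s_i[k-1]}\le 2\exp(\theta_2 k)$. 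Feeding these into the definitions \eqref{eq:optimality-error}--\eqref{eq:gradient-tracking-error} and using the stochasticity of $\phi$ and $\pi$ exactly as in \Cref{lemma:worst_case_err_bounds} yields $\normphi{\boldx[k]-\boldx^*}\le\exp(\theta_1 k)+\norm{x^*}$, $D(\boldx[k],\phi)\le 2\exp(\theta_1 k)$, and $S(\boldy[k],\pi)\le\frac{2(n_{\mL}+1)}{\min(\pi)}\exp(\theta_2 k)$; for $k\ge\frac{\ln\norm{x^*}}{\theta_1}$ the first is at most $2\exp(\theta_1 k)$.

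Next I would condition on $\tnom$ and apply the law of total expectation, splitting on $\{\tnom\le\lfloor k/2\rfloor\}$ versus $\{\tnom>\lfloor k/2\rfloor\}$ as in \eqref{eq:law_iterative_exp}. On $\{\tnom=t\}$ the contraction \eqref{eq:err_bound_tnom} still holds, because \Cref{proposition:nominal_behav_unbounded} reduces the dynamics after $\tnom$ to Projected Push-Pull re-initialized at time $\tnom$ (cf.\ \Cref{cor:as_convergence_unbounded}), so $\bolde[k]\le M(\eta,\lambda)^{k-t}\bolde[t]$. Bounding $\Pr(\tnom=t)\le1$, noting $\Pr(\tnom=t)=0$ for $t<\frac{\ln\norm{x^*}}{\theta_1}$ since $\tnom\ge\frac{\ln\norm{x^*}}{\theta_1}$ deterministically, and substituting the worst-case bounds at $t\le\lfloor k/2\rfloor$, the first sum is at most $\left(\sum_{t=0}^{\lfloor k/2\rfloor}M(\eta,\lambda)^{k-t}\right)$ applied to the vector $(2\exp(\theta_1\lfloor k/2\rfloor),\,2\exp(\theta_1\lfloor k/2\rfloor),\,\frac{2(n_{\mL}+1)}{\min(\pi)}\exp(\theta_2\lfloor k/2\rfloor))^\intercal$. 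Factoring out $M(\eta,\lambda)^{k-\lfloor k/2\rfloor}$ and replacing the remaining finite matrix geometric series by $\sum_{t\ge0}M(\eta,\lambda)^t=(I-M(\eta,\lambda))^{-1}$ — valid since $\rho(M(\eta,\lambda))<1$ by \Cref{theorem:nominal_convergence} and $M(\eta,\lambda)$ is entrywise nonnegative (cf.\ \cite[Theorem 3.15]{Varga2000}) — produces the first term of \eqref{eq:exp_conv_rate_unbounded}. For the second sum I would bound $\E[\bolde[k]\mid\tnom>\lfloor k/2\rfloor]$ by the worst-case bounds at time $k$, i.e.\ by $(2\exp(\theta_1 k),\,2\exp(\theta_1 k),\,\frac{2(n_{\mL}+1)}{\min(\pi)}\exp(\theta_2 k))^\intercal$, and bound the probability: unwinding \eqref{eq:tnom}, on $\{\tnom>\lfloor k/2\rfloor\}$ we must have $\tmax+\frac{\ln(2n_{\mL})}{\theta_2}>\lfloor k/2\rfloor$ whenever $\lfloor k/2\rfloor$ exceeds the two deterministic terms of \eqref{eq:tnom} (and when it does not, $\Pr(\tnom>\lfloor k/2\rfloor)=1$ and the claimed bound holds trivially since $\min\{p_e(\cdot),1\}\le1$); using $\frac{\ln(2n_{\mL})}{\theta_2}\le\frac{\ln(A)}{\theta_2-\theta_1}$ this gives $\tmax>\lfloor k/2\rfloor-\frac{\ln(A)}{\theta_2-\theta_1}$, so \Cref{proposition:tmax_prob} yields $\Pr(\tnom>\lfloor k/2\rfloor)\le\min\{p_e(\lfloor k/2\rfloor+1-\Delta_T),1\}$ with $\Delta_T=\Delta+\frac{\ln(A)}{\theta_2-\theta_1}$. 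Adding the two contributions gives \eqref{eq:exp_conv_rate_unbounded}.

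For the geometric-rate claim I would read off the asymptotics of \eqref{eq:exp_conv_rate_unbounded}. The first term decays like $\rho(M(\eta,\lambda))^{k-\lfloor k/2\rfloor}$ (up to a polynomial in $k$, since $M(\eta,\lambda)$ may fail to be diagonalizable) times a factor at most $\exp(\theta_2\lfloor k/2\rfloor)$, hence like $(\rho(M(\eta,\lambda))\exp(\theta_2))^{k/2}$, which tends to $0$ exactly when $\theta_2<-\ln\rho(M(\eta,\lambda))$. The second term decays like $p_e(\lfloor k/2\rfloor+1-\Delta_T)\exp(\theta_2 k)$, and from \eqref{eq:p_e} with argument $\approx k/2$ this behaves like $\exp((\theta_2-E_{\mL}^2)k)+\exp((\theta_2-E_{\mM}^2)k)$, vanishing when $\theta_2<\min\{E_{\mL}^2,E_{\mM}^2\}$. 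Under the stated hypothesis $\theta_2<\min\{-\ln\rho(M(\eta,\lambda)),E_{\mL}^2,E_{\mM}^2\}$ both terms are geometrically small, so $\E[\bolde[k]]$ decays geometrically for large $k$.

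I expect the delicate part to be exactly this race between the exponential growth of the worst-case error (in $\exp(\theta_1 k)$ and $\exp(\theta_2 k)$) and the two exponential decays — the matrix contraction $M(\eta,\lambda)^{\Theta(k)}$ and the tail $\Pr(\tmax>\Theta(k))$ — each of which must individually outrun the growth, which is precisely what the three-way bound on $\theta_2$ encodes. A secondary source of friction is the bookkeeping needed to reconcile the three-term maximum in \eqref{eq:tnom} (including the edge cases where $\lfloor k/2\rfloor$ lies below $\frac{\ln\norm{x^*}}{\theta_1}$ or $\frac{\ln(A)}{\theta_2-\theta_1}$, where the probability bound is trivially $1$) and the floor functions with the single clean offset $\Delta_T$ appearing in the statement; everything else is the same matrix-recursion-plus-total-expectation machinery already used for \Cref{theorem:exp_conv_rate}.
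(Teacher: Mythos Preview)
Your proposal is correct and follows essentially the same approach as the paper: replace the worst-case bounds of \Cref{lemma:worst_case_err_bounds} by their exponentially growing analogues, then rerun the total-expectation-plus-matrix-geometric-series argument of \Cref{theorem:exp_conv_rate} with $\tnom$ from \eqref{eq:tnom} in place of the earlier $T_{nom}$, and finally compare the growth $\exp(\theta_2 k)$ against the two decays $\rho(M(\eta,\lambda))^{k/2}$ and $p_e(k/2)$ to extract the three-way condition on $\theta_2$. If anything, your treatment of the tail probability $\Pr(\tnom>\lfloor k/2\rfloor)$ via case analysis on which term of the max in \eqref{eq:tnom} is active is more explicit than the paper's one-line reduction ``$T_{nom}-T_{max}$ can be at most $\frac{\ln(A)}{\theta_2-\theta_1}$,'' though you should double-check that the inequality $\frac{\ln(2n_{\mL})}{\theta_2}\le\frac{\ln(A)}{\theta_2-\theta_1}$ you invoke is actually guaranteed by the standing assumptions.
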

\begin{proof}
    First, notice that $T_{nom}-T_{max}$ can be at most $\frac{\ln(A)}{\theta_2-\theta_1}$ from the definition of $T_{nom}.$ Therefore, we can use the inequality $\Pr(T_{nom}>k-1) \leq \min \{p_e(k+1-\Delta_T),1\}$ which follows from \Cref{proposition:tmax_prob}. \eqref{eq:exp_conv_rate_unbounded} can be derived using the same steps as the proof of \Cref{theorem:exp_conv_rate}. To show the exponential decrease, we need to show that both terms on the right-hand side of \eqref{eq:exp_conv_rate_unbounded} are exponentially decreasing with $k.$ For the first term, we have
    \begin{align*}
        M(\eta, \lambda)^{k-\lfloor k/2 \rfloor} (I-M(\eta, \lambda))^{-1} \begin{bmatrix}
            2\exp(\theta_1 \lfloor k/2 \rfloor) \\ 2\exp(\theta_1 \lfloor k/2 \rfloor) \\ \frac{2(n_{\mL}+1)}{\min (\pi)} \exp(\theta_2 \lfloor k/2 \rfloor)
        \end{bmatrix}  
        &\leq \exp(\theta_2 \lfloor k/2 \rfloor)M(\eta, \lambda)^{k-\lfloor k/2 \rfloor} v, \\
        &\leq (\exp(\theta_2)M(\eta, \lambda))^{k-\lfloor k/2 \rfloor} v
    \end{align*}
    where $ v= (I-M(\eta, \lambda))^{-1} \begin{bmatrix}
            2 \\
            2 \\ 
            \frac{2(n_{\mL}+1)}{\min (\pi)}
        \end{bmatrix}$. Here, we used the fact that $\theta_2 \geq \theta_1.$ Hence, guaranteeing that $\rho(\exp(\theta_2)M(\eta, \lambda))<1$ guarantees the exponential decrease for this term. For the second term, we have
    \begin{align*}
        \min \{p_e(\lfloor k/2 \rfloor+1-\Delta_T),1\} \begin{bmatrix}
            2\exp(\theta_1 k) \\ 2\exp(\theta_1 k) \\ \frac{2(n+1)}{\min (\pi)} \exp(\theta_2 k)
        \end{bmatrix} 
        &\leq \min \{p_e(k/2-\Delta_T),1\} \exp(\theta_2 k)\begin{bmatrix}
            2\\ 2 \\ \frac{2(n_{\mL}+1)}{\min (\pi)}
        \end{bmatrix}.
    \end{align*}
    For a sufficiently large $k,$ we get an exponential decrease from $p_e(k/2-\Delta_T).$ Hence, we only need to guarantee that the exponential decrease rate we get from $p_e(k/2-\Delta_T)$ is faster than the increase rate of $\exp(\theta_2 k).$ By using the definition of $p_e(k)$ given in \eqref{eq:p_e}, we get the conditions $\theta_2<E_{\mL}^2$ and $\theta_2 < \E_{\mM}^2.$ Combining all three conditions gives us the desired result.
\end{proof}
Different from the analogous result with the compact constraint set $\mX$ in \Cref{theorem:exp_conv_rate}, Theorem~\ref{theorem:expected_conv_rate_unbounded} shows that we have more control over the errors coming from the malicious agents by choosing an appropriate growth rate for the set sequence $\{ \mX_k \}$.
\begin{remark}
    The conditions on the growth rates $\theta_1$ and $\theta_2$ in \Cref{theorem:expected_conv_rate_unbounded} are more strict than the one given in \cref{proposition:convergence_in_mean_unbounded}. This is because while \cref{proposition:convergence_in_mean_unbounded} ensures convergence in expectation, \Cref{theorem:expected_conv_rate_unbounded} provides sufficient conditions for \emph{geometric} convergence rate. 
\end{remark}

\subsection{Choosing Sets $\mX_k$ and $\mS_k$}

Our results in \Cref{theorem:expected_conv_rate_unbounded} require legitimate agents to know the expectations of trust observations $\E_{\mL}$ and $\E_{\mM}$, and \Cref{proposition:convergence_in_mean_unbounded} further requires knowledge of the spectral radius of the matrix $M(\eta, \lambda)$ or at least meaningful upper bounds on these terms.
Some of these system parameters such as $M(\eta, \lambda)$, can be difficult to estimate. This requirement comes from the fact that the sets $\mX_k$ and $\mS_k$ both grow exponentially. Their exponential growth allows for malicious agents to have exponentially growing impact on the system until they are fully detected and excluded. Therefore, the learning part where malicious agents are detected needs to happen faster than the growth of their impact. On the other hand, choosing sets that grow faster decreases the time that the system reaches the nominal behaviour. Hence, there is a trade-off in the choice of the growth rate of these sets. In any case, one may want to guarantee geometric convergence rate without knowing all the system parameters. This is indeed possible since our convergence results depend on two key properties of these sets: 1)~the growth of $\mX_k$ ensures that $x^*$ is eventually included in $\mX_k$, 2)~$\mS_k$ grows faster than $\mX_k$. The first one is required to ensure that $x^*$ is eventually included in $\mX_k$. The second one is required since the norm of the gradient tracking variables $s_i[k]$ grows with a rate that depends on the norm of the gradients, which is related to the bounds on the sets $\mX_k$. This relationship can be seen in \Cref{proposition:s_growth_compact} and \Cref{lemma:s_growth_unbounded}. Therefore, as long as we can satisfy these two conditions, the system will reach the nominal behavior and the convergence will occur. We summarize this discussion formally in the next proposition.

\begin{proposition}
    Assume that we have $\mX_k = \{x \in \R^d \mid \norm{x} \leq g(k)\}$ and $\mS_k = \{s \in \R^d \mid \norm{s} \leq h(k)\}$. Then, if $g(k)$ is an increasing function and $h(k)$ grows faster than $kg(k)$, i.e., $h(k)=\omega(kg(k))$ where $\omega(\cdot)$ denotes the asymptotic lower bound, the nominal behavior is reached almost surely.
    \label{proposition:choosing_bounds}
\end{proposition}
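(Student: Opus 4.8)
The plan is to mirror the two-part argument that already appears in \Cref{proposition:removing_projection} (compact case) and \Cref{proposition:nominal_behav_unbounded} (exponential case), isolating exactly where the specific growth rates were used and replacing them with the general hypotheses $g(k)\uparrow$ and $h(k)=\omega(kg(k))$. Recall that by \Cref{lemma:learning_trustworthiness} and \Cref{corrollary:weight_convergence}, there is an almost surely finite random time $\tmax$ after which every legitimate agent assigns positive mixing weights only to legitimate neighbors; so it suffices to show that, almost surely, there is a finite (possibly random) time after which the projection onto $\mS_k$ acts as the identity on each $d_i[k]$, and a finite time after which $x^*\in\mX_k$. The latter is immediate: since $g$ is increasing and unbounded (implicit in $h=\omega(kg)$ being meaningful, and needed for $\mX_k\uparrow\R^d$), there is a deterministic $k_1$ with $g(k)\ge\norm{x^*}$ for all $k\ge k_1$, hence $x^*\in\Bar{\mX}_k$ for $k\ge k_1$, which is what \Cref{lemma:nominal_conv_unbounded}-type reasoning needs.

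For the gradient-tracking variables, I would first bound the gradients on the effective constraint set: by $L$-smoothness, $\norm{\nabla f_i(x_i[k])}\le L\,g(k)+\max_{i\in\mL}\norm{\nabla f_i(0)}\le A_1 g(k)$ for a constant $A_1$ depending only on $L$ and the $\nabla f_i(0)$, valid for all $k\ge k_1'$ once $g(k)\ge 1$ (absorbing the constant term). Then, following the recursion in the proof of \Cref{proposition:s_bound} verbatim — summing \eqref{eq:RP3_update_s} over legitimate agents and using column stochasticity of $\Bar C$ — we get, for $k\ge\tmax$,
\begin{align*}
    \sum_{i\in\mL}\norm{d_i[k]}\le\sum_{i\in\mL}\norm{s_i[\tmax]}+n_\mL\sum_{t=\tmax}^{k-1}A_1 g(t)\le n_\mL h(\tmax)+n_\mL A_1\,(k-\tmax)\,g(k-1),
\end{align*}
where we used $\norm{s_i[\tmax]}\le h(\tmax)$ from the projection and monotonicity of $g$. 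Since $h(k)=\omega(kg(k))$ and, a fortiori, $h(k)=\omega(g(k))$ (so $h(k)\to\infty$ faster than $n_\mL h(\tmax)$ is overtaken for $k$ large — here one splits $h(k)$ into two halves exactly as in \Cref{proposition:nominal_behav_unbounded}), there is an almost surely finite $T$ (depending on the realization of $\tmax$) such that for all $k\ge T$ the right-hand side is strictly below $h(k)$. For such $k$, $d_i[k]\in\mS_k$ and the projection is the identity, restoring the gradient tracking property; combining with $k\ge k_1$ so that $x^*\in\mX_k$, the algorithm is in the nominal behavior of \Cref{def:nominal_behavior} for all $k\ge\max\{T,k_1\}$, which is finite almost surely.

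The one place requiring care — the main obstacle — is making the "split $h(k)$ into two halves" step work at the stated level of generality: we must dominate the \emph{constant-in-$k$ but random} term $n_\mL h(\tmax)$ and the \emph{linearly-amplified} term $n_\mL A_1(k-\tmax)g(k-1)\le n_\mL A_1\,k\,g(k)$ simultaneously by $h(k)$. The second is handled directly by the hypothesis $h(k)=\omega(kg(k))$: eventually $n_\mL A_1 k g(k)<\tfrac12 h(k)$. The first is handled because $h(k)=\omega(kg(k))$ with $g$ increasing forces $h(k)\to\infty$, so eventually $n_\mL h(\tmax)<\tfrac12 h(k)$; the threshold depends on $\tmax$ but is finite for every realization, which is all almost-sure convergence requires. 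I would also remark that if one wants the cleaner statement without the factor $k$ — e.g. when $g$ is itself exponential — the condition can be relaxed, but $h=\omega(kg)$ is the safe sufficient condition covering the polynomial regime where the linear factor genuinely matters, matching the role played by $\theta k$ in \Cref{proposition:removing_projection}.
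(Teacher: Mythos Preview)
Your proposal is correct and follows essentially the same route as the paper's own proof: bound the gradients via $L$-smoothness by $Lg(k)+\max_i\norm{\nabla f_i(0)}$, use the column-stochasticity recursion from \Cref{proposition:s_bound} to get $\sum_{i\in\mL}\norm{s_i[k]}\le n_\mL h(\tmax)+n_\mL(k-\tmax)(Lg(k-1)+\text{const})$ for $k\ge\tmax$, and then conclude that $h(k)=\omega(kg(k))$ eventually dominates this. Your treatment is in fact more careful than the paper's, which simply asserts that ``such a $k'$ always exists'' without explicitly splitting off and handling the random constant term $n_\mL h(\tmax)$; your half-and-half argument and the observation that the threshold depends on the realization of $\tmax$ but is a.s.\ finite fill that gap cleanly, and your explicit check that $x^*\in\mX_k$ eventually is a detail the paper's proof of this proposition omits.
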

\begin{proof}
    The proof follows similar to \Cref{lemma:s_growth_unbounded}. By \Cref{assumption:cost_function} we have 
    \begin{align*}
        \norm{\nabla f_i(x_i[k])} &= \norm{\nabla f_i(x_i[k]) + \nabla f_i(0) - \nabla f_i(0)} \\
        &\leq Lg(k)+\norm{\nabla f_i(0)}.
    \end{align*}
    Using similar steps to \Cref{lemma:s_growth_unbounded}, we get
    \begin{align*}
        \sum_{i=1}^{n_{\mL}} \norm{s_i[k+1]} &\leq n_{\mL} (Lg(k)+\max_{i \in \mL} \norm{\nabla f_i(0)}) \\ &+\sum_{i=1}^{n_{\mL}} \norm{s_i[k]},
    \end{align*}
    and for $k\geq \tmax$ we have 
    \begin{align*}
        \sum_{i\in \mL} \norm{s_i[k]} &\leq \sum_{i\in \mL} \norm{s_i[\tmax]}+ n_{\mL} \sum_{t=\tmax}^{k-1} Lg(t)
        +\max_{i \in \mL} \norm{\nabla f_i(0)} \\
        &\leq n_{\mL} h(\tmax) + 
        n_{\mL}(k-\tmax)(Lg_x(k-1)
        +\max_{i \in \mL} \norm{\nabla f_i(0)}).
    \end{align*}
    To reach the nominal convergence, we need to have a $k'$ such that for all $k\geq k'$ we have
    \begin{align*}
        h(k)> n_{\mL} h(\tmax) + 
        n_{\mL}(k-\tmax)(Lg(k-1)\\
        +\max_{i \in \mL} \norm{\nabla f_i(0)}).
    \end{align*}
    Such a $k'$ always exists if we choose $h(k)$ to grow faster than $kg(k).$ Hence, the nominal behavior can be reached and the algorithm converges almost surely.
\end{proof}
\begin{remark}
    Legitimate agents can choose a growing set $\mX_k$ and a corresponding $\mS_k$ with a sub-exponential growth rate such that convergence almost surely occurs without knowing any of the system parameters.
\end{remark}
Here, the correct choice of $\{\mX_k\}$ and a corresponding $\{\mS_k\}$ needs to ensure that the nominal behavior is always reached. Recall that the misclassification probabilities of the malicious agents decrease exponentially as shown in \Cref{proposition:tmax_prob}. Therefore, if we $\{\mX_k\}$ and $\{\mS_k\}$ with a sub-exponential rate, the learning part will always occur faster than the growth of sets and we guarantee convergence almost surely.

\subsection{Average Consensus Problem}
In this part, we show that the RP3 algorithm can be used to solve the average consensus problem in the presence of malicious agents without introducing any deviation from the optimal solution.
\begin{corollary}[Average Consensus]
    Consider a multi-agent system where each legitimate agent has an initial value $x_i[0]\in \R$. The goal of legitimate agents is to compute the average of their initial values denoted by $\hat{x}$, i.e, to find
    \begin{equation}
        \hat{x} \triangleq \frac{1}{n_\mL}\sum_{i\in \mL} x_i[0].\label{eq:consensus_defn}
    \end{equation}
    Let \Cref{assumption:trust-observations} and \Cref{assumption:graph-connectivity} hold true. Then, agents can use the RP3 algorithm to compute $\hat{x}$ almost surely, even in the presence of malicious agents.
    \label{corollary:avg_consensus}
\end{corollary}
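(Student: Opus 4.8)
The plan is to cast the average consensus problem as a special case of the distributed optimization problem~\eqref{eq:objective_defn} and then invoke the convergence guarantees already established for RP3. Concretely, I would let each legitimate agent $i\in\mL$ adopt the local cost $f_i(x)=\tfrac12\norm{x-x_i[0]}^2$ on $\mX=\R$ (so $d=1$ here). Then $f(x)=\tfrac{1}{n_\mL}\sum_{i\in\mL}\tfrac12\norm{x-x_i[0]}^2$ is strongly convex with the unique minimizer $x^*=\tfrac{1}{n_\mL}\sum_{i\in\mL}x_i[0]=\hat x$, so it suffices to show that the iterates $x_i[k]$ generated by running RP3 on these $f_i$ converge almost surely to $x^*$ for every $i\in\mL$.

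Next I would check the hypotheses. Each $f_i$ is $1$-strongly convex and has the $1$-Lipschitz gradient $\nabla f_i(x)=x-x_i[0]$, so \Cref{assumption:cost_function} holds with $\mu=L=1$. The set $\mX=\R$ is nonempty, closed and convex, hence it satisfies \Cref{assumption:closed_convex_const_set} but not the compactness \Cref{assumption:compact_convex_const_set}; therefore we are in the unbounded-$\mX$ regime of \Cref{sec:unbounded_case}. The remaining assumptions — \Cref{assumption:trust-observations} on the trust observations and \Cref{assumption:graph-connectivity} on connectivity — are assumed in the statement, and the required initialization $x_i[0],z_i[0]\in\mX$, $s_i[0]=0$ is automatic since $\mX=\R$.

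It then remains only to specify the growing set sequences. I would take $\{\mX_k\}$ increasing with $x^*$ eventually contained in $\mX_k$ and $\{\mS_k\}$ growing faster than $k\,\norm{\mX_k}$, e.g. the sub-exponential choice of \Cref{proposition:choosing_bounds}, which agents can pick without knowing any system parameters. By \Cref{proposition:choosing_bounds} (equivalently \Cref{cor:as_convergence_unbounded}), RP3 then reaches its nominal behavior after a finite random time and $x_i[k]\to x^*=\hat x$ almost surely for all $i\in\mL$, even with malicious agents present — which is the claim. (If one additionally wants convergence in mean or a geometric rate, one selects the growth rates as in \Cref{proposition:convergence_in_mean_unbounded} or \Cref{theorem:expected_conv_rate_unbounded}.)

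The step requiring the most care is the bookkeeping that these quadratic $f_i$ genuinely fall within the scope of the unbounded-case results: one must confirm that the constructed $\{\mX_k\},\{\mS_k\}$ meet the growth conditions of \Cref{proposition:choosing_bounds} so that the projections eventually become the identity and the gradient-tracking property is restored, and that nothing implicitly used compactness. The only mild subtlety here is benign — since $\nabla f(x^*)=\tfrac{1}{n_\mL}\sum_{i\in\mL}(x^*-x_i[0])=0$, the $s$-variables in fact stay bounded in the nominal phase, so this instance is if anything easier than the general analysis; everything else is a direct instantiation of the already-proved theorems.
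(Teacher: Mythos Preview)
Your proposal is correct and follows essentially the same approach as the paper: formulate average consensus as the distributed optimization problem with quadratic local costs $f_i(x)=(x-x_i[0])^2$ (the paper omits your factor $\tfrac12$, which is immaterial), observe that these are smooth and strongly convex with unique minimizer $x^*=\hat{x}$, and then invoke \Cref{cor:as_convergence_unbounded}. Your write-up is in fact more careful than the paper's own proof in explicitly checking each hypothesis and specifying the growing sets via \Cref{proposition:choosing_bounds}.
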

\begin{proof}
    The average consensus problem can be formulated as
    \begin{align*}
        \min_{x\in \R}\, \frac{1}{n_\mL}\sum_{i\in \mL} (x - x_i[0])^2.
    \end{align*}
    Notice that the optimum solution $x^*$ of this minimization problem is $x^* = \frac{1}{n_\mL}\sum_{i\in \mL} x_i[0].$ Moreover, the local cost functions $f_i(x) \triangleq (x - x_i[0])^2$ is $L-$ smooth and strongly convex. Therefore, by \cref{cor:as_convergence_unbounded}, agents can obtain the optimal solution almost surely using the RP3 algorithm.
\end{proof}
Since we can formulate the consensus problem as a distributed optimization problem with $L-$smooth and strongly convex cost functions, all results obtained in \cref{sec:unbounded_case} are applicable if agents employ the RP3 algorithm to solve the problem. These results improve upon the previous results on the average consensus in the presence of malicious agents when stochastic trust observations are available. The study in \cite{ourTRO} tackles this problem in undirected graphs but it requires values of the legitimate and malicious agents to be bounded by a constant $\eta$. Moreover, the proposed method cannot remove the impact of the malicious agents completely, resulting in a deviation from the true consensus value $\hat{x}$ unlike our results. The resilient distributed optimization algorithm developed in \cite{yemini2022resilience} ensures convergence to the true consensus value for constrained consensus problems, but only for undirected communication graphs and when the mixing matrix is doubly stochastic. Furthermore, its diminishing step sizes lead to a significantly slower convergence rate compared to RP3. The algorithm proposed in \cite{hadjicostis_2022} removes the effect of malicious agents over time. However, the results only show asymptotic convergence while  here, we also characterize the expected convergence rate. 

\section{Numerical Studies}
In this section, we present our numerical studies to validate our theoretical results on two optimization problems. First, we consider a constrained average consensus problem to compare RP3 against the existing methods that leverage inter-agent trust values. Then, we test our algorithm on multi-robot target tracking problem and compare its performance against a data-based resilient optimization benchmark algorithm. The first setup is a constrained optimization problem over undirected graphs while the second setup is an unconstrained optimization problem over a directed graph, covering different use cases for our algorithm.
\subsection{Constrained Consensus} \label{subsec:constrained_consensus_exp}
Here, the agents' goal is to solve the average consensus problem defined in \cref{eq:consensus_defn} subject to constraints. We compare the performance of our algorithm against several benchmarks. We consider resilient consensus (RC) \cite{ourTRO} and trustworthy distributed average consensus (RDA) \cite{hadjicostis_2022} algorithms that are specifically designed for the consensus problem. We also include the resilient distributed optimization (RDO) algorithm in \cite{yemini2022resilience} and Projected Push-Pull (PPP) algorithm as our benchmarks. PPP algorithm is oblivious to the existence of malicious agents in the system, showcasing the worst-case scenario under malicious attacks.

In the experiments, we consider a setup with $L=50$ legitimate agents and $M=100$ malicious agents. We generate an undirected cyclic graph among the legitimate agents and add additional edges between them randomly. We also add undirected edges between malicious and legitimate agents with a probability $0.7$. This process results in a graph where many legitimate agents have more malicious neighbors than legitimate ones, making data-based methods such as \cite{sundaram2018distributed, Kuwaranancharoen2024ScalableDO, Kaheni2022ResilientCO} inapplicable. We choose $[-50,50]$ as the constraint set and generate random initial points for legitimate agents within this range. If the optimal point is positive, malicious agents send $-50$ and they send $50$ otherwise. Malicious agents also send these values as the gradient tracking variable against RP3 and PPP algorithms. Since these methods do not filter based on the values sent by malicious agents, using these extreme values represents a strong attack against all trust-based methods. Similar to previous work \cite{L4DC,yemini2022resilience,ourTRO}, trust values are sampled from the uniform distribution  $[0.35, 0.75]$ for legitimate neighbors and from $[0.25, 0.65]$ for malicious neighbors. As required by RC, agents update their trust opinions for an initial observation window of $30$ iterations before starting to run the optimization algorithms. We use the same trust values for all algorithms. For RP3, we choose the bounds on sets $\mX_k$ and $\mS_k$ as $0.1k$ and $0.1k^2,$ respectively. The results are shown in \cref{fig:consensus_results}. We observe that RP3 has a significantly faster convergence rate than the other algorithms, even though it is not specifically designed for consensus, unlike \cite{ourTRO,hadjicostis_2022}, and is not limited to undirected graphs and constraint optimization, as in \cite{yemini2022resilience}. 
\begin{figure}
    \centering
    \includegraphics[width=0.45\textwidth]{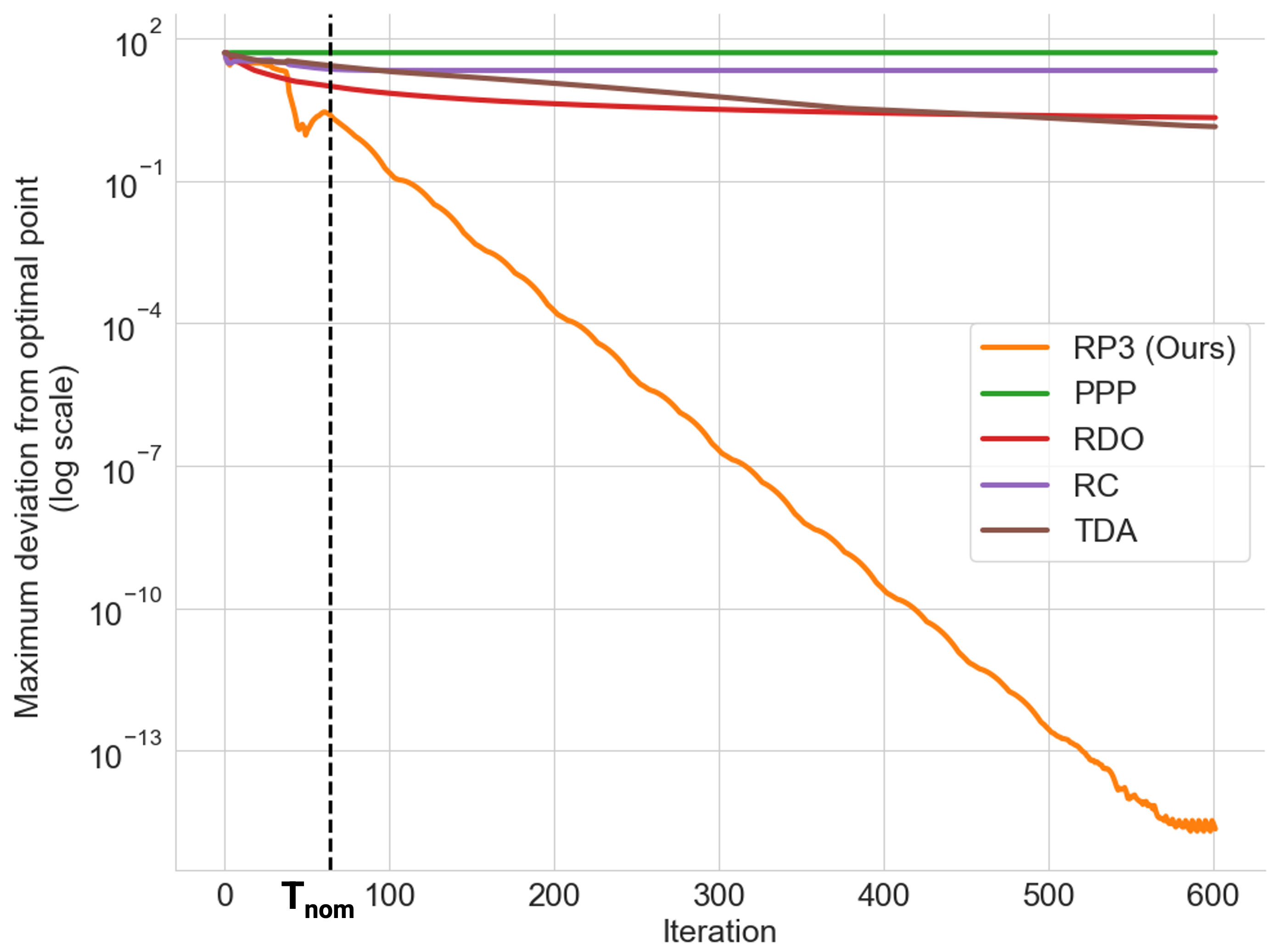}
    \caption{Constrained consensus experiment results with $L=50$ legitimate and $M=100$ malicious agents.}
    \label{fig:consensus_results}
\end{figure}

\subsection{Multi-Robot Target Tracking}
\begin{figure*}[t]
    \centering
    \subfigure[Loss]{\includegraphics[width=0.45\textwidth]{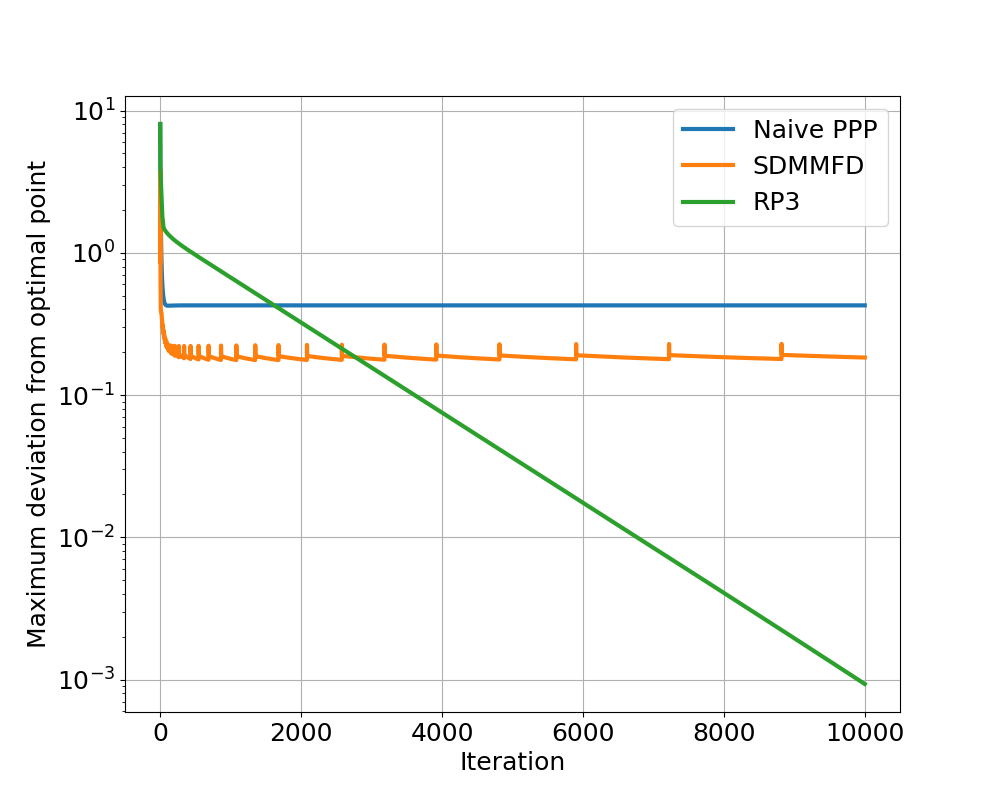}}
    \quad
    \subfigure[Trajectory]{\includegraphics[width=0.45\textwidth]{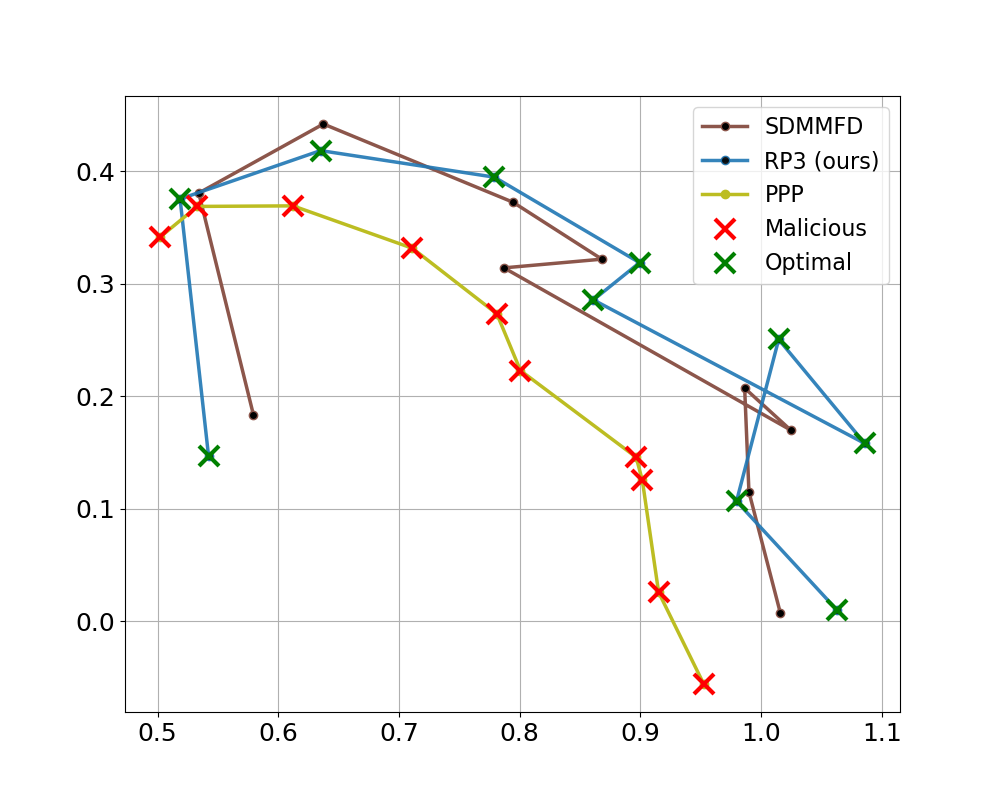}}
    \caption{Target tracking experiment results with $L=9$ legitimate and $M=6$ malicious agents. (a) Convergence to the optimal trajectory. (b) Visualization of the final trajectories for all methods.}
    \label{fig:trajectory_results}
\end{figure*}
In this part, we demonstrate the performance of our algorithm on the distributed multi-robot target tracking problem, as defined in \cite{schwager_tutorial_2024}. In this problem, a set of robots communicating over a directed graph aim to estimate the trajectory of a moving target through local observations and local communication, in the presence of malicious agents. Concretely, the position and velocity at time $t$ of the moving target in a 2D plane is given by a vector $x_t \in \R^4$. The target is assumed to move according to the linear dynamics $x_{t+1} = A_t x_t + w_t$ where $A_t \in \R^{4\times 4}$ captures the known dynamics, and $w_t 
\sim \mN(0, Q_t)$ is the Gaussian process noise. Robot $i$ receives noisy observations of the position of the target if the target is within its observation range, according to $y_{i,t} = x^p_t + v_{i,t}$ where $x^p_t$ denotes the position of the target, and $v_{i,t} \sim \mN(0, R_{i,t})$ is the observation noise. We let $\mT_i$ the set of all times when robot $i$ observes the target. The initial position and velocity $x_0$ is modeled by a prior $\mN(\overline{x_0}, \overline{P_0})$. In this setup, the maximum likelihood estimate is given by minimizing the following loss:
\begin{align}
\label{eq:trajectory_tracking_obj}
f(x)= & \left\|x_0-\bar{x}_0\right\|_{\bar{P}_0^{-1}}^2+\sum_{t=1}^{T-1}\left\|x_{t+1}-A_t x_t\right\|_{Q_t^{-1}}^2 \cr
& +\sum_{i \in \mathcal{V}} \sum_{t \in \mathcal{T}_i}\left\|y_{i, t}-x^p_t\right\|_{R_{i,t}^{-1}}^2.
\end{align}
Then, each robot has the following local cost function which is determined according to their observations
\begin{align*}
f_i(x)= & \frac{1}{N}\left\|x_0-\bar{x}_0\right\|_{\bar{P}_0^{-1}}^2+\sum_{t=1}^{T-1} \frac{1}{N}\left\|x_{t+1}-A_t x_t\right\|_{Q_t^{-1}}^2 \\
& +\sum_{t \in \mathcal{T}_i}\left\|y_{i, t}-C_{i, t} x_t\right\|_{R_{i,t}^{-1}}^2 .
\end{align*}

For this problem, we compare our method against both non-resilient and resilient distributed optimization algorithms. Similar to the constrained consensus experiments, we choose PPP as the non-resilient benchmark. For the resilient benchmark, we implement the SDMMFD algorithm in \cite{Kuwaranancharoen2024ScalableDO}, which is among the best-performing algorithms in its class \cite{Kuwaranancharoen2023OnTG}. This algorithm does not use trust values, and instead relies on outlier elimination to guarantee convergence to the convex hull of the legitimate agents' local minimizers. 

 We consider a setup with $L = 9$, $M = 6$ agents. Legitimate agents are connected through a grid graph with diagonal connections randomly included. The resulting graph is $3$-Robust, satisfying the conditions for the SDMMFD algorithm to converge. Each malicious agent is only connected to $1$ legitimate agent through an undirected edge. Malicious agents send a constant attack value at all time steps, to prevent the legitimate agents from converging to the global minimizer. We consider a time horizon $T=10$ for the trajectory. We maintain the same parameters for the trust observations and the RP3 algorithm as used in the constrained consensus experiments and only adjust the learning rate. The results are shown in \cref{fig:trajectory_results}. We observe that RP3 algorithm converges to the optimal trajectory that minimizes the global objective function in \eqref{eq:trajectory_tracking_obj} (shown in green in \cref{fig:trajectory_results}). The non-resilient Projected Push-Pull converges to the trajectory provided by the malicious agents, and SDMMFD converges to a non-optimal trajectory. Note that SDMMFD and RP3 require different assumptions to guarantee resilience, but we demonstrate that RP3 achieves better performance when trust values are available.

\section{Conclusion}
In this work, we study the distributed optimization problem in the presence of malicious agents, where the objective of legitimate agents is to minimize the sum of their individual strongly convex loss functions by exchanging information over a directed communication graph. Our proposed algorithm, Resilient Projected Push-Pull (RP3), leverages gradient tracking for fast convergence and exploits inter-agent trust values and growing constraint sets to mitigate and eliminate the impact of malicious agents on the system. We characterize the sufficient conditions on the growing sets to ensure geometric convergence in expectation. Our theoretical analysis demonstrates that RP3 converges to the nominal optimal solution almost surely and in the $r$-th mean for any $r\geq 1$, given appropriately chosen step sizes and constraint sets. We show that RP3 can solve average consensus problems faster than existing methods that leverage trust values, including those specifically designed for the consensus problem.

We validate our theoretical results through numerical studies on average consensus and multi-robot target tracking problems. In these studies, our algorithm outperforms both existing trust value-based methods and data-based filtering algorithms. This demonstrates the practical effectiveness of RP3 in enhancing the resilience and efficiency of distributed optimization in adversarial environments.

\bibliographystyle{IEEEtran}
\bibliography{references.bib}

\begin{thebibliography}{10}
\providecommand{\url}[1]{#1}
\csname url@samestyle\endcsname
\providecommand{\newblock}{\relax}
\providecommand{\bibinfo}[2]{#2}
\providecommand{\BIBentrySTDinterwordspacing}{\spaceskip=0pt\relax}
\providecommand{\BIBentryALTinterwordstretchfactor}{4}
\providecommand{\BIBentryALTinterwordspacing}{\spaceskip=\fontdimen2\font plus
\BIBentryALTinterwordstretchfactor\fontdimen3\font minus \fontdimen4\font\relax}
\providecommand{\BIBforeignlanguage}[2]{{%
\expandafter\ifx\csname l@#1\endcsname\relax
\typeout{** WARNING: IEEEtran.bst: No hyphenation pattern has been}%
\typeout{** loaded for the language `#1'. Using the pattern for}%
\typeout{** the default language instead.}%
\else
\language=\csname l@#1\endcsname
\fi
#2}}
\providecommand{\BIBdecl}{\relax}
\BIBdecl

\bibitem{schwager_tutorial_2024}
O.~Shorinwa, T.~Halsted, J.~Yu, and M.~Schwager, ``Distributed optimization methods for multi-robot systems: Part 1—a tutorial,'' \emph{IEEE Robotics \& Automation Magazine}, pp. 2--19, 2024.

\bibitem{nedic_control_2018}
A.~Nedi{\'c} and J.~Liu, ``Distributed optimization for control,'' \emph{Annual Review of Control, Robotics, and Autonomous Systems}, vol.~1, no. Volume 1, 2018, pp. 77--103, 2018.

\bibitem{schwager_collaborative_manipulation}
O.~Shorinwa and M.~Schwager, ``Distributed contact-implicit trajectory optimization for collaborative manipulation,'' in \emph{2021 International Symposium on Multi-Robot and Multi-Agent Systems (MRS)}, 2021, pp. 56--65.

\bibitem{christofides_MPC_2013}
P.~D. Christofides, R.~Scattolini, D.~{Mu{\~n}oz de la Pe{\~n}a}, and J.~Liu, ``Distributed model predictive control: A tutorial review and future research directions,'' \emph{Computers \& Chemical Engineering}, vol.~51, pp. 21--41, 2013, cPC VIII.

\bibitem{roberto_localization_2014}
R.~Tron and R.~Vidal, ``Distributed 3-d localization of camera sensor networks from 2-d image measurements,'' \emph{IEEE Transactions on Automatic Control}, vol.~59, no.~12, pp. 3325--3340, 2014.

\bibitem{Long_localization_2016}
{Dang, Vy-Long}, {Le, Binh-Son}, {Bui, Trong-Tu}, {Huynh, Huu-Thuan}, and {Pham, Cong-Kha}, ``A decentralized localization scheme for swarm robotics based on coordinate geometry and distributed gradient descent,'' \emph{MATEC Web of Conferences}, vol.~54, p. 02002, 2016.

\bibitem{nedic_distributed_sensors}
S.~S. Ram, V.~V. Veeravalli, and A.~Nedi\'c, ``Distributed and recursive parameter estimation in parametrized linear state-space models,'' \emph{IEEE Transactions on Automatic Control}, vol.~55, no.~2, pp. 488--492, 2010.

\bibitem{ola_tracking_2020}
O.~Shorinwa, J.~Yu, T.~Halsted, A.~Koufos, and M.~Schwager, ``Distributed multi-target tracking for autonomous vehicle fleets,'' in \emph{2020 IEEE International Conference on Robotics and Automation (ICRA)}, 2020, pp. 3495--3501.

\bibitem{nedic2014distributed}
A.~Nedi{\'c} and A.~Olshevsky, ``Distributed optimization over time-varying directed graphs,'' \emph{IEEE Transactions on Automatic Control}, vol.~60, no.~3, pp. 601--615, 2014.

\bibitem{makhdoumi2015graph}
A.~Makhdoumi and A.~Ozdaglar, ``Graph balancing for distributed subgradient methods over directed graphs,'' in \emph{2015 54th IEEE Conference on Decision and Control (CDC)}.\hskip 1em plus 0.5em minus 0.4em\relax IEEE, 2015, pp. 1364--1371.

\bibitem{xi2017distributed}
C.~Xi, Q.~Wu, and U.~A. Khan, ``On the distributed optimization over directed networks,'' \emph{Neurocomputing}, vol. 267, pp. 508--515, 2017.

\bibitem{sundaram2010distributed}
S.~Sundaram and C.~N. Hadjicostis, ``Distributed function calculation via linear iterative strategies in the presence of malicious agents,'' \emph{IEEE Transactions on Automatic Control}, vol.~56, no.~7, pp. 1495--1508, 2010.

\bibitem{sundaram2018distributed}
S.~Sundaram and B.~Gharesifard, ``Distributed optimization under adversarial nodes,'' \emph{IEEE Transactions on Automatic Control}, vol.~64, no.~3, pp. 1063--1076, 2018.

\bibitem{ourTRO}
M.~Yemini, A.~Nedi\'{c}, A.~J. Goldsmith, and S.~Gil, ``Characterizing trust and resilience in distributed consensus for cyberphysical systems,'' \emph{IEEE Trans. Robot.}, vol.~38, no.~1, pp. 71--91, 2022.

\bibitem{yemini2022resilience}
M.~Yemini, A.~Nedi{\'c}, S.~Gil, and A.~J. Goldsmith, ``Resilience to malicious activity in distributed optimization for cyberphysical systems,'' in \emph{2022 IEEE 61st Conference on Decision and Control (CDC)}.\hskip 1em plus 0.5em minus 0.4em\relax IEEE, 2022, pp. 4185--4192.

\bibitem{cavorsi_rht_2023}
M.~Cavorsi, O.~E. Akgün, M.~Yemini, A.~J. Goldsmith, and S.~Gil, ``Exploiting trust for resilient hypothesis testing with malicious robots,'' in \emph{2023 IEEE International Conference on Robotics and Automation (ICRA)}, 2023, pp. 7663--7669.

\bibitem{gil2017guaranteeing}
S.~Gil, S.~Kumar, M.~Mazumder, D.~Katabi, and D.~Rus, ``Guaranteeing spoof-resilient multi-robot networks,'' \emph{Autonomous Robots}, vol.~41, no.~6, pp. 1383--1400, 2017.

\bibitem{Pierson2016}
A.~Pierson and M.~Schwager, \emph{Adaptive Inter-Robot Trust for Robust Multi-Robot Sensor Coverage}.\hskip 1em plus 0.5em minus 0.4em\relax Cham: Springer International Publishing, 2016, pp. 167--183.

\bibitem{xiong2023securearray}
J.~Xiong and K.~Jamieson, ``Securearray: improving wifi security with fine-grained physical-layer information,'' in \emph{Proceedings of the 19th Annual International Conference on Mobile Computing \& Networking}, ser. MobiCom '13.\hskip 1em plus 0.5em minus 0.4em\relax New York, NY, USA: Association for Computing Machinery, 2013, p. 441–452.

\bibitem{gil2023physicality}
S.~Gil, M.~Yemini, A.~Chorti, A.~Nedić, H.~V. Poor, and A.~J. Goldsmith, ``How physicality enables trust: A new era of trust-centered cyberphysical systems,'' 2023.

\bibitem{aydın2024multiagent}
S.~Aydın, O.~E. Akgün, S.~Gil, and A.~Nedić, ``Multi-agent resilient consensus under intermittent faulty and malicious transmissions (extended version),'' 2024.

\bibitem{hadjicostis2023trustworthy}
C.~N. Hadjicostis and A.~D. Dominguez-Garcia, ``Trustworthy distributed average consensus based on locally assessed trust evaluations,'' \emph{arXiv preprint arXiv:2309.00920}, 2023.

\bibitem{ravi_2019}
N.~Ravi, A.~Scaglione, and A.~Nedić, ``A case of distributed optimization in adversarial environment,'' in \emph{ICASSP 2019 - 2019 IEEE International Conference on Acoustics, Speech and Signal Processing (ICASSP)}, 2019, pp. 5252--5256.

\bibitem{AB_usman}
R.~Xin and U.~A. Khan, ``A linear algorithm for optimization over directed graphs with geometric convergence,'' \emph{IEEE Control Systems Letters}, vol.~2, no.~3, pp. 315--320, 2018.

\bibitem{push-pull}
S.~Pu, W.~Shi, J.~Xu, and A.~Nedić, ``Push–pull gradient methods for distributed optimization in networks,'' \emph{IEEE Transactions on Automatic Control}, vol.~66, no.~1, pp. 1--16, 2021.

\bibitem{xin2019frost}
R.~Xin, C.~Xi, and U.~A. Khan, ``Frost—fast row-stochastic optimization with uncoordinated step-sizes,'' \emph{EURASIP Journal on Advances in Signal Processing}, vol. 2019, no.~1, pp. 1--14, 2019.

\bibitem{harnessing_smoothness_journal}
G.~Qu and N.~Li, ``Harnessing smoothness to accelerate distributed optimization,'' \emph{IEEE Transactions on Control of Network Systems}, vol.~5, no.~3, pp. 1245--1260, 2018.

\bibitem{L4DC}
O.~E. Akgun, A.~K. Dayi, S.~Gil, and A.~Nedi\'c, ``Learning trust over directed graphs in multiagent systems,'' in \emph{Proceedings of The 5th Annual Learning for Dynamics and Control Conference}, ser. Proceedings of Machine Learning Research, N.~Matni, M.~Morari, and G.~J. Pappas, Eds., vol. 211.\hskip 1em plus 0.5em minus 0.4em\relax PMLR, 15--16 Jun 2023, pp. 142--154.

\bibitem{L4DC_2023_extended}
O.~E. Akg{\"u}n, A.~K. Day{\i}, S.~Gil, and A.~Nedi{\'c}, ``Learning trust over directed graphs in multiagent systems (extended version),'' \emph{arXiv preprint arXiv:2212.02661}, 2022.

\bibitem{harnessing_smoothness_cdc}
G.~Qu and N.~Li, ``Harnessing smoothness to accelerate distributed optimization,'' in \emph{2016 IEEE 55th Conference on Decision and Control (CDC)}, 2016, pp. 159--166.

\bibitem{xu2015augmented}
J.~Xu, S.~Zhu, Y.~C. Soh, and L.~Xie, ``Augmented distributed gradient methods for multi-agent optimization under uncoordinated constant stepsizes,'' in \emph{2015 54th IEEE Conference on Decision and Control (CDC)}.\hskip 1em plus 0.5em minus 0.4em\relax IEEE, 2015, pp. 2055--2060.

\bibitem{DIGing}
A.~Nedi\'{c}, A.~Olshevsky, and W.~Shi, ``Achieving geometric convergence for distributed optimization over time-varying graphs,'' \emph{SIAM Journal on Optimization}, vol.~27, no.~4, pp. 2597--2633, 2017.

\bibitem{FROST_og}
C.~Xi, V.~S. Mai, R.~Xin, E.~H. Abed, and U.~A. Khan, ``Linear convergence in optimization over directed graphs with row-stochastic matrices,'' \emph{IEEE Transactions on Automatic Control}, vol.~63, no.~10, pp. 3558--3565, 2018.

\bibitem{liu2020discrete}
H.~Liu, W.~Yu, and G.~Chen, ``Discrete-time algorithms for distributed constrained convex optimization with linear convergence rates,'' \emph{IEEE Transactions on Cybernetics}, vol.~52, no.~6, pp. 4874--4885, 2020.

\bibitem{luan2023distributed}
M.~Luan, G.~Wen, H.~Liu, T.~Huang, G.~Chen, and W.~Yu, ``Distributed discrete-time convex optimization with closed convex set constraints: Linearly convergent algorithm design,'' \emph{IEEE Transactions on Cybernetics}, pp. 1--13, 2023.

\bibitem{scutari2019distributed}
G.~Scutari and Y.~Sun, ``Distributed nonconvex constrained optimization over time-varying digraphs,'' \emph{Mathematical Programming}, vol. 176, pp. 497--544, 2019.

\bibitem{projected_push_pull}
O.~E. Akg{\"u}n, A.~K. Day{\i}, S.~Gil, and A.~Nedi{\'c}, ``Projected push-pull for distributed constrained optimization over time-varying directed graphs (extended version),'' \emph{arXiv preprint arXiv:2310.06223}, 2023.

\bibitem{robust2020}
S.~Pu, ``A robust gradient tracking method for distributed optimization over directed networks,'' in \emph{Proc. of the 59th {IEEE} Conference on Decision and Control ({CDC})}, Jeju Island, Republic of Korea, Dec. 2020, pp. 2335--2341.

\bibitem{basar_noise_GT_2023}
Y.~Wang and T.~Başar, ``Gradient-tracking-based distributed optimization with guaranteed optimality under noisy information sharing,'' \emph{IEEE Transactions on Automatic Control}, vol.~68, no.~8, pp. 4796--4811, 2023.

\bibitem{nedic_DP_GT_2024}
Y.~Wang and A.~Nedić, ``Tailoring gradient methods for differentially private distributed optimization,'' \emph{IEEE Transactions on Automatic Control}, vol.~69, no.~2, pp. 872--887, 2024.

\bibitem{Zhaoye_noise_GT_2024}
Z.~Pan, H.~Yang, and H.~Liu, ``Utilizing second-order information in noisy information-sharing environments for distributed optimization,'' in \emph{ICASSP 2024 - 2024 IEEE International Conference on Acoustics, Speech and Signal Processing (ICASSP)}, 2024, pp. 9156--9160.

\bibitem{wenwen_noise_GT_2023}
W.~Wu, S.~Zhu, S.~Liu, and X.~Guan, ``Exact noise-robust distributed gradient-tracking algorithm for constraint-coupled resource allocation problems,'' in \emph{2023 62nd IEEE Conference on Decision and Control (CDC)}, 2023, pp. 7271--7276.

\bibitem{wu_noise_GT_2022}
W.~Wu, S.~Liu, and S.~Zhu, ``Distributed dual gradient tracking for economic dispatch in power systems with noisy information,'' \emph{Electric Power Systems Research}, vol. 211, p. 108298, 2022.

\bibitem{nedic_DSGT_2021}
S.~Pu and A.~Nedi{\'c}, ``Distributed stochastic gradient tracking methods,'' \emph{Mathematical Programming}, vol. 187, no.~1, pp. 409--457, 2021.

\bibitem{xin_CDC_SGT_2019}
R.~Xin, A.~K. Sahu, U.~A. Khan, and S.~Kar, ``Distributed stochastic optimization with gradient tracking over strongly-connected networks,'' in \emph{2019 IEEE 58th Conference on Decision and Control (CDC)}, 2019, pp. 8353--8358.

\bibitem{zhao_DSGT_2024}
S.~Zhao and Y.~Liu, ``Confidence region for distributed stochastic optimization problem via stochastic gradient tracking method,'' \emph{Automatica}, vol. 159, p. 111352, 2024.

\bibitem{Su2021ByzantineResilientMO}
L.~Su and N.~H. Vaidya, ``Byzantine-resilient multiagent optimization,'' \emph{IEEE Transactions on Automatic Control}, vol.~66, pp. 2227--2233, 2021.

\bibitem{Kuwaranancharoen2024ScalableDO}
K.~Kuwaranancharoen, L.~Xin, and S.~Sundaram, ``Scalable distributed optimization of multi-dimensional functions despite byzantine adversaries,'' \emph{IEEE Transactions on Signal and Information Processing over Networks}, vol.~10, pp. 360--375, 2024.

\bibitem{Zhang2024AcceleratedDO}
S.~Zhang, Z.~Liu, G.~Wen, and Y.~wu~Wang, ``Accelerated distributed optimization algorithm with malicious nodes,'' \emph{IEEE Transactions on Network Science and Engineering}, vol.~11, pp. 2238--2248, 2024.

\bibitem{Gupta2021ByzantineFI}
N.~Gupta, T.~T. Doan, and N.~H. Vaidya, ``Byzantine fault-tolerance in decentralized optimization under 2f-redundancy,'' \emph{2021 American Control Conference (ACC)}, pp. 3632--3637, 2021.

\bibitem{Zhu2022ResilientDO}
J.~Zhu, Y.~Lin, A.~Velasquez, and J.~Liu, ``Resilient distributed optimization*,'' \emph{2023 American Control Conference (ACC)}, pp. 1307--1312, 2022.

\bibitem{Gupta2020ResilienceIC}
N.~Gupta and N.~H. Vaidya, ``Resilience in collaborative optimization: Redundant and independent cost functions,'' \emph{ArXiv}, vol. abs/2003.09675, 2020.

\bibitem{Ravi2019DetectionAI}
N.~Ravi and A.~Scaglione, ``Detection and isolation of adversaries in decentralized optimization for non-strongly convex objectives,'' \emph{IFAC-PapersOnLine}, vol.~52, no.~20, pp. 381--386, 2019.

\bibitem{Zhao2020ResilientDO}
C.~Zhao, J.~He, and Q.-G. Wang, ``Resilient distributed optimization algorithm against adversarial attacks,'' \emph{IEEE Transactions on Automatic Control}, vol.~65, pp. 4308--4315, 2020.

\bibitem{Xu2022ATR}
C.~Xu and Q.~Liu, ``A trust‐based resilient consensus algorithm for distributed optimization considering node and edge attacks,'' \emph{International Journal of Robust and Nonlinear Control}, vol.~33, pp. 3517 -- 3534, 2022.

\bibitem{Fu2020ResilientCD}
W.~Fu, Q.~Ma, J.~Qin, and Y.~Kang, ``Resilient consensus‐based distributed optimization under deception attacks,'' \emph{International Journal of Robust and Nonlinear Control}, vol.~31, pp. 1803 -- 1816, 2020.

\bibitem{Kaheni2022ResilientCO}
M.~Kaheni, E.~Usai, and M.~Franceschelli, ``Resilient constrained optimization in multi-agent systems with improved guarantee on approximation bounds,'' \emph{IEEE Control Systems Letters}, vol.~6, pp. 2659--2664, 2022.

\bibitem{cavorsi2023ICRA}
M.~Cavorsi, O.~E. Akgün, M.~Yemini, A.~J. Goldsmith, and S.~Gil, ``Exploiting trust for resilient hypothesis testing with malicious robots,'' in \emph{2023 IEEE International Conference on Robotics and Automation (ICRA)}, 2023, pp. 7663--7669.

\bibitem{cheng2021general}
M.~Cheng, C.~Yin, J.~Zhang, S.~Nazarian, J.~Deshmukh, and P.~Bogdan, ``A general trust framework for multi-agent systems,'' in \emph{Proceedings of the 20th International Conference on Autonomous Agents and MultiAgent Systems}, 2021, pp. 332--340.

\bibitem{pippin2014trust}
C.~Pippin and H.~Christensen, ``Trust modeling in multi-robot patrolling,'' in \emph{2014 IEEE International Conference on Robotics and Automation (ICRA)}.\hskip 1em plus 0.5em minus 0.4em\relax IEEE, 2014, pp. 59--66.

\bibitem{yang2024enhancing}
Z.~Yang and R.~Tron, ``Enhancing security in multi-robot systems through co-observation planning, reachability analysis, and network flow,'' 2024.

\bibitem{Ding_GT_attack_CDC_2020}
T.~Ding, Q.~Xu, S.~Zhu, and X.~Guan, ``A convergence-preserving data integrity attack on distributed optimization using local information,'' in \emph{2020 59th IEEE Conference on Decision and Control (CDC)}, 2020, pp. 3598--3603.

\bibitem{nguyen2022distributed}
D.~T.~A. Nguyen, D.~T. Nguyen, and A.~Nedi{\'c}, ``Distributed nash equilibrium seeking over time-varying directed communication networks,'' \emph{arXiv preprint arXiv:2201.02323}, 2022.

\bibitem{robust-pp}
S.~Pu, ``A robust gradient tracking method for distributed optimization over directed networks,'' in \emph{2020 59th IEEE Conference on Decision and Control (CDC)}, 2020, pp. 2335--2341.

\bibitem{nlar2011}
E.~{\c{C}}{\i}nlar, \emph{Measure and Integration}.\hskip 1em plus 0.5em minus 0.4em\relax New York, NY: Springer New York, 2011, pp. 1--47.

\bibitem{Varga2000}
R.~S. Varga, \emph{Basic Iterative Methods and Comparison Theorems}.\hskip 1em plus 0.5em minus 0.4em\relax Berlin, Heidelberg: Springer Berlin Heidelberg, 2000, pp. 63--110.

\bibitem{hadjicostis_2022}
C.~N. Hadjicostis and A.~D. Domínguez-García, ``Trustworthy distributed average consensus,'' in \emph{2022 IEEE 61st Conference on Decision and Control (CDC)}, 2022, pp. 7403--7408.

\bibitem{Kuwaranancharoen2023OnTG}
K.~Kuwaranancharoen and S.~Sundaram, ``On the geometric convergence of byzantine-resilient distributed optimization algorithms,'' \emph{ArXiv}, vol. abs/2305.10810, 2023.

\end{thebibliography}

\end{document}